\documentclass[a4paper,11pt]{article}

\usepackage{relsize}
\usepackage{nicefrac}
\usepackage{fullpage}
\usepackage{setspace}
\setstretch{1.1}
\setlength{\parindent}{2em}
\setlength{\parskip}{0.45em}

\usepackage{hyperref}
\hypersetup{colorlinks=true,linkcolor=black,citecolor=blue}
\usepackage{amsmath,amsthm,amssymb}
\usepackage{mathtools}
\usepackage[titletoc,title]{appendix}
\usepackage{graphicx, color}
\usepackage{caption}
\usepackage{bbm}

\usepackage[normalem]{ulem}

\newtheorem{theorem}{Theorem}[section]

\newtheorem{corollary}[theorem]{Corollary}
\newtheorem{lemma}[theorem]{Lemma}
\newtheorem{open}{Open Question}

\newtheorem{proposition}[theorem]{Proposition}
\newtheorem{definition}[theorem]{Definition}
\newtheorem{remark}[theorem]{Remark}

\def\N{{\mathbb {N}}}

\def\R{{\mathbb {R}}}
\def\CC{{\mathsf {CC}}}

\def\eps{\varepsilon}
\def\norm{\R_{\geq 1}\cup\{\infty\}}

\renewcommand{\tilde}{\widetilde}
\newcommand{\Det}{\mathsf{CC}_{\mathtt{det}}}
\renewcommand{\CC}{\mathsf{CC}}
\newcommand{\Comp}{\mathsf{Comp}}
\newcommand{\Concat}{\mathsf{Concat}}
\newcommand{\Mean}{\mathsf{Mean}}
\newcommand{\Local}{\mathsf{Local}}
\newcommand{\Sp}{\mathsf{Sp}}
\newcommand{\EOL}{\mathsf{EoL}}
\newcommand{\conv}{\mathsf{conv}}
\newcommand{\para}{n,\varepsilon,\lambda_A,\lambda_B}
\newcommand{\param}{p,\para}
\newcommand{\parami}{\infty,\para}

\renewcommand{\L}{\mathsf{Loc}}
\newcommand{\PPAD}{\textsf{PPAD}}

\title{On Communication Complexity of Fixed Point Computation}
\author{Anat Ganor%
\thanks{Hebrew University of Jerusalem, Israel. This project has received funding from the European Research Council (ERC) under the European Union’s Horizon 2020 research and innovation programme (grant agreement No 740282).
	Email: anat.ganor@gmail.com}
\and Karthik C.\ S.\thanks{Weizmann Institute of Science, Israel.
   This work was supported by Irit Dinur's ERC-CoG grant 772839.
   Email: karthik0112358@gmail.com}
\and D{\"{o}}m{\"{o}}t{\"{o}}r P{\'{a}}lv{\"{o}}lgyi\thanks{MTA-ELTE Lend\"ulet Combinatorial Geometry Research Group, Institute of Mathematics, E\"otv\"os Lor\'and University (ELTE), Budapest, Hungary. Supported by the Lend\"ulet program of the Hungarian Academy of Sciences (MTA), under grant number LP2017-19/2017. Email: dom@cs.elte.hu}}

\date{}

\begin{document}

\maketitle

\begin{abstract}
  Brouwer's fixed point theorem states that any continuous function from a compact convex space to itself has a fixed point. Roughgarden and Weinstein (FOCS 2016) initiated the study of fixed point computation in the two-player communication model, 
where each player gets a function from $[0,1]^n$ to $[0,1]^n$,
and their goal is to find an approximate fixed point of the \emph{composition} of the two functions. They left it as an open question to show a lower bound of $2^{\Omega(n)}$ for the (randomized) communication complexity of this problem, in the range of parameters which make it a total search problem. 
We answer this question affirmatively.

Additionally, we introduce two natural fixed point problems in the two-player communication model.
\begin{itemize}
\item Each player is given a function from $[0,1]^n$ to $[0,1]^{n/2}$,
and their goal is to find an approximate fixed point of the \emph{concatenation}  of the functions.
\item Each player is given a function from $[0,1]^n$ to $[0,1]^{n}$,
and their goal is to find an approximate fixed point of the \emph{mean} of the functions.
\end{itemize}
We show a randomized communication complexity lower bound of $2^{\Omega(n)}$  for these problems (for some constant approximation factor). 

Finally, we initiate the study of finding a panchromatic simplex in a Sperner-coloring of a triangulation (guaranteed by Sperner's lemma) in the two-player communication model: A triangulation $T$ of the $d$-simplex is publicly known and one player is given a set $S_A\subset T$ and a coloring function from $S_A$ to $\{0,\ldots ,d/2\}$, and the other  player is given a set $S_B\subset T$ and a coloring function from $S_B$ to $\{d/2+1,\ldots ,d\}$, such that $S_A\dot\cup S_B=T$,
and their goal is to find a panchromatic simplex. We show a randomized communication complexity lower bound of $|T|^{\Omega(1)}$ for the aforementioned problem as well (when $d$ is large). On the positive side, we show that if $d\le 4$ then there is a deterministic protocol for the Sperner problem with $O((\log |T|)^2)$ bits of communication.
\end{abstract}

\clearpage

\section{Introduction}\label{sec:intro}

Fixed point theorems hold a very special place in Mathematics and is a cornerstone of Economic theory. In particular, Brouwer's fixed point theorem \cite{B12} is one of the most celebrated fixed point results 
in algebraic topology was famously used by Nash \cite{N51} to prove the existence of a mixed equilibrium in every finite game. 
Brouwer's fixed point theorem asserts that every continuous function from a compact convex space to itself has a fixed point. This result gives rise to a natural computational question -- given a continuous function find a fixed point 
(in a specified model of computation). 
This problem has been well-studied in various models of computation. 

Roughgarden and Weinstein \cite{RW16} initiated the study of 
distributed computation of approximate fixed points in the $\ell_\infty$ norm.
They studied the following task for two players: 
player $A$ gets a Lipschitz continuous function $f_A:[0,1]^n\rightarrow[0,1]^m$
and player $B$ gets a Lipschitz continuous function $f_B:[0,1]^m\rightarrow[0,1]^n$
where $m=O(n)$.
Their goal is to find an $\varepsilon$-approximate fixed point
of the composition of the two functions $f_{\Comp}:=f_B\circ f_A$, i.e., to find an $x\in[0,1]^n$ such that $\|f_B(f_A(x))-x\|_{\infty}\le \varepsilon$. In this paper we refer to the aforementioned problem\footnote{To be precise, Roughgarden and Weinstein \cite{RW16} studied the problem with an additional discretization parameter, and these details will be elaborated in Section~\ref{sec:introlb}.}, more generally for all $\ell_p$ norms, as the \emph{Composition Brouwer problem} in the $\ell_p$-norm and denote it by $\Comp_{\param}$, where $\lambda_A$ and $\lambda_B$ are the Lipschitz constants of $f_{A}$ and $f_B$ respectively.

In the communication model, there are multiple ways to capture a computational problem. In this regard, our first contribution is to introduce two other natural realizations of fixed point computation of a Brouwer function in the communication model, and show that they are all essentially equivalent.

One may see the Composition Brouwer problem arising naturally from the mathematical fact that the composition of two continuous functions is a continuous function. In the same spirit, we note that the mean of two continuous functions is a continuous function, and introduce the \emph{Mean Brouwer problem} in the $\ell_p$ norm (denoted by $\Mean_{\param}$), where player $A$ gets a  continuous function $f_A:[0,1]^n\rightarrow[0,1]^n$, player $B$ gets a  continuous function $f_B:[0,1]^n\rightarrow[0,1]^n$, and their goal is to find an $\varepsilon$-approximate fixed point 
of the mean of the two functions $f_{\Mean}:=\frac{f_A + f_B}{2}$.

Another natural way to partition the input function between the players in the communication model is to give each player part of the description of the input function. We introduce the \emph{Concatenation  Brouwer problem} in the $\ell_p$ norm (denoted by $\Concat_{\param}$), where player $A$ gets a  continuous function $f_A:[0,1]^n\rightarrow[0,1]^{n/2}$, player $B$ gets a  continuous function $f_B:[0,1]^n\rightarrow[0,1]^{n/2}$, and their goal is to find an $\varepsilon$-approximate fixed point 
of the concatenation of the two functions $f_{\Concat}:=(f_A,f_B)$.

We remark that all the aforementioned problems can be solved with $2^{O(n)}$ bits of communication (see Lemma~\ref{lem:upper}) if $\varepsilon>0$ and $\lambda_A,\lambda_B\ge 0$ are all constants (independent of $n$ and $p$). 
Our first result states that the above described Brouwer function problems are all equivalent up to polynomial factors. Throughout this paper we denote by $\CC$ the randomized communication complexity of a problem. 

\begin{theorem}\label{thm:three_problems}
Let $n\in\N$ be an even integer, $p\in\norm$, $\lambda_A,\lambda_B,\eps \ge 0$.
Then the following inequalities hold:
\begin{enumerate}
\item $ \CC(\Concat_{p,n,\eps,\lambda_A,\lambda_B}) \leq \CC(\Mean_{p,n,\eps/2,\lambda_A+1,\lambda_B+1}) $.
\item $ \CC(\Mean_{p,n,\eps,\lambda_A,\lambda_B}) \leq \CC(\Comp_{p,n,\eps,\frac{\lambda_A}{2}+1,\lambda_B+2}) $.
\item $ \CC(\Comp_{p,n,\eps,\lambda_A,\lambda_B}) \leq \CC(\Concat_{p,O(n),O_{\lambda_B}(\eps),4(\lambda_A+1),4(\lambda_B+1)}) $.
\end{enumerate}
\end{theorem}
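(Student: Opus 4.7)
The plan is to establish each of the three inequalities by a separate local reduction: given an instance of the left-hand problem, the two players will rewrite their given functions into functions of the right-hand problem, invoke the assumed protocol with no additional communication, and decode the returned point. Because every step outside the inner protocol is a local computation, the $\CC$ bounds follow as soon as each reduction is shown to preserve the Lipschitz constants and the approximation quality within the stated slack.

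For (1), I would handle $\Concat \to \Mean$ by a padding construction. Writing $x = (x_1, x_2) \in [0,1]^{n/2} \times [0,1]^{n/2}$, set $g_A(x) = (f_A(x), x_2)$ and $g_B(x) = (x_1, f_B(x))$. Appending an identity block to an $\ell_p$-Lipschitz map raises its Lipschitz constant by at most $1$, so each $g$ is $(\lambda+1)$-Lipschitz. Since $\frac{g_A+g_B}{2}(x) - x = \frac{1}{2}\bigl(f_A(x) - x_1,\ f_B(x) - x_2\bigr)$, an $\eps/2$-approximate fixed point of the mean is exactly an $\eps$-approximate fixed point of the concatenation.

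For (2), the plan is to reduce $\Mean \to \Comp$ by doubling the intermediate dimension and rescaling. Put $h_A(x) = (x/2,\ f_A(x)/2) \in [0,1]^{2n}$ and $h_B(y, z) = \Pi\bigl(z + f_B(2y)/2\bigr)$, where $\Pi$ is coordinate-wise truncation onto $[0,1]$ (which is $1$-Lipschitz). The rescaling by $1/2$ caps $h_A$ at $(\lambda_A+1)/2$-Lipschitz; for $h_B$ a direct estimate gives $\|z_1 - z_2\|_p + \lambda_B\|y_1 - y_2\|_p \leq (\lambda_B + 1)\|(y_1,z_1) - (y_2,z_2)\|_p$, both fitting inside the stated $\lambda_A/2 + 1$ and $\lambda_B + 2$. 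On the image of $h_A$ the argument of $\Pi$ already lies in $[0,1]^n$, so $h_B \circ h_A(x) = (f_A(x) + f_B(x))/2$, and any $\eps$-approximate fixed point of the composition is an $\eps$-approximate fixed point of the mean.

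For (3), I would reduce $\Comp \to \Concat$ by a swap construction in doubled dimension. Set $N = 2n$ and view a point as $(u, v) \in [0,1]^n \times [0,1]^n$. Player A will publish $g_A(u, v) = f_A(v)$ and player B will publish $g_B(u, v) = f_B(u)$; each map inherits its Lipschitz constant from $f_A$ or $f_B$, comfortably within the stated $4(\lambda+1)$. The $\Concat$ fixed-point equations $u = f_A(v)$ and $v = f_B(u)$ combine to $v = f_B(f_A(v))$, so the $v$-block of a $\Concat$ solution is a fixed point of $f_B \circ f_A$; one application of $f_B$'s Lipschitz bound propagates a $\delta$-approximate $\Concat$ fixed point to a $(\lambda_B + 1)\delta$-approximate $\Comp$ fixed point, so calling $\Concat$ with $\delta = \Theta(\eps/(\lambda_B+1)) = O_{\lambda_B}(\eps)$ suffices. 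If the original $\Comp$ instance has an intermediate dimension $m \neq n$, I would pad both sides to $N = 2\max(n,m) = O(n)$, which is absorbed into the factor-of-four slack. The main obstacle across all three parts is bookkeeping the Lipschitz estimates uniformly over all $p \in \norm$; each time I combine two blocks I will use $\|(a,b)\|_p \leq \|a\|_p + \|b\|_p$ for upper bounds and $\|(a,b)\|_p \geq \max(\|a\|_p, \|b\|_p)$ for lower bounds, so no argument depends on the particular value of $p$.
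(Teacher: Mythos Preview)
Your three reductions are essentially the same constructions the paper uses (padding with an identity block for (1), doubling the intermediate dimension for (2), and a swap in doubled dimension for (3)). The one place where you diverge in a way that matters is part (2), and there your variant does not quite deliver the stated constants.

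The issue is your block-norm inequality at the end: with the \emph{normalized} $\ell_p$-norm used throughout the paper, $\|(a,b)\|_p \ge \max(\|a\|_p,\|b\|_p)$ is false. The correct bound (the paper's Proposition~\ref{prop:inequality}) is $\|a\|_p \le 2^{1/p}\|(a,b)\|_p$, with the factor $2^{1/p}$ reaching $2$ at $p=1$. Now trace your $h_B(y,z)=\Pi\bigl(z+\tfrac12 f_B(2y)\bigr)$: the chain rule gives $\|z_1-z_2\|_p+\lambda_B\|y_1-y_2\|_p$, and with the correct block inequality this is only bounded by $2^{1/p}(\lambda_B+1)\|(y,z)_1-(y,z)_2\|_p$, which for small $p$ exceeds the required $\lambda_B+2$. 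The paper avoids this by \emph{not} rescaling the $x$-copy: with $g_A(x)=(\tfrac12 f_A(x),x)$ and $g_B(x_1,x_2)=x_1+\tfrac12 f_B(x_2)$ the $f_B$ term contributes $\tfrac{\lambda_B}{2}$ rather than $\lambda_B$, and then the unavoidable $2^{1/p}\le 2$ from the block norm lands you exactly at $\lambda_B+2$. Your scaling of $x$ by $1/2$ and back by $2$ is what kills the extra factor of $1/2$ you needed. (There is also the side issue that $f_B(2y)$ is undefined for $y_i>1/2$; the paper's version has no such domain problem on the $f_B$ side.)

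For (1) and (3) your constructions are fine and match the paper's; just be careful to carry the $2^{1/p}$ factors in the Lipschitz bookkeeping rather than the inequality you wrote, and in (3) note that the paper pads with dummy identity blocks $(a,b)$ precisely to avoid the separate $m\neq n$ case you mention.
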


Finally, notice that all three aforementioned problems are total  (i.e., an $\varepsilon$-approximate fixed point is guaranteed to exist), as continuity is preserved under composition, concatenation, and interpolation.

\subsection{Lower Bounds in the Total Regime}\label{sec:introlb}

In this subsection, we show that there is no small  communication  protocol   by showing a lower bound of $2^{\Omega(n)}$ bits for all the three problems, even when $\varepsilon>0$ and $\lambda_A,\lambda_B\ge 0$ are all constants. 

\begin{sloppypar}While Roughgarden and Weinstein \cite{RW16}  left it open to show lower bounds for $\Comp_{\parami}$, they were able to prove strong lower bounds for a variant where player $A$ gets a Lipschitz continuous function $f_A:[0,1]^n_{\alpha}\rightarrow[0,1]^m_{\alpha}$
and player $B$ gets a Lipschitz continuous  function $f_B:[0,1]^m_{\alpha}\rightarrow[0,1]^n_{\alpha}$,
where $\alpha$ is a discretization parameter, and their goal is to find an $\varepsilon$-approximate fixed point
of the composition of the two functions, \emph{if one exists}. They showed a lower bound of $2^{\Omega(n)}$ 
	on the deterministic\footnote{The deterministic lower bound in \cite{RW16} relies crucially in one of the steps on a lifting theorem of Raz and McKenzie \cite{RM99}. If we replace that lifting theorem with the one of G\"o\"os, Pittassi, and Watson \cite{GPW17} that was proven subsequent to \cite{RW16}, then we immediately extend the deterministic lower bound in \cite{RW16} to a randomized lower bound (by starting from the lower bound in \cite{B16} instead of \cite{HPV89}).} communication complexity of the above problem in the $\ell_\infty$ norm for a certain setting of parameters $\varepsilon,\alpha$, $\lambda_A$, and $\lambda_B$. Their proof strategy was to lift  the query complexity lower bounds for finding a fixed point of a Brouwer function into the communication model. However, for the setting of parameters for which their lower bound was shown, they could not  guarantee the existence of an $\varepsilon$-approximate fixed point\footnote{A different way to view this, is to say that their reduction from the Brouwer problem in the query model to $\Comp_{\parami}$ created many `artificial' $\varepsilon$-approximate fixed points, and thus finding an $\varepsilon$-approximate fixed point in the communication model did not help in finding an approximate fixed point in the query model.}. They left it as an open problem if one could extend their lower bound to a regime of parameters where one could guarantee an $\varepsilon$-approximate fixed point (hereafter referred to as the total regime).\end{sloppypar}

Babichenko and Rubinstein \cite{BR16} showed an exponential lower bound\footnote{One may wonder, if the lower bound for Nash equilibrium in \cite{BR16} (or even \cite{GR18}) would imply the lower bound for the fixed point problems considered in this paper by using the standard proof of Nash from Brouwer \cite{N51}. We argue in Appendix~\ref{sec:no}, that an immediate reduction of such a kind is unlikely to give strong lower bounds for the Euclidean norm.} in the total regime, for a version of the Brouwer problem in the communication model, building on the techniques of \cite{RW16}. 
	In this paper, we introduce the \emph{Local Brouwer problem} that captures the problem for which \cite{BR16} showed their lower bound. We reduce the Local Brouwer problem to the Composition Brouwer problem and thus resolve the open problem of \cite{RW16} (we reiterate that the open problem was to prove either deterministic or randomized lower bounds for the Composition Brouwer problem in the \emph{total} regime). 

\begin{theorem}\label{thm:main}
For $p\in\{2,\infty\}$ and some constants $\lambda_A,\lambda_B\ge 0$ and $\varepsilon>0$, we have 
$$\CC(\Comp_{\param})=2^{\Omega(n)}.$$
\end{theorem}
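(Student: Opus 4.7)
The plan is to formally introduce a \emph{Local Brouwer} problem (denoted $\Local$) that abstracts the variant of Brouwer fixed point computation for which Babichenko and Rubinstein~\cite{BR16} proved a $2^{\Omega(n)}$ randomized communication lower bound, and then reduce $\Local$ to $\Comp_{\param}$. Informally, in $\Local$ each player holds part of a local description of a single globally defined Lipschitz function $f:[0,1]^n\to[0,1]^n$ (for example, $f$ may be defined piecewise on a fine grid, and each player knows its values on half of the cells), and the players must output an $\varepsilon$-approximate fixed point of $f$. This problem is total by continuity, and the BR16 construction, which lifts a hard End-of-Line instance to a piecewise-linear Brouwer function via a Hirsch--Papadimitriou--Vavasis style embedding, falls under this abstraction with constant $\varepsilon$ and constant Lipschitz parameter.

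The reduction takes a local description of $f$ and produces functions $f_A'$ and $f_B'$ whose composition realises $f$ up to a small projection/approximation error. A natural construction is to let $f_A'$ map $x\in[0,1]^n$ to a ``rich'' point in $[0,1]^m$ (with $m=O(n)$) that encodes $x$ together with all of player~$A$'s contribution to $f(x)$, and to let $f_B'$ take this encoded point, add player~$B$'s contribution, and project back to $[0,1]^n$. A short calculation then shows that an $\varepsilon$-approximate fixed point of $f_B'\circ f_A'$ yields an $O(\varepsilon)$-approximate fixed point of $f$, provided the encoding/decoding maps are Lipschitz with constants independent of $n$. If a more modular route is cleaner, one may instead reduce $\Local$ directly to $\Concat$ -- which is essentially tautological whenever $f$ splits coordinate-wise between the players -- and then invoke items 1 and 2 of Theorem~\ref{thm:three_problems} to transfer the bound to $\Comp$ at the cost of only constant factors in the parameters.

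The main obstacle is ensuring that the reduction preserves totality -- that no spurious $\varepsilon$-approximate fixed points are introduced, which was exactly the issue that prevented Roughgarden and Weinstein from obtaining a total-regime lower bound in their original reduction. Because we start from a function that really does have a well-defined approximate fixed point, we just need to verify that every approximate fixed point of $f_B'\circ f_A'$ projects to one for $f$, which boils down to the encoding $x\mapsto f_A'(x)$ being bi-Lipschitz on the relevant slice of its image. The remaining tasks -- bounding $\lambda_A',\lambda_B'$ by absolute constants, handling the $\ell_2$ norm alongside $\ell_\infty$ by using that piecewise-linear approximate fixed points in one norm correspond to those in the other up to constant factors (absorbed into $\varepsilon$ and the Lipschitz constants), and ensuring $\varepsilon$ stays a constant throughout -- are technical but routine given the explicit grid-based structure of the BR16 instance.
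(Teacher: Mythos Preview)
Your high-level plan --- define a Local Brouwer problem capturing the \cite{BR16} lower bound and reduce it to $\Comp$ --- is exactly what the paper does. But your description of the reduction misses the key combinatorial idea and one indispensable analytic tool, and two of your suggested shortcuts do not work.

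\textbf{The missing combinatorial idea.} In the Local problem the value $f_{x,y}(z)=f'(x|_{\L(z)},y|_{\L(z)},z)$ depends on \emph{both} players' bits at the $r$ locations $\L(z)$; it is not a sum or coordinate-wise split, so ``$A$ encodes her contribution and $B$ adds his'' has no obvious meaning, and your alternative route via $\Concat$ (``tautological when $f$ splits coordinate-wise'') does not apply. The paper's trick is that $r$ is a \emph{constant}, so $A$ can enumerate all $2^r$ possible values $\beta\in\{0,1\}^r$ of $B$'s local bits and output the entire vector
\[
f_A(z)=\bigl(f'(x|_{\L(z)},\beta_1,z),\ldots,f'(x|_{\L(z)},\beta_{2^r},z),\,z\bigr)\in[0,1]^{n(2^r+1)}.
\]
Player $B$ then merely \emph{selects} the block indexed by $\beta_i=y|_{\L(z)}$. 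On the image of $f_A$ this gives $f_B\circ f_A=f_{x,y}$ \emph{exactly}, so there are no spurious approximate fixed points at all --- your worry about totality is moot once the construction is right.

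\textbf{The missing analytic tool.} The selection map $g_B$ above is defined only on the range of $f_A$ (a complicated subset of $[0,1]^{n(2^r+1)}$), not on the whole cube. To get a globally defined Lipschitz $f_B$ you must invoke an extension theorem: Kirszbraun for $\ell_2$ and the coordinate-wise McShane/Whitney extension for $\ell_\infty$. This is precisely why the theorem is stated only for $p\in\{2,\infty\}$; for other $\ell_p$ norms no such isometric extension exists in general. Your proposal to pass between $\ell_2$ and $\ell_\infty$ by saying approximate fixed points ``correspond up to constant factors'' is incorrect in high dimension (the norm ratio is $n^{1/2}$, not a constant), and in any case the paper does not argue this way --- it runs the reduction separately in each norm using the appropriate extension.
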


We emphasize that $\varepsilon,\lambda_A,$ and $\lambda_B$ in the above theorem are constants independent of $n$ and $p$. This implies that the previously mentioned naive protocol for $\Comp_{\param}$ matches the above lower bound  up to constant factors in the exponent.

The proof of the above theorem crucially uses the work of G\"o\"os and Rubinstein \cite{GR18}, who recently showed how to use the constant gadget size lifting theorem of G\"o\"os and Pitassi~\cite{GP14a} to obtain randomized communication lower bounds for the Local Brouwer problem. 

Also note that Theorem~\ref{thm:main} implies the lower bounds for the Composition Brouwer problem as defined in \cite{RW16} with the additional discretization parameter $\alpha$, even when $\alpha<\frac{2\varepsilon}{\lambda_A\lambda_B+1}$, which is the setting of parameters for the total regime (see Proposition~\ref{prop:comp} and Theorem~\ref{thm:total}). 

We remark that we can guarantee the existence of an approximate fixed point only in the Euclidean norm and the max norm due to known barriers on extension theorems for other norms \cite{N01}. We elaborate on this in Section~\ref{sec:total}.

Finally, the following is a simple corollary of Theorems~\ref{thm:three_problems} and \ref{thm:main} 
\begin{corollary}\label{cor:three_problems_lb}
For $p\in\{2,\infty\}$ and some constants $\lambda_A,\lambda_B\ge 0$ and $\varepsilon>0$, we have 
\begin{enumerate}
\item $\CC(\Mean_{\param})=2^{\Omega(n)}$.
\item $\CC(\Concat_{\param})=2^{\Omega(n)}$. 
\end{enumerate}
\end{corollary}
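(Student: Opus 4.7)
The plan is to deduce both lower bounds by chaining the reductions of Theorem~\ref{thm:three_problems} on top of Theorem~\ref{thm:main}. Since Theorem~\ref{thm:main} supplies a $2^{\Omega(n)}$ lower bound on $\CC(\Comp_{\param})$ for some absolute constants $\varepsilon,\lambda_A,\lambda_B$, and since inequalities (3) and (1) of Theorem~\ref{thm:three_problems} take the form $\CC(\Comp) \leq \CC(\Concat)$ and $\CC(\Concat) \leq \CC(\Mean)$ at slightly perturbed parameters, both desired bounds will follow essentially by bookkeeping.

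Concretely, for part (2) I would apply inequality (3) directly to the witness of Theorem~\ref{thm:main}: it yields
$\CC(\Comp_{p,n,\varepsilon,\lambda_A,\lambda_B}) \leq \CC(\Concat_{p,O(n),O_{\lambda_B}(\varepsilon),4(\lambda_A+1),4(\lambda_B+1)})$.
Writing $m = \Theta(n)$ for the new dimension, the right-hand parameters $\varepsilon' := O_{\lambda_B}(\varepsilon)$, $\lambda_A' := 4(\lambda_A+1)$ and $\lambda_B' := 4(\lambda_B+1)$ remain constants independent of $n$ and $p$, and $2^{\Omega(n)} = 2^{\Omega(m)}$. Hence $\CC(\Concat_{p,m,\varepsilon',\lambda_A',\lambda_B'}) = 2^{\Omega(m)}$, which is exactly the claim.

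For part (1) I would then feed the $\Concat$ bound just obtained into inequality (1), giving $\CC(\Concat_{p,m,\varepsilon',\lambda_A',\lambda_B'}) \leq \CC(\Mean_{p,m,\varepsilon'/2,\lambda_A'+1,\lambda_B'+1})$, so the exponential lower bound carries over with the approximation parameter halved and the Lipschitz constants incremented by one. There is no real obstacle to overcome here; the one thing worth verifying is that every parameter transformation keeps us within the regime of constants independent of $n$ and $p$, which is immediate since each step only perturbs $\varepsilon$, $\lambda_A$, $\lambda_B$ by $O(1)$ and dilates the dimension by a constant factor. Thus, granting Theorems~\ref{thm:three_problems} and~\ref{thm:main}, the corollary follows by a two-step composition of reductions.
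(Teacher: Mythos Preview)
Your proposal is correct and follows exactly the approach the paper intends: the paper simply declares the corollary as an immediate consequence of Theorems~\ref{thm:three_problems} and~\ref{thm:main}, and your chaining of inequality~(3) (to pass from $\Comp$ to $\Concat$) followed by inequality~(1) (to pass from $\Concat$ to $\Mean$) is precisely the intended two-step composition. Your explicit bookkeeping that each parameter perturbation stays within the constant regime is the only point the paper bothers to flag as well.
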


\subsection{Nash Equilibrium}

One of the main results of \cite{BR16} is that the randomized communication complexity of finding an {$\varepsilon\text{-Nash}$} equilibrium in two-player $N\times N$ games requires $N^{\Omega(1)}$ bits of communication\footnote{They also showed that the randomized communication complexity of finding an $\varepsilon$-Nash equilibrium in $N$-player binary action games requires $2^{\Omega(N)}$ bits of communication.}. Their result has received significant attention \cite{K17,R18,S18}, as it demonstrated a communication bottleneck for convergence to approximate Nash equilibrium via randomized uncoupled dynamics. The result of \cite{GR18} strengthens this result further and rules out $N^{2-o(1)}$ randomized communication protocols for finding an $\varepsilon$-Nash equilibrium in two-player $N\times N$ games. 

Utilizing Theorem~\ref{thm:main}, we provide below a modular (and relatively simpler) proof of the result of \cite{BR16} (see Appendix~\ref{sec:Nash} for a more detailed proof outline). Moreover, this affirms the original proof framework envisioned in \cite{RW16}. 

\begin{enumerate}
\item  We show an $\Omega(N)$ lower bound on the critical block sensitivity\footnote{See \cite{GR18} for definitions and a simple proof of the lower bound given in Step 1.} of the \emph{End of a Line} ($\EOL$) problem defined on the clique host graph on $N$ vertices. We replace the vertices in the clique with binary trees to obtain a lower bound (on critical block sensitivity) of $\Omega(\sqrt{N})$ for $\EOL$ on a host graph  on $N$ vertices of constant degree.
\item Next, we apply the simulation theorem of \cite{GP14a} on a constant sized gadget, to obtain a lower boundof $\Omega(\sqrt{N})$ for $\EOL$ in the communication model.
\item Then, we embed the input graph of $\EOL$ problem into a (continuous) Brouwer function in $O(\log N)$ dimensions in the Euclidean space using the embedding given in \cite{BR16} (which essentially follows from the one in \cite{R16}). This gives us a lower bound of $\Omega(\sqrt{N})$ on the randomized communication complexity of the Local Brouwer problem in $O(\log N)$ dimensions.
\item Now we apply the reduction in the proof of Theorem~\ref{thm:main}, to obtain a lower bound of $\Omega(\sqrt{N})$ on $\CC\left(\Comp_{2,O(\log N),\varepsilon,O(\lambda),O(\lambda)}\right)$, for some constants $\varepsilon$ and $\lambda$.
\item Finally, we use the imitation gadget\footnote{Given inputs $f_A$ and $f_B$ to players $A$ and $B$ respectively, they build utility functions $u_A$ and $u_B$ over the action space $[0,1]^n$ and $[0,1]^m$ respectively as follows: $u_A(x,y)=-\|f_A(x)-y\|_2^2$ and $u_B(x,y)=-\|x-f_B(y)\|_2^2$.} given in \cite{RW16} to reduce\footnote{We need to discretize the space $[0,1]^n$ using the discretization parameter $\alpha$, where $\alpha$ is smaller than $c\varepsilon/\lambda^2$, for some large constant $c$.} $\Comp_{2,O(\log N),\varepsilon,O(\lambda),O(\lambda)}$ to that of finding an $\varepsilon^{O(1)}$-approximate Nash equilibrium in two-player $N'\times N'$ game, where $N'=N^{O(1)}$. This gives us the lower bound of \cite{BR16}.
\end{enumerate}

First, we remark that the above proof strategy can only give us $N^{\Omega(1)}$ lower bounds and thus cannot be used to obtain the lower bound given by \cite{GR18}; for instance, we lose a polynomial factor in Step 4 (i.e., Theorem~\ref{thm:main}). Second, we note that none of the $\text{non-trivial}$ techniques developed in \cite{GR18} (i.e., proving $\tilde \Omega(N)$ lower bound on the critical block sensitivity of $\EOL$ on host graphs on $N$ vertices of constant degree, and the `doubly-local' embedding of $\EOL$ into a Brouwer function) are used in the above proof. We merely use the very nice idea of applying the simulation theorem of \cite{GP14a} to obtain randomized communication complexity lower bounds for $\EOL$ problem. Third, we remark that in the proofs of both \cite{BR16} and \cite{GR18}, steps 3-5 in the above proof strategy are delicately intertwined and thus the above proof is arguably easier to follow. Finally, we note that from the lower bound on Composition Brouwer in Step 4, we can also obtain the same lower bound as \cite{BR16} for the randomized communication complexity of finding an  $\varepsilon$-Nash equilibrium in $N$-player binary action games as well (see \cite{RW16} for details). 

It remains an interesting open question to find a more straightforward proof for the lower bound on the communication complexity of finding an $\varepsilon$-Nash equilibrium (ideally with no simulation theorems involved). 
 A small step in this direction was shown by \cite{GK18}. 
We discuss  some possibilities via connections to Hex games in Appendix~\ref{sec:hex}.

\subsection{Sperner Problem}

Sperner’s lemma \cite{S28} is used to show existence of solutions in many game-theoretic problems such as envy-free cake cutting \cite{II99,S99}, independent transversal problem (of forming committees for example) \cite{H11}, hyper graph extension of Hall’s theorem \cite{AH00}. Thus, modeling these results in the communication model sheds insight into the amount of interaction needed between the various agents involved in order to reach an agreement. 

We initiate the study of the computational problem associated with Sperner's lemma in the communication model. 
Let $T$ be a triangulation of the unit $d$-simplex $\Delta:=\conv(v_0,\ldots,v_d)$ (i.e., a subdivision of $\Delta$ into subsimplices; $T$ here would be the union of the vertex set of these subsimplices). A  coloring $c:T\to\{0,\ldots ,d\}$ is said to be a Sperner-coloring if $c(v_i)=i$ for all $i\in \{0,\ldots ,d\}$ and every $x\in T$ gets the color of one of the vertices of the smallest face of $\Delta$ that contains $x$. Sperner's lemma asserts that in every Sperner-coloring of a triangulation of $\Delta$, there exists a panchromatic $d$-simplex. The natural computational problem that is associated with Sperner's lemma is as follows: Given a coloring of a fixed triangulation of $\Delta$, find a panchromatic $d$-simplex (or a point in $T$ that violates Sperner-coloring). This problem has previously been studied in the query model \cite{CS98,D06,FISV09} and the Turing machine model \cite{P94,G01,CD09}.

We introduce the \emph{Concatenation Sperner problem} (denoted by $\Sp_{d,n}^t$) in the two-player communication model, where a triangulation $T$ (of $n$ points) of the unit $d$-simplex is publicly known, player $A$ is given a set $S_A\subset T$ and a coloring function $c_A:S_A\to \{0,\ldots ,t-1\}$, and player $B$ is given a set $S_B\subset T$ and a coloring function $c_B:S_B\to \{t,\ldots ,d\}$. Their goal is to find a panchromatic $d$-simplex in the triangulation or a point $x\in T$ that violates the assumption that $S_A\dot\cup S_B=T$. Note that with two bits of communication the players  can verify if the coloring of $T$ given together by $c_A$ and $c_B$ is a Sperner-coloring.

Our first result on this problem is on the positive side:
\begin{theorem}\label{thm:d4}
For every $d\in\mathbb{N}$, there is a deterministic protocol for $\Sp_{d,n}^{d-1}$ with $O(\log^2 n)$ bits of communication.
\end{theorem}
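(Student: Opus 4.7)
The plan is a two-phase deterministic protocol. First I would find a ``boundary door'' on the face $F_d := \conv(v_0, \ldots, v_{d-1})$ opposite $v_d$, and then walk from this door to a panchromatic $d$-simplex along the classical Sperner path, but using binary search (divide-and-conquer on the triangulation) in place of a sequential walk.

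For Phase 1, the Sperner condition forces every vertex of $F_d$ to take a color in $\{0, \ldots, d-1\}$. Player A's palette is $\{0, \ldots, d-2\}$, while player B's palette $\{d-1, d\}$ meets the allowed colors on $F_d$ only in $\{d-1\}$. Because $S_A \dot\cup S_B = T$, every vertex of $F_d$ not in $S_A$ lies in $S_B$ and must therefore be colored $d-1$. Player A can thus reconstruct the full Sperner-coloring on $F_d$ unilaterally; Sperner's lemma applied to $F_d$ provides a panchromatic $(d-1)$-simplex $\sigma_0 \subset F_d$ with colors $\{0, \ldots, d-1\}$, which she transmits to player B using $O(\log n)$ bits.

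For Phase 2, consider the door graph on $T$: its vertices are the $d$-simplices and its edges are ``doors''---$(d-1)$-faces whose vertex colors are exactly $\{0, \ldots, d-1\}$. A pigeonhole argument shows that every $d$-simplex has at most two doors (its $d+1$ vertices realize each color of $\{0, \ldots, d-1\}$ with at most one repetition), so the door graph is a disjoint union of paths and cycles; its degree-$1$ nodes are precisely the panchromatic $d$-simplices and the boundary doors, and the Sperner condition forces all boundary doors to lie on $F_d$. The path starting at $\sigma_0$ therefore terminates at a panchromatic $d$-simplex $\tau^*$ (at least one such path is guaranteed since the number of boundary doors on $F_d$ is odd). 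Rather than walk this path sequentially (its length may be polynomial in $n$), I would binary-search by divide-and-conquer on $\Delta$: at each level split the current region $R \subseteq \Delta$ along a hyperplane compatible with the vertex set $T$ into two subregions $R_1, R_2$ of roughly equal size, and use the parity version of Sperner's lemma---the parity of panchromatic $d$-simplices inside $R_i$ equals the parity of doors lying on $\partial R_i$---to identify the subregion whose count is odd (hence nonzero) and recurse. After $O(\log n)$ halvings the region localizes $\tau^*$.

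The hard part will be arguing that each halving costs only $O(\log n)$ bits, uniformly in $d$. The $2$-color restriction on player B is essential here: every door contains exactly one vertex colored $d-1$, which lies in $S_B$ and is therefore known to B. Player B enumerates the candidate doors on the current splitting hyperplane from her $d-1$-colored vertices, and player A verifies in $O(\log n)$ bits whether the remaining $d-1$ vertices realize her palette $\{0, \ldots, d-2\}$ each exactly once. Combining the $O(\log n)$ cost of Phase 1 with $O(\log n)$ halvings at $O(\log n)$ bits each yields the claimed $O(\log^2 n)$ upper bound.
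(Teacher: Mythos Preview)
Your Phase 1 is fine, but Phase 2 has a real gap. In your door graph a $(d-1)$-face is a door when its colors are exactly $\{0,\ldots,d-1\}$, which forces exactly one of its $d$ vertices to lie in $S_B$ and carry color $d-1$ rather than $d$. Player $A$ cannot distinguish $B$'s two colors, so \emph{neither player knows this door graph}. Your geometric bisection then needs the parity of doors on the splitting surface, and that parity equals $\bigoplus_{\sigma}[\text{the }S_B\text{-vertex of }\sigma\text{ is colored }d-1]$, the XOR running over those $(d-1)$-faces $\sigma$ on the surface that have one $S_B$-vertex and whose $d-1$ remaining vertices carry $A$'s colors $\{0,\ldots,d-2\}$ bijectively. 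Player $A$ knows the index set, player $B$ knows the bits; this is an $\mathbb{F}_2$ inner product on up to polynomially many candidates, and you give no reason it costs only $O(\log n)$ bits. (Your last sentence --- ``$B$ enumerates \ldots\ $A$ verifies in $O(\log n)$ bits'' --- locates a single door, it does not compute a parity.) The phrase ``hyperplane compatible with $T$'' is also undefined: an arbitrary triangulation need not admit a balanced separator that is a union of its own $(d-1)$-faces, let alone a hyperplane section.

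The paper avoids all of this by pushing your Phase-1 observation from the facet $F_d$ to the whole simplex. After relabeling so that $B$'s two colors are $0$ and $d$, set $c'(x)=c(x)\bmod d$; then $A$ knows $c'$ on \emph{every} vertex (all of $S_B$ collapses to color $0$). This $c'$ is what the paper calls a \emph{surplus Sperner} $d$-coloring, and a short lemma shows that the graph on $d$-simplices whose edges are $c'$-panchromatic shared facets contains a path $P=p_1\cdots p_r$ from the facet $F_0$ opposite $v_d$ to the facet $F_d$ opposite $v_0$. Player $A$ builds $P$ with zero communication. Each edge of $P$ corresponds to a facet whose unique $c'$-color-$0$ vertex lies in $S_B$; its true $c$-color is $0$ at the $F_0$ end and $d$ at the $F_d$ end, and the simplex where the label flips is panchromatic in $c$. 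Binary search along $P$ (not on $\Delta$): in each of $O(\log n)$ rounds $A$ names one vertex with $O(\log n)$ bits and $B$ answers with one bit, for $O(\log^2 n)$ total.
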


An immediate corollary of the above theorem is that if  $d\le 4$, then for all $t\in\{0,\ldots ,4\}$, there is a deterministic protocol for $\Sp_{d,n}^t$ with $O(\log^2 n)$ bits of communication (see Corollary~\ref{cor:d4}).

Additionally, the proof of the above theorem can be modified to give an $O(\log^2 n)$ communication deterministic protocol for the following three-player problem: A triangulation (of size $n$) of the unit $2\text{-simplex}$ (a planar triangle) is publicly known, each player is given a subset of the triangulation points corresponding to one of the three color classes, and their goal is to find a panchromatic triangle (see Corollary~\ref{cor:3-Sp} for a formal statement). Such an efficient protocol is in stark contrast to the query model and the Turing machine model where the equivalent Sperner problem is known to be hard (see \cite{CS98} and \cite{CD09} respectively). We highlight that the protocol critically uses the perks of the communication model, that each player has unlimited computation power (which is not allowed in the Turing machine model), and that each player knows part of the total input (which does not hold in the query model).

However, the Concatenation Sperner problem admits no efficient protocol for large $d$ as we show below. 
\begin{theorem}\label{thm:SpLB}
For large enough $d$ (i.e., $d:=\Omega(\log n)$), we have $\Sp_{d,n}^{d/2}=n^{\Omega(1)}$.
\end{theorem}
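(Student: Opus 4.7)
The plan is to reduce from the Concatenation Brouwer problem, exploiting the parallel between the concatenation-style splits of the two problems. By Corollary~\ref{cor:three_problems_lb}, there are absolute constants $\eps_0,\lambda>0$ with $\CC(\Concat_{2,m,\eps_0,\lambda,\lambda})=2^{\Omega(m)}$. I would map a hard $\Concat$ instance in dimension $m$ to a Sperner instance on the $d$-simplex $\Delta^d$ with $d=\Theta(m)$ and grid-spacing $\delta=\Theta(\eps_0/(\lambda m))$, yielding $n=|T|=(1/\delta)^{\Theta(m)}=2^{\Theta(m)}$ vertices and hence $d=\Theta(\log n)$. Any $2^{o(m)}=n^{o(1)}$ protocol for $\Sp^{d/2}_{d,n}$ would then be simulable by the two $\Concat$ players to find an $\eps_0$-approximate fixed point of $f=(f_A,f_B)$ in $2^{o(m)}$ bits, contradicting the lower bound.

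The workhorse is the classical Brouwer-to-Sperner coloring: for a continuous $g:\Delta^d\to\Delta^d$, color a vertex $p$ with the smallest index $i$ such that $p_i>0$ and $g_i(p)\le p_i$. Such $i$ always exists because $\sum_i g_i(p)=\sum_i p_i=1$, the corner conditions $c(v_i)=i$ hold automatically, and any panchromatic subsimplex produces an $O(\lambda\delta d)$-approximate fixed point of $g$. The key observation is that the rule for color $i$ involves only the $i$-th coordinate $g_i$. To leverage this, I would construct $g:\Delta^d\to\Delta^d$ from $(f_A,f_B)$ so that the first $d/2+1$ barycentric coordinates of $g$ depend only on $f_A$ and the remaining $d/2$ only on $f_B$. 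Concretely, embed the cube $[0,1]^m$ Lipschitz-bijectively onto a central sub-region $R\subset\Delta^d$ and, inside $R$, encode $f_A$ and $f_B$ into the respective halves of $g$'s barycentric coordinates (introducing a normalizing slack coordinate to respect $\sum_i g_i=1$). Outside $R$, extend $g$ by a smooth retraction toward the barycenter of $R$ whose coordinate-halves again cleanly split between $f_A$ and $f_B$ (in fact only between the public geometry of $\Delta^d$ and each half separately) and whose only approximate fixed points lie inside $R$.

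With $g$ defined, I would fix a \emph{public} partition $T=S_A\dot\cup S_B$ depending only on the coordinates of each vertex $p$, and tune $g$ jointly with this partition so that on $S_A$ the first-half mass $\sum_{i\le d/2}g_i(p)$ is at most $\sum_{i\le d/2}p_i$ (forcing, by pigeonhole on the sum-to-one constraint and the condition $p_i>0$, some first-half coordinate to satisfy the color rule), and dually on $S_B$. Inside $R$ the encoding is alternated between the halves so that the hard instance still contributes its full difficulty. Each player then reads the publicly known $S_A$ or $S_B$ and computes $c_A$ or $c_B$ from their own input alone. A panchromatic simplex returned by any Sperner protocol must lie inside $R$ (since the steering outside $R$ rules out panchromatic configurations) and, via the standard Brouwer-to-Sperner argument combined with the choice of $\delta$, yields an $\eps_0$-approximate fixed point of $f$.

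The main obstacle is the joint engineering of $g$, the public partition $(S_A,S_B)$, and the cube-into-simplex embedding so that the color-partition compatibility (``$p\in S_A$ forces the Sperner color of $p$ into $\{0,\ldots,d/2\}$'') holds at every grid vertex while simultaneously (a) keeping all Lipschitz constants independent of $m$, (b) respecting the barycentric normalization $\sum_i g_i=1$ globally, and (c) ensuring that the steering region outside $R$ introduces no spurious panchromatic simplices. Once this coordinate-respecting construction is in place, the reduction is black-box and the claimed $n^{\Omega(1)}$ lower bound follows directly from Corollary~\ref{cor:three_problems_lb}.
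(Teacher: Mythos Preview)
Your approach differs from the paper's in two ways: you reduce from $\Concat$ (the paper reduces from $\Comp$, though by Theorem~\ref{thm:three_problems} this is immaterial), and more substantively you try to split the classical barycentric Sperner coloring by coordinate halves. The paper instead uses a \emph{geometric} separation: it fixes a hyperplane $H$ in $\Delta^d$ separating $\Delta^a=\conv(v_0,\dots,v_a)$ from $\Delta^b=\conv(v_{a+1},\dots,v_d)$, lets $S_A$ be all triangulation vertices on the $A$-side of $H$ and $S_B$ the rest, and has each player color their side independently via a conical-combination rule (Lemma~\ref{ftocol}) that depends only on their own function. Any panchromatic simplex must then straddle $H$, and the cross-section $H\cap\Delta^d\cong\Delta^a\times\Delta^b$ carries the fixed-point structure of $f_B\circ f_A$. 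No global map $g:\Delta^d\to\Delta^d$ is ever built, so no barycentric normalization needs to be shared between the players.

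The gap in your proposal is exactly the obstacle you yourself flag, and it is not a detail: the constraint $\sum_i g_i(p)=1$ genuinely couples the two halves of $g$. If $g_0,\dots,g_{d/2}$ depend only on $f_A$ and $g_{d/2+1},\dots,g_d$ only on $f_B$, then the first-half sum is a function of $(p,f_A)$ while the second-half sum must equal one minus that, forcing $B$'s coordinates to depend on $f_A$. You can escape this only by pinning the first-half sum to a public quantity such as $\sum_{i\le d/2}p_i$, but then your mass-based partition criterion becomes vacuous (both pigeonhole conditions hold at every vertex), and you are left to show that the remaining degrees of freedom still encode the hard instance, that the Sperner boundary conditions survive the cube-into-simplex embedding, and that the retraction outside $R$ creates no panchromatic simplices---none of which you carry out. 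The paper's hyperplane device sidesteps this coupling entirely, which is why it works cleanly.
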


The proof of the above theorem follows by a reduction from the Composition Brouwer problem to the Concatenation Sperner problem, and then applying the lower bound from Theorem~\ref{thm:main}. 

\subsection{Related Works}
We already discussed the known results on the fixed point problem in the communication complexity model.
Next we briefly mention the literature on the fixed point problem in other models of computation.

\noindent\textbf{Query Complexity.} 
In the query model, the task is to find a fixed-point of a continuous function $f:[0,1]^n\to[0,1]^n$, 
where a query algorithm can only obtain information about $f$ by queries to the value of $f$ at points in $[0,1]^n$. 
The general research issue is to identify bounds on the number of queries needed to find a fixed-point, 
subject to the assumption that $f$ belongs to some given class of functions (for instance, piecewise linear functions).
The query complexity of computing a constant approximate fixed point in the \emph{max norm} 
was studied by Hirsch et al.~\cite{HPV89} in the deterministic setting.
Recently, Babichenko \cite{B16} extended their lower bounds to the randomized setting. 
Rubinstein \cite{R16} extended this to the case of constant approximate fixed point computation in the \emph{Euclidean norm}. 
Finally, note that tight randomized query lower bounds have been obtained by Chen and Teng \cite{CT07} for the fixed point computation of Brouwer's functions in fixed dimension. 

\noindent\textbf{Computational Complexity.} 
In this model of computation, an arithmetic circuit representing the function 
(can be seen as succinct encoding of the truth table) is provided to a Turing machine as input and the complexity measure is the number of steps the machine should run in order to find the fixed point of the function.
The computational complexity of computing an approximate fixed point in the \emph{max norm} was shown to be \PPAD-complete for exponentially small approximation parameters by Papadimitriou~\cite{P94}. 
A decade later, Chen et al.~\cite{CDT09} showed that computing an approximate fixed point in the {max norm} was \PPAD-complete for polynomial approximation parameter. 
This was recently improved to constant approximation by Rubinstein \cite{R15}. 
Finally, Rubinstein~\cite{R16} showed that computing a constant approximate fixed point in the \emph{Euclidean norm} is \PPAD-complete.
The computational complexity of computing a \emph{near} fixed point in the \emph{max norm} was shown to be \textsf{FIXP}-complete by Etessami and Yannakakis \cite{EY10}.

\subsection{Organization of the Paper}
In Section~\ref{sec:prelim} we define some notions and introduce notations that will be used throughout the paper. In Section~\ref{sec:problems},  we formally introduce the Brouwer problems that we study in this paper and  in Section~\ref{sec:equivalence} we prove Theorem~\ref{thm:three_problems}. In Section~\ref{sec:total} we compute the setting of parameters wherein the Brouwer problems are total and in Section~\ref{sec:lowerbound} we show Theorem~\ref{thm:main}.
Finally, in Section~\ref{sec:sperner} we introduce the Sperner problem that we study in this paper and prove Theorems~\ref{thm:d4}~and~\ref{thm:SpLB}.

\section{Preliminaries}\label{sec:prelim}
In this section we give some basic definitions, propositions and notations used throughout the paper.

\begin{definition}[Normalized p-norm]
For $p\in\R_{\geq 1}$, the normalized $p$-norm $\ell_p$ of $x\in\R^n$ is
\[ \|x\|_p = \left( \frac{1}{n}\cdot \sum_{i\in[n]}\left|x_i\right|^p \right)^{\nicefrac{1}{p}} .\]
\end{definition}

\begin{definition}[The max norm]
The max norm $\ell_\infty$ of $x\in\R^n$ is
\[ \|x\|_\infty = \max_{i\in[n]} \{\left|x_i\right|\} .\]
\end{definition}

\noindent Note that for every $p < p'$, $ \|x\|_p \leq \|x\|_{p'} \leq \|x\|_\infty $.
Throughout the paper, whenever we use the notation $\|\cdot\|_p$ without specifying $p$ explicitly, 
$p$ should be clear from the context.

The following proposition will be used later.
\begin{proposition}\label{prop:inequality0}
Let $p\in\norm$, $n,r\in\mathbb N$. Let $x:=(x_1,\ldots ,x_r),y:=(y_1,\ldots ,y_r)\in [0,1]^{nr}$ where for all $i\in[r]$ we have $x_i,y_i\in [0,1]^n$. We have $$\sum_{i\in [r]}\|x_i-y_i\|_p\ge   \|x-y\|_p.$$
\end{proposition}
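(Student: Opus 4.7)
The key observation is that the \emph{normalized} $p$-norm has a $1/n$ in its definition, so when we split the $nr$-dimensional vector $x-y$ into $r$ blocks of length $n$, the quantity $\|x-y\|_p^p$ is the \emph{average} (not the sum) of the block norms to the $p$-th power. This turns the proposition into a comparison between an average and a sum.

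My plan is to prove the stronger inequality $\|x-y\|_p \le \max_{i\in[r]} \|x_i - y_i\|_p$; the statement then follows because $\max_{i\in[r]} \|x_i-y_i\|_p \le \sum_{i\in[r]} \|x_i-y_i\|_p$, as each summand is non-negative. First I would handle the case $p\in\R_{\geq 1}$. By the definition of the normalized $p$-norm,
\[ \|x-y\|_p^p \;=\; \frac{1}{nr}\sum_{i\in[r]}\sum_{j\in[n]} |x_{i,j}-y_{i,j}|^p \;=\; \frac{1}{r}\sum_{i\in[r]} \|x_i-y_i\|_p^p. \]
Since the average is bounded above by the maximum, the right-hand side is at most $\max_{i\in[r]} \|x_i-y_i\|_p^p$, and taking $p$-th roots gives $\|x-y\|_p \le \max_{i\in[r]} \|x_i-y_i\|_p$.

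For the remaining case $p=\infty$, the inequality is immediate from the definition:
\[ \|x-y\|_\infty \;=\; \max_{i\in[r]}\max_{j\in[n]} |x_{i,j}-y_{i,j}| \;=\; \max_{i\in[r]} \|x_i-y_i\|_\infty. \]
Combining these bounds with $\max_{i\in[r]} \|x_i-y_i\|_p \le \sum_{i\in[r]} \|x_i-y_i\|_p$ yields the proposition.

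There is essentially no obstacle; the only subtlety worth flagging is that the inequality genuinely relies on the \emph{normalization} in the $p$-norm. Without the $1/n$ factor one would have $\|x-y\|_p^p = \sum_i \|x_i-y_i\|_p^p$ (unnormalized), and the conclusion would require the subadditivity bound $(\sum_i a_i)^p \ge \sum_i a_i^p$ for $p\ge 1$ and $a_i\ge 0$; with the normalization, the proof reduces to the trivial ``average $\le$ max'' step.
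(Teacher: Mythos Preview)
Your proof is correct and follows essentially the same route as the paper: both write $\|x-y\|_p^p = \frac{1}{r}\sum_{i\in[r]}\|x_i-y_i\|_p^p$ for finite $p$, bound the average by the maximum, and then use $\max_i \le \sum_i$, while treating $p=\infty$ separately as immediate. Your additional remark about the role of the normalization is accurate but not needed for the argument.
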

\begin{proof}The statement is obvious for $p=\infty$. So we focus on finite $p\ge 1$.
\begin{align*}
 \|x-y\|_p&=\left(\frac{1}{r}\cdot \sum_{i\in[r]}\|x_i-y_i\|_p^p\right)^{1/p}\\
 &\le \left(\max_{i\in[r]}\|x_i-y_i\|_p^p\right)^{1/p}\\
 &= \max_{i\in[r]}\|x_i-y_i\|_p\\
  &\le \sum_{i\in[r]}\|x_i-y_i\|_p\qedhere
 \end{align*}
\end{proof}

\begin{proposition}\label{prop:inequality}
Let $p\ge 1$,  $n,r\in\mathbb N$. Let $x:=(x_1,\ldots ,x_r),y:=(y_1,\ldots ,y_r)\in [0,1]^{nr}$ where for all $i\in[r]$ we have $x_i,y_i\in [0,1]^n$. For any $i\in [r]$ we have $\|x_i-y_i\|_p\le r^{1/p}\cdot \|x-y\|_p$.
\end{proposition}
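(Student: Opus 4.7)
The plan is to unfold the definition of the normalized $p$-norm on both sides and compare the terms directly. Concretely, by the definition of $\|\cdot\|_p$ on $\R^{nr}$, we have
\[ \|x-y\|_p^p \;=\; \frac{1}{nr}\sum_{k=1}^r\sum_{j=1}^n |x_{k,j}-y_{k,j}|^p ,\]
while for each block $k\in[r]$, $\|x_k-y_k\|_p^p = \frac{1}{n}\sum_{j=1}^n |x_{k,j}-y_{k,j}|^p$. Substituting the second identity into the first, the inner $n$ cancels and one gets the clean identity
\[ \|x-y\|_p^p \;=\; \frac{1}{r}\sum_{k=1}^r \|x_k-y_k\|_p^p .\]

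From here the proposition is immediate: for any fixed $i\in[r]$, nonnegativity of the remaining summands gives $\|x_i-y_i\|_p^p \le \sum_{k=1}^r \|x_k-y_k\|_p^p = r\,\|x-y\|_p^p$, and taking $p$th roots yields $\|x_i-y_i\|_p \le r^{1/p}\cdot \|x-y\|_p$.

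The main thing to be careful about, rather than any real obstacle, is the normalization: because the paper's $\ell_p$ norm carries the factor $\tfrac{1}{n}$ (resp. $\tfrac{1}{nr}$), the two normalizing constants conspire to leave exactly a factor of $r$ in the identity above, which is what produces the $r^{1/p}$ loss (and not, say, no loss at all, as one might naively guess from monotonicity of sums). The case $p=\infty$ is trivial and can be mentioned separately: by definition $\|x_i-y_i\|_\infty \le \|x-y\|_\infty = r^{1/\infty}\|x-y\|_\infty$, matching the convention $r^{1/\infty}=1$.
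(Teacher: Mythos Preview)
Your proof is correct and is essentially the same as the paper's: the paper phrases the argument by zero-padding $x_i-y_i$ into $\R^{nr}$ and noting $\|x_i-y_i\|_p^p = r\cdot\|(0,\ldots,x_i-y_i,\ldots,0)\|_p^p \le r\cdot\|x-y\|_p^p$, which is exactly your identity $\|x-y\|_p^p = \frac{1}{r}\sum_k\|x_k-y_k\|_p^p$ followed by dropping all but one term. The paper likewise handles $p=\infty$ by a separate remark immediately after the proposition.
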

  \begin{proof}  Fix $i\in[r]$. We have:
\begin{align*}
  \|x_i-y_i\|_p^p  &= r\cdot \|(0^{(i-1)n},x_i,0^{(r-i)n})-(0^{(i-1)n},y_i,0^{(r-i)n})\|_p^p\\
  &\le r\cdot \|(x_1,\ldots ,x_r)-(y_1,\ldots ,y_r)\|_p^p\\
  &=r\cdot \|x-y\|_p^p. \qedhere
 \end{align*} 
  \end{proof}
  
In the above proposition, if $p=\infty$ then we have $\|x_i-y_i\|_\infty\le  \|x-y\|_\infty$.

\begin{definition}[Lipschitz constant]
Let $p\in\norm$, $\lambda\geq0$ and let $A\subseteq\R^n$ be a non-empty set.
A function $f:A\rightarrow \R^m$ is $\lambda$-Lipschitz in $\ell_p$-norm space if for all $x,y\in A$,
\[ \|f(x)-f(y)\|_p \leq \lambda \|x-y\|_p. \]
\end{definition}

\noindent If $p$ is clear form the context we say, for simplicity, that the function is $\lambda$-Lipschitz.

\section{Brouwer Fixed Point Communication Problems}\label{sec:problems}

In this section, we study how fixed point computation can be realized in the communication model. To this effect we revisit the problem of finding a fixed point in the composition of two Brouwer functions introduced by Roughgarden and Weinstein \cite{RW16}, and additionally introduce two new fixed point communication  problems.

\subsection{Fixed Points of Composition of Brouwer Functions}
\emph{The composition of two continuous functions is a continuous function.} Based on this fundamental mathematical statement,  Roughgarden and Weinstein \cite{RW16} introduced
the following definition of the distributed version of finding an approximate fixed point of composed functions  for the two-player case\footnote{The problem was introduced for the $\ell_\infty$-metric in \cite{RW16}, but we address the problem in this paper for all $\ell_p$-metrics.}. We denote the randomized communication complexity of this problem by
$\CC(\Comp_{\param})$. 

\begin{definition}[Composition Brouwer Problem \cite{RW16}]\label{def:AFP-RW}
Let $p\in\norm$, $n,m\in\N$, where $m=O(n)$,  and $\lambda_A,\lambda_B,\eps \ge 0$.
The   \emph{Composition Brouwer Problem} 
for two players $A$ and $B$ is as follows.
Let $p,n,m,\lambda_A,\lambda_B$, and $\eps$ be publicly known parameters. 
Player $A$ gets  a $\lambda_A$-Lipschitz function\footnote{Note that the input to each player is of infinite size/description. However, this is not an issue as the communication complexity of this problem is bounded above.} $f_A:[0,1]^n\rightarrow [0,1]^m$. 
Player $B$ gets  a \mbox{$\lambda_B$-Lipschitz} function $f_B:[0,1]^m\rightarrow [0,1]^n$. Let $f_{\Comp}:[0,1]^n\to[0,1]^n$ be defined as follows: for all $x\in[0,1]^n$, $f_{\Comp}(x)=f_B(f_A(x))$.
Their goal is to output any $x\in [0,1]^n$ such that 
\[ \|f_{\Comp}(x) - x\|_p \leq \eps .\]
\end{definition}

We would like to remark here that \cite{RW16} additionally parameterize the above problem using a discretization parameter $\alpha$, and ask to output an $\varepsilon$-approximate fixed point $x$ on the $\alpha$ discretized hypercube. However, our formulation is arguably cleaner, and we use it throughout the paper.

\begin{proposition}[Roughgarden and Weinstein\footnote{They state the proposition for $\ell_\infty$ norm, but the same proof works for all $\ell_p$ norms.} \cite{RW16}]\label{prop:comp}
Let $f_{A},f_B,$ and $f_{\Comp}$ be as in Definition~\ref{def:AFP-RW}. 
Let  $\lambda_A,\lambda_B,$ and $\lambda_{\Comp}$ be their respective Lipschitz constants. Then we have  $\lambda_{\Comp} \le \lambda_A\cdot \lambda_B$.
\end{proposition}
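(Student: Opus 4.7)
The plan is essentially to unfold the definition of Lipschitz continuity and apply it twice in sequence, which is the standard argument for bounding the Lipschitz constant of a composition. Since $f_{\Comp} = f_B \circ f_A$, the only thing to verify is that the two given Lipschitz bounds chain together cleanly under the normalized $\ell_p$ norm used in the paper.

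Concretely, I would fix arbitrary $x, y \in [0,1]^n$ and estimate $\|f_{\Comp}(x) - f_{\Comp}(y)\|_p$ by first applying the Lipschitz property of $f_B$ (with constant $\lambda_B$) to the pair $f_A(x), f_A(y) \in [0,1]^m$, obtaining $\|f_B(f_A(x)) - f_B(f_A(y))\|_p \le \lambda_B \, \|f_A(x) - f_A(y)\|_p$. Then I apply the Lipschitz property of $f_A$ (with constant $\lambda_A$) to the pair $x, y$, obtaining $\|f_A(x) - f_A(y)\|_p \le \lambda_A \, \|x - y\|_p$. Chaining these two inequalities yields $\|f_{\Comp}(x) - f_{\Comp}(y)\|_p \le \lambda_A \lambda_B \, \|x-y\|_p$, so $\lambda_{\Comp} \le \lambda_A \lambda_B$ by definition of the Lipschitz constant.

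The only subtlety worth noting is that the paper uses the \emph{normalized} $p$-norm, and the two occurrences of $\|\cdot\|_p$ in the chain live in different dimensional spaces (one on $\R^n$ and one on $\R^m$). This is not actually an obstacle, because the definition of being $\lambda$-Lipschitz in $\ell_p$-norm space given in the preliminaries is stated per-function, using whatever $\ell_p$ norm is appropriate to the domain and codomain of that function; the constants $\lambda_A$ and $\lambda_B$ are already defined with respect to the normalized norms on their respective domains. So no conversion factor between $m$ and $n$ appears, and the two-line calculation above goes through verbatim for every $p \in \norm$, including $p = \infty$.

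I do not expect any genuine obstacle here; the statement is a one-line consequence of the definitions, and the proof should be a short display with a single chain of inequalities.
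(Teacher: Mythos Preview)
Your proof is correct and is exactly the standard chain-of-inequalities argument; the paper itself does not spell out a proof but simply cites \cite{RW16} (noting that the same argument works for all $\ell_p$), and what you wrote is precisely that argument. Your remark about the normalized norms in different dimensions is also well taken and is consistent with how the paper sets up the Lipschitz definition.
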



\subsection{Fixed Points of Concatenation of Functions}
\emph{The concatenation of two continuous functions is a continuous function.} Based on this basic mathematical statement, we introduce a new fixed point problem that comes up naturally in the context of communication complexity.
We call this problem the Concatenation Brouwer Problem
and denote its randomized communication complexity by
$\CC(\Concat_{\param})$. 

\begin{definition}[Concatenation Brouwer Problem]\label{def:concat}
Let $p\in\norm$, $n\in\N$ be an even number, 
 $\lambda_A,\lambda_B,\eps \ge 0$.
The   \emph{Concatenation Brouwer Problem} 
for two players $A$ and $B$ is as follows.
Let $p,n,m,\lambda_A,\lambda_B$, and $\eps$ be publicly known parameters. 
Player $A$ gets  a $\lambda_A$-Lipschitz function $f_A:[0,1]^n\rightarrow [0,1]^{\nicefrac{n}{2}}$. 
Player $B$ gets a  $\lambda_B$-Lipschitz function $f_B:[0,1]^n\rightarrow [0,1]^{\nicefrac{n}{2}}$. Let $f_{\Concat}:[0,1]^n\to[0,1]^n$ be defined as follows: for all $x\in[0,1]^n$, $f_{\Concat}(x)=(f_A(x),f_B(x))$.
Their goal is to output any $x\in [0,1]^n$ such that 
\[ \|f_{\Concat}(x) - x\|_p \leq \eps .\]
\end{definition}

We have a proposition below for Concatenation of functions, similar to Proposition~\ref{prop:comp}.
\begin{proposition}\label{prop:concat}
Let $f_{A},f_B,$ and $f_{\Concat}$ be as in Definition~\ref{def:concat}. 
Let  $\lambda_A,\lambda_B,$ and $\lambda_{\Concat}$ be their respective Lipschitz constants. Then we have  $\lambda_{\Concat} \le \|\lambda_{A,B}\|_p$, where $\lambda_{A,B}=(\lambda_A,\lambda_B)$.
\end{proposition}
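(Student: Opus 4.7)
The plan is to unroll the definition of $\lambda_{\Concat}$ by bounding $\|f_{\Concat}(x)-f_{\Concat}(y)\|_p$ in terms of $\|x-y\|_p$ for arbitrary $x,y\in[0,1]^n$, and then showing that the resulting constant equals $\|\lambda_{A,B}\|_p$. The key is to exploit the fact that the $p$-norm used in this paper is the \emph{normalized} $p$-norm, which is why a factor of $\tfrac12$ will appear naturally when the output coordinates are split into two halves of equal size.

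I would first handle the case of finite $p\ge 1$. Write $u:=f_A(x)-f_A(y)\in\R^{n/2}$ and $v:=f_B(x)-f_B(y)\in\R^{n/2}$, so that $f_{\Concat}(x)-f_{\Concat}(y)=(u,v)\in\R^n$. By direct expansion of the normalized $p$-norm,
\begin{align*}
\|f_{\Concat}(x)-f_{\Concat}(y)\|_p^{\,p}
&=\tfrac{1}{n}\sum_{i=1}^{n/2}|u_i|^p+\tfrac{1}{n}\sum_{i=1}^{n/2}|v_i|^p
=\tfrac{1}{2}\|u\|_p^{\,p}+\tfrac{1}{2}\|v\|_p^{\,p}.
\end{align*}
Applying the Lipschitz bounds $\|u\|_p\le\lambda_A\|x-y\|_p$ and $\|v\|_p\le\lambda_B\|x-y\|_p$ and pulling out the common factor $\|x-y\|_p^{\,p}$ yields
\[
\|f_{\Concat}(x)-f_{\Concat}(y)\|_p^{\,p}
\le \tfrac{\lambda_A^p+\lambda_B^p}{2}\cdot \|x-y\|_p^{\,p}
= \|\lambda_{A,B}\|_p^{\,p}\cdot\|x-y\|_p^{\,p},
\]
since $\lambda_{A,B}=(\lambda_A,\lambda_B)\in\R^2$ and the normalized $p$-norm on $\R^2$ contributes exactly the factor $\tfrac12$. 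Taking $p$-th roots gives the desired Lipschitz inequality.

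For $p=\infty$ the argument is even simpler: $\|(u,v)\|_\infty=\max(\|u\|_\infty,\|v\|_\infty)$, so the Lipschitz bounds on $f_A,f_B$ give
\[
\|f_{\Concat}(x)-f_{\Concat}(y)\|_\infty \le \max(\lambda_A,\lambda_B)\cdot\|x-y\|_\infty = \|\lambda_{A,B}\|_\infty\cdot\|x-y\|_\infty.
\]

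There is no real obstacle here—the only subtlety worth flagging is the normalization: without the $\tfrac{1}{n}$ factor in the definition of $\|\cdot\|_p$, the bound would come out as $(\lambda_A^p+\lambda_B^p)^{1/p}$ (the un-normalized $p$-norm of $\lambda_{A,B}$), so it is essential to use the same normalized $p$-norm on $\R^2$ for $\|\lambda_{A,B}\|_p$ that the paper uses throughout.
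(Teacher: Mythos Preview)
Your proof is correct and follows essentially the same route as the paper: split $f_{\Concat}(x)-f_{\Concat}(y)$ into its two halves, apply the Lipschitz bounds on $f_A$ and $f_B$, and recognize the resulting constant as $\|\lambda_{A,B}\|_p$. In fact you are more careful than the paper's own write-up, which drops the normalization factor $\tfrac12$ in the intermediate expression $\left(\|f_A(x)-f_A(y)\|_p^p+\|f_B(x)-f_B(y)\|_p^p\right)^{1/p}$ (though its final conclusion is still correct); you also handle $p=\infty$ explicitly, which the paper leaves implicit.
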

\begin{proof}
Fix distinct $x,y\in[0,1]^n$ such that $\|f_{\Concat}(x)-f_{\Concat}(y)\|_p=\lambda_{\Concat}\cdot \|x-y\|_p$. We have:
\begin{align*}
\lambda_{\Concat}\cdot \|x-y\|_p=\|f_{\Concat}(x)-f_{\Concat}(y)\|_p&=\left(\|f_{A}(x)-f_{A}(y)\|_p^p+\|f_{B}(x)-f_{B}(y)\|_p^p\right)^{1/p}\\
&\le \left((\lambda_A^p+\lambda_B^p)\cdot \|x-y\|_p^p\right)^{1/p}\\
&= \|\lambda_{A,B}\|_p\cdot \|x-y\|_p\qedhere
\end{align*}
\end{proof}

\subsection{Fixed Points of Mean of Brouwer Functions}
\emph{The mean of two continuous functions is a continuous function.} Based on this fundamental mathematical statement about functions over vector spaces\footnote{To be precise, the statement is true is for functions over any vector space where scaling and addition are continuous on the corresponding topology.}, we introduce the following fixed point problem which captures geometric smoothening of the mean operator. 
We call this problem the Mean Brouwer problem
and denote its randomized communication complexity 
$\CC(\Mean_{\param})$.

\begin{definition}[Mean Brouwer Problem]\label{def:interpol}
Let $p\in\norm$, $n,m\in\N$, where $m=O(n)$,  and $\lambda_A,\lambda_B,\eps \ge 0$.
The   \emph{Mean Brouwer Problem} 
for two players $A$ and $B$ is as follows.
Let $p,n,m,\lambda_A,\lambda_B$, and $\eps$ be publicly known parameters. 
Player $A$ gets a $\lambda_A$-Lipschitz function $f_A:[0,1]^n\rightarrow [0,1]^{n}$. 
Player $B$ gets a $\lambda_B$-Lipschitz  function $f_B:[0,1]^n\rightarrow [0,1]^{n}$. Let $f_{\Mean}:[0,1]^n\to[0,1]^n$ be defined as follows: for all $x\in[0,1]^n$ and  $i\in[n]$, $f_{\Mean}(x)_i=\frac{f_A(x)_i+f_B(x)_i}{2}$.
Their goal is to output any $x\in [0,1]^n$ such that 
\[ \|f_{\Mean}(x) - x\|_p \leq \eps .\]
\end{definition}

We remark here that in the above definition we could define $f_{\Mean}$ in a more general way: for every integers $p\ge 0,q>0$ such that $p\le q$, let  $f_{\Mean}^{p/q}:[0,1]^n\to[0,1]^n$ be defined as follows: for all $x\in[0,1]^n$ and  $i\in[n]$, $f_{\Mean}^{p/q}(x)_i=\frac{p}{q}\cdot f_A(x)_i + \left(1-\frac{p}{q}\right)\cdot f_B(x)_i$. The results in this paper could be extended to this more general definition, but we skip doing so, for the sake of brevity.

Finally, we have a proposition below for Mean of functions, similar to Propositions~\ref{prop:comp}~and~\ref{prop:concat}.
\begin{proposition}\label{prop:inter}
Let $f_{A},f_B,$ and $f_{\Mean}$ be as in Definition~\ref{def:interpol}. 
Let  $\lambda_A,\lambda_B,$ and $\lambda_{\Mean}$ be their respective Lipschitz constants. Then we have  $\lambda_{\Mean} \le \frac{\lambda_A+\lambda_B}{2}$.
\end{proposition}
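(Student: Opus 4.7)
The plan is to prove Proposition~\ref{prop:inter} by a direct application of the triangle inequality, since $f_{\Mean}(x) - f_{\Mean}(y)$ decomposes linearly into contributions from $f_A$ and $f_B$. Specifically, for any $x,y \in [0,1]^n$, one has the pointwise identity
\[
f_{\Mean}(x) - f_{\Mean}(y) \;=\; \tfrac{1}{2}\bigl(f_A(x) - f_A(y)\bigr) + \tfrac{1}{2}\bigl(f_B(x) - f_B(y)\bigr),
\]
so the bound will follow by taking $\|\cdot\|_p$ on both sides and using the Lipschitz bounds on $f_A$ and $f_B$.

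Concretely, I would proceed as follows. Fix distinct $x,y \in [0,1]^n$ achieving (or approaching) $\|f_{\Mean}(x) - f_{\Mean}(y)\|_p = \lambda_{\Mean} \cdot \|x-y\|_p$. Apply the triangle inequality for $\|\cdot\|_p$ (which holds for all $p \in \norm$, since the normalized $p$-norm is just a positive rescaling of the standard $p$-norm) to get
\[
\|f_{\Mean}(x) - f_{\Mean}(y)\|_p \;\leq\; \tfrac{1}{2}\,\|f_A(x) - f_A(y)\|_p + \tfrac{1}{2}\,\|f_B(x) - f_B(y)\|_p.
\]
Then invoke the Lipschitz hypotheses to bound each term by $\tfrac{1}{2}\lambda_A \|x-y\|_p$ and $\tfrac{1}{2}\lambda_B \|x-y\|_p$ respectively, yielding $\|f_{\Mean}(x) - f_{\Mean}(y)\|_p \leq \tfrac{\lambda_A+\lambda_B}{2}\,\|x-y\|_p$, and hence $\lambda_{\Mean} \leq \tfrac{\lambda_A+\lambda_B}{2}$.

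There is no real obstacle here; the only subtle point is making sure the triangle inequality applies to the \emph{normalized} $p$-norm defined in Section~\ref{sec:prelim}, but since multiplying the standard $\ell_p$-norm by $n^{-1/p}$ preserves both positive homogeneity and subadditivity, this is immediate. The same argument also works unchanged for $p = \infty$, where the normalization factor disappears.
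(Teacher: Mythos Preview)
Your proposal is correct and follows essentially the same argument as the paper's proof: fix $x,y$ realizing $\lambda_{\Mean}$, apply the triangle inequality to the decomposition $f_{\Mean}(x)-f_{\Mean}(y)=\tfrac{1}{2}(f_A(x)-f_A(y))+\tfrac{1}{2}(f_B(x)-f_B(y))$, and then invoke the Lipschitz bounds on $f_A$ and $f_B$.
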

\begin{proof}
Fix distinct $x,y\in[0,1]^n$ such that $\|f_{\Mean}(x)-f_{\Mean}(y)\|_p=\lambda_{\Mean}\cdot \|x-y\|_p$. We have:
\begin{align*}
\lambda_{\Mean}\cdot \|x-y\|_p= \|f_{\Mean}(x)-f_{\Mean}(y)\|_p&\le\frac{1}{2}\cdot\left(\|f_{A}(x)-f_{A}(y)\|_p+\|f_{B}(x)-f_{B}(y)\|_p\right)\\
&\le \frac{\lambda_A+\lambda_B}{2}\cdot \|x-y\|_p\qedhere
\end{align*}
\end{proof}

\section{Equivalence of Composition, Concatenation, and Mean Brouwer Problems}\label{sec:equivalence}

In this section, we prove the equivalence between the three Brouwer problems (up to polynomial factors) that we  introduced in Section~\ref{sec:problems}.

First we prove an upper bound on the (deterministic) communication complexity of these problems. 

\begin{lemma}\label{lem:upper}
Let $n\in\N$, $p\in\norm$, and $\lambda_A,\lambda_B,\eps> 0$ all  be fixed constants.
It holds that 
\[ \CC\left(\Comp_{p,n,\eps,\lambda_A,\lambda_B}\right)=2^{O(n)}. \]
Moreover, we have that $\CC\left(\Concat_{p,n,\eps,\lambda_A,\lambda_B}\right)=2^{O(n)}$ and  $\CC\left(\Mean_{p,n,\eps,\lambda_A,\lambda_B}\right)=2^{O(n)}$ as well.
\end{lemma}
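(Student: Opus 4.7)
The plan is to give a deterministic protocol of cost $2^{O(n)}$ in which one player broadcasts a sufficiently fine discretization of their function on a regular grid, and the other player performs a brute-force search for an approximate fixed point among the grid points. Since $\CC$ denotes randomized communication complexity, a deterministic protocol of this cost automatically gives the bound claimed in the lemma. I will describe the argument in detail only for $\Comp_{\param}$; the $\Concat$ and $\Mean$ variants go through with the same template, replacing Proposition~\ref{prop:comp} by Propositions~\ref{prop:concat} and~\ref{prop:inter}.

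The first step is to fix constants $\delta,\eta>0$ (depending only on $\varepsilon,\lambda_A,\lambda_B$, to be specified at the end) and define the publicly known grid $G := \{0,\delta,2\delta,\ldots\}^n \cap [0,1]^n$, so that $|G|\le(\lceil 1/\delta\rceil+1)^n=2^{O(n)}$. Because $\|\cdot\|_p\le\|\cdot\|_\infty$ for every $p\in\norm$, every point of $[0,1]^n$ lies within normalized $\ell_p$-distance $\delta$ of some grid point. The protocol is now: Player $A$ sends, for each $x\in G$, a vector $\widetilde{f_A}(x)\in[0,1]^m$ obtained by rounding each coordinate of $f_A(x)$ to the nearest multiple of $\eta$; this costs $|G|\cdot m\cdot \log(1/\eta)=2^{O(n)}$ bits, using $m=O(n)$. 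Player $B$ then computes $f_B(\widetilde{f_A}(x))$ for every $x\in G$, picks a minimizer $x^\circ$ of $\|f_B(\widetilde{f_A}(x))-x\|_p$ over $x\in G$, and transmits $x^\circ$ back to Player $A$ using $O(n)$ additional bits.

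For correctness, Brouwer's theorem guarantees an exact fixed point $x^*$ of $f_{\Comp}$. Let $x\in G$ satisfy $\|x-x^*\|_p\le\delta$. Two error sources enter: the grid-offset error and the rounding error. By Proposition~\ref{prop:comp} and the triangle inequality,
\[
\|f_{\Comp}(x)-x\|_p \;\le\; \|f_{\Comp}(x)-f_{\Comp}(x^*)\|_p + \|x^*-x\|_p \;\le\; (\lambda_A\lambda_B+1)\,\delta,
\]
while the rounding contributes at most $\lambda_B\eta$, because $\|\widetilde{f_A}(x)-f_A(x)\|_p\le\|\widetilde{f_A}(x)-f_A(x)\|_\infty\le\eta$ and $f_B$ is $\lambda_B$-Lipschitz. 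Setting $\delta:=\varepsilon/(2(\lambda_A\lambda_B+1))$ and $\eta:=\varepsilon/(2(\lambda_B+1))$ yields $\|f_B(\widetilde{f_A}(x))-x\|_p\le\varepsilon$, so $x^\circ$ is an $\varepsilon$-approximate fixed point.

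I do not expect a genuine obstacle here: the argument is a standard discretize-and-search, and the only bookkeeping is to verify that the grid-spacing and rounding-precision each contribute a bounded constant factor in terms of $\varepsilon,\lambda_A,\lambda_B$. The same strategy gives $2^{O(n)}$ deterministic protocols for $\Concat_{\param}$ and $\Mean_{\param}$, with the composite Lipschitz constant controlled by Propositions~\ref{prop:concat} and~\ref{prop:inter} respectively and the constants $\delta,\eta$ chosen accordingly; in each case the combined function is still well-defined on all of $[0,1]^n$, so Brouwer's theorem applies and the grid contains an $\varepsilon$-approximate fixed point.
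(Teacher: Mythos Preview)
Your approach is essentially the same as the paper's: discretize $[0,1]^n$ by a grid of constant spacing, transmit one player's function on the grid to finite precision, and use Brouwer's theorem to certify that some grid point near the exact fixed point is an $\varepsilon$-approximate fixed point. The paper is less explicit about which player sends what, but the underlying idea and the use of Proposition~\ref{prop:comp} (resp.\ \ref{prop:concat}, \ref{prop:inter}) to bound the composite Lipschitz constant are identical.

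There is one small bookkeeping slip. You show that the grid point $x$ near $x^*$ satisfies $\|f_B(\widetilde{f_A}(x))-x\|_p\le\varepsilon$, and hence the minimizer $x^\circ$ does too. But the task is to output an $\varepsilon$-approximate fixed point of $f_{\Comp}=f_B\circ f_A$, not of $f_B\circ\widetilde{f_A}$. Passing back from $\widetilde{f_A}$ to $f_A$ costs another $\lambda_B\eta$, so with your chosen constants you only get $\|f_{\Comp}(x^\circ)-x^\circ\|_p\le\varepsilon+\lambda_B\eta<3\varepsilon/2$. The fix is immediate: shrink $\delta$ and $\eta$ by an extra constant factor (e.g., replace each $2$ in the denominators by $3$), so the rounding error is budgeted twice. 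This does not affect the $2^{O(n)}$ bound.
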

\begin{proof}
Since $\lambda_A$ and $\lambda_B$ are both bounded above by constants, then the Lipschitz constants of $f_\Comp,f_\Mean,$ and $f_\Concat$ are all bounded above by some constant (see Propositions~\ref{prop:comp},~\ref{prop:concat},~\ref{prop:inter}). 

Let $\delta:=\frac{\varepsilon}{1+\lambda_A\cdot \lambda_B}$. Following a simple packing argument (for example, see Lemma 16 in \cite{DKL19}), we have that there is a \emph{fixed} discrete set $T^*\subseteq [0,1]^n$ of size $(1+\nicefrac{2}{\delta})^n$ such that for every $x\in[0,1]^n$, there exists $y\in T^*$ for which $\|x-y\|_p\le \delta$. The protocol then is to evaluate $f_\Comp$ (or $f_\Mean$ or $f_\Concat$ respectively) on all the points in $T^*$ (where each coordinate of every point is specified up to some constant digits of precision (depending on $\varepsilon,\lambda_A,$ and $\lambda_B$). The total communication in this protocol is $O(n)\cdot (1+\nicefrac{2}{\delta})^n=2^{O(n)}$ bits. We claim below that there exists an $\varepsilon$-approximate fixed point of $f_{\Comp}$ in $T^*$.

 Let $x_{{\Comp}}^*$ be a fixed point of $f_\Comp$ (i.e., $f_\Comp(x_{\Comp}^*)=x_{\Comp}^*$). By the construction of $T^*$, there exists $y\in T^*$ such that $\|x_{\Comp}^*-y\|_p\le \delta$. We show below that $y$ is an $\varepsilon$-approximate fixed point of $f_{\Comp}$. 
\begin{align*}
\|f_{\Comp}(y)-y\|_p&\le \|f_{\Comp}(y)-x_{\Comp}^*\|_p+\|y-x_{\Comp}^*\|_p\\
&=\|f_{\Comp}(y)-f_{\Comp}(x_{\Comp}^*)\|_p+\|y-x_{\Comp}^*\|_p\\
&\le (1+\lambda_A\cdot \lambda_B)\cdot \|y-x_{\Comp}^*\|_p\\
&\le (1+\lambda_A\cdot \lambda_B)\cdot \|y-x_{\Comp}^*\|_p\\ &\le (1+\lambda_A\cdot \lambda_B)\cdot \delta=\varepsilon
\end{align*}
A similar argument works for showing $\CC\left(\Concat_{p,n,\eps,\lambda_A,\lambda_B}\right)=2^{O(n)}$ and  $\CC\left(\Mean_{p,n,\eps,\lambda_A,\lambda_B}\right)=2^{O(n)}$ as well. 
\end{proof}
 
For the rest of this section, we omit $p$  from the notations, as all the results hold for any fixed value $p\in\norm$. 
The proof of Theorem~\ref{thm:three_problems}  follows from the next three lemmas.

\begin{lemma}
Let $n\in\N$ be an even integer and $\lambda_A,\lambda_B,\eps\ge 0$.
It holds that 
\[ \CC\left(\Concat_{n,\eps,\lambda_A,\lambda_B}\right) \leq \CC\left(\Mean_{n,\eps/2,(\lambda_A+1),(\lambda_B+1)}\right). \]
\end{lemma}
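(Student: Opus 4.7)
The plan is to reduce the Concatenation Brouwer problem to the Mean Brouwer problem by having each player locally embed their half-dimensional function into a full-dimensional one using the identity map on the ``other half'' of coordinates. Specifically, write any $x\in[0,1]^n$ as $x=(x_1,x_2)$ with $x_1,x_2\in[0,1]^{n/2}$. Player $A$, given $f_A\colon[0,1]^n\to[0,1]^{n/2}$, constructs $g_A\colon[0,1]^n\to[0,1]^n$ defined by $g_A(x)=(f_A(x),x_2)$; player $B$ symmetrically sets $g_B(x)=(x_1,f_B(x))$. Both constructions are performed with no communication. The players then invoke the optimal protocol for $\Mean_{n,\eps/2,\lambda_A+1,\lambda_B+1}$ on the instance $(g_A,g_B)$ and return whatever $x\in[0,1]^n$ that protocol outputs.

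Correctness is immediate from a one-line computation:
\[ \tfrac{g_A(x)+g_B(x)}{2}-x \;=\; \Bigl(\tfrac{f_A(x)-x_1}{2},\,\tfrac{f_B(x)-x_2}{2}\Bigr) \;=\; \tfrac{1}{2}\bigl(f_{\Concat}(x)-x\bigr), \]
so $\|f_{\Mean}(x)-x\|_p\le \eps/2$ forces $\|f_{\Concat}(x)-x\|_p\le\eps$, exactly as required.

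The main verification (and only real calculation) is the Lipschitz bound on $g_A$, which is where I expect the step to feel slightly delicate. Using the normalized $p$-norm on $\R^n$ and splitting the sum over the two halves of coordinates, one computes $\|g_A(x)-g_A(y)\|_p^p = \tfrac{1}{2}\|f_A(x)-f_A(y)\|_p^p + \tfrac{1}{2}\|x_2-y_2\|_p^p$, where each norm on the right is the normalized $p$-norm on $\R^{n/2}$. The first term is at most $\tfrac{1}{2}\lambda_A^p\|x-y\|_p^p$ by the Lipschitz assumption; the second is at most $\|x-y\|_p^p$ by Proposition~\ref{prop:inequality} with $r=2$. Combining and invoking the elementary inequality $\lambda_A^p+2\le 2(\lambda_A+1)^p$ (which holds for all $p\ge 1$ and $\lambda_A\ge 0$, since $(\lambda_A+1)^p\ge \lambda_A^p+1$) yields $\|g_A(x)-g_A(y)\|_p\le(\lambda_A+1)\|x-y\|_p$, as needed. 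The $p=\infty$ case is handled separately using the coordinate-projection inequality $\|x_2-y_2\|_\infty\le\|x-y\|_\infty$ together with $\max(\lambda_A,1)\le\lambda_A+1$. The same reasoning gives $g_B$ Lipschitz constant $\lambda_B+1$, and since $g_A(x),g_B(x)\in[0,1]^n$ by construction, the reduction is valid; its communication cost equals $\CC(\Mean_{n,\eps/2,\lambda_A+1,\lambda_B+1})$.
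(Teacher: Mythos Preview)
Your proposal is correct and follows the same construction as the paper: define $g_A(x)=(f_A(x),x_2)$ and $g_B(x)=(x_1,f_B(x))$, then observe that an $\eps/2$-approximate fixed point of $g_{\Mean}$ is an $\eps$-approximate fixed point of $f_{\Concat}$. The only cosmetic difference is in the Lipschitz calculation: the paper bounds $\|g_A(x)-g_A(y)\|_p$ via the triangle inequality together with Propositions~\ref{prop:inequality0} and~\ref{prop:inequality}, whereas you work directly at the level of $p$-th powers and invoke the elementary inequality $\lambda_A^p+2\le 2(\lambda_A+1)^p$; both routes yield the same constant $\lambda_A+1$.
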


\begin{proof}
Player $A$ gets $f_A:[0,1]^n\rightarrow [0,1]^{\nicefrac{n}{2}}$.
Player $B$ gets $f_B:[0,1]^n\rightarrow [0,1]^{\nicefrac{n}{2}}$.
Define $g_A,g_B:[0,1]^n\rightarrow [0,1]^{n}$ for every $x=(x_1,x_2)\in [0,1]^n$ as
\[ g_A(x) = \left( f_A(x),x_2 \right) \text{    and    }  g_B(x) = \left( x_1,f_B(x) \right) .\]

We now show that if the Lipschitz constant of $f_A$ is $\lambda_A$ then $g_A$ is at most $(\lambda_A+1)$-Lipschitz.\allowdisplaybreaks
\begin{align*}
\|g_A(x)-g_A(y)\|&=\frac{1}{2}\cdot \|(f_A(x)-f_A(y),x_2-y_2)\|\\
&\le \frac{1}{2}\cdot \left(\|(f_A(x)-f_A(y),0^{\nicefrac{n}{2}})\|+\|(0^{\nicefrac{n}{2}},x_2-y_2)\|\right)\\
&\le \|f_A(x)-f_A(y)\|+\|x-y\|\\
&\le (\lambda_A+1)\cdot \|x-y\|,
\end{align*}
where we used the triangle inequality in the first inequality above, and we used Propositions~\ref{prop:inequality0}~and~\ref{prop:inequality} in the second inequality. 

\begin{sloppypar}Similarly, we show that if the Lipschitz constant of $f_B$ is $\lambda_B$ then $g_B$ is at most \mbox{$(\lambda_B+1)$-Lipschitz}.
\begin{align*}
\|g_B(x)-g_B(y)\|&=\frac{1}{2}\cdot \|(x_1-y_1,f_B(x)-f_B(y))\|\\
&\le \frac{1}{2}\cdot \left(\|(x_1-y_1,0^{\nicefrac{n}{2}})\|+\|(0^{\nicefrac{n}{2}},f_B(x)-f_B(y))\|\right)\\
&\le \|f_B(x)-f_B(y)\|+\|x-y\|\\
&\le (\lambda_B+1)\cdot \|x-y\|,
\end{align*}\end{sloppypar}
where we used the triangle inequality in the first inequality above, and we used Propositions~\ref{prop:inequality0}~and~\ref{prop:inequality} in the second inequality.

Let $g_\Mean(x) = (y_1,y_2)$ where $y_1,y_2\in [0,1]^{\nicefrac{n}{2}}$.
Then, we have 
$\| y_1 - x_1 \| 
= \nicefrac{1}{2}\cdot \| f_A(x) - x_1 \|$ and  $\| y_2 - x_2 \| = \nicefrac{1}{2}\cdot \| f_B(x) - x_2 \|$.
Hence if $\|g_\Mean(x)-x\| \leq \eps$ then $\|f_\Concat(x)-x\| \leq 2\eps$.
\end{proof}

\begin{lemma}
Let $n\in\N$  and $\lambda_A,\lambda_B,\eps\ge 0$. It holds that 
\[ \CC(\Mean_{n,\eps,\lambda_A,\lambda_B}) \leq \CC(\Comp_{n,\eps,\frac{\lambda_A}{2}+1,\lambda_B+2}). \]
\end{lemma}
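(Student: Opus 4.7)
The plan is to reduce $\Mean_{n,\eps,\lambda_A,\lambda_B}$ to $\Comp_{n,\eps,\lambda_A/2+1,\lambda_B+2}$ by a local transformation. Given $f_A,f_B:[0,1]^n\to[0,1]^n$, each player will construct (with no communication) one of the functions $g_A:[0,1]^n\to[0,1]^{2n}$ and $g_B:[0,1]^{2n}\to[0,1]^n$, defined by
\[ g_A(x)=(x,\,f_A(x)/2),\qquad g_B(u,v)=\mathrm{clip}\bigl(v+f_B(u)/2\bigr), \]
where $\mathrm{clip}:\R^n\to[0,1]^n$ is coordinatewise truncation onto $[0,1]$. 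Since the middle dimension is $m=2n=O(n)$, this is a legitimate instance of the Composition Brouwer problem, and the players then invoke the assumed protocol for $\Comp_{p,n,\eps,\lambda_A/2+1,\lambda_B+2}$ on $(g_A,g_B)$ and output its answer.

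For correctness, I would just observe that at $(u,v)=g_A(x)$ the inner expression is $v+f_B(u)/2=f_A(x)/2+f_B(x)/2=f_{\Mean}(x)$, which already lies in $[0,1]^n$, so the clip is inactive on the image of $g_A$ and $g_B\circ g_A\equiv f_{\Mean}$. Hence any $\eps$-approximate fixed point produced by the protocol for $(g_A,g_B)$ is automatically an $\eps$-approximate fixed point of $f_{\Mean}$.

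The main technical work is the Lipschitz analysis. For $g_A$, I would compute in the normalized $p$-norm on $\R^{2n}$:
\[ \|g_A(x)-g_A(y)\|_p^p=\tfrac{1}{2}\bigl(\|x-y\|_p^p+\|f_A(x)-f_A(y)\|_p^p/2^p\bigr)\le\tfrac{1}{2}\bigl(1+(\lambda_A/2)^p\bigr)\|x-y\|_p^p, \]
and then appeal to the elementary inequality $(1+c^p)^{1/p}\le 1+c$ (valid for $c\ge 0$, $p\ge 1$, with the analogous $\max(1,c)\le 1+c$ at $p=\infty$) to conclude that the Lipschitz constant of $g_A$ is at most $1+\lambda_A/2$. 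For $g_B$, using that $\mathrm{clip}$ is $1$-Lipschitz in any $\ell_p$-norm together with the triangle inequality, I get
\[ \|g_B(u,v)-g_B(u',v')\|_p\;\le\;\|v-v'\|_p+(\lambda_B/2)\,\|u-u'\|_p, \]
and Proposition~\ref{prop:inequality} bounds each of the two terms on the right by $2^{1/p}\|(u,v)-(u',v')\|_p$, so the Lipschitz constant of $g_B$ is at most $2^{1/p}(1+\lambda_B/2)\le 2(1+\lambda_B/2)=\lambda_B+2$. The only delicate design choice is the factor $1/2$ in front of $f_A$ inside $g_A$: without it, the unscaled $(x,f_A(x))$ has $\ell_\infty$-Lipschitz constant $\max(1,\lambda_A)$, which exceeds $\lambda_A/2+1$ as soon as $\lambda_A>2$. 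This scaling pushes an unavoidable factor into $g_B$, and I expect the tightness of the estimate $2^{1/p}\le 2$ at $p=1$ to be the main constraint that dictates the additive $+2$ in $\lambda_B+2$.
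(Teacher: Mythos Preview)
Your proposal is correct and follows essentially the same construction as the paper: the paper defines $g_A(x)=(\tfrac12 f_A(x),x)$ and $g_B(x_1,x_2)=x_1+\tfrac12 f_B(x_2)$, which is your reduction up to the order of the two blocks, and proves the Lipschitz bounds via Propositions~\ref{prop:inequality0} and~\ref{prop:inequality} in the same way. Your explicit use of $\mathrm{clip}$ is in fact a small improvement, since the paper's $g_B$ as written can take values outside $[0,1]^n$ off the image of $g_A$.
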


\begin{proof}
Player $A$ gets $f_A:[0,1]^n\rightarrow [0,1]^n$.
Player $B$ gets $f_B:[0,1]^n\rightarrow [0,1]^n$.
Define \mbox{${g}_A:[0,1]^n\rightarrow [0,1]^{2n}$} for every $x\in [0,1]^n$ as
\[{g}_A(x) = \left( \frac{1}{2}\cdot f_A(x),x\right) ,\]
and define ${g}_B:[0,1]^{2n}\rightarrow [0,1]^n$ for every $x_1,x_2\in [0,1]^{n}$ as
\[{g}_B(x_1,x_2) =  x_1 + \frac{1}{2}\cdot f_B(x_2). \]

We now show that if the Lipschitz constant of $f_A$ is $\lambda_A$ then $g_A$ is at most \mbox{$\left(\frac{\lambda_A}{2}+1\right)$-Lipschitz}.
\begin{align*}
\|g_A(x)-g_A(y)\|&= \left\|\left(\frac{1}{2}\cdot\left(f_A(x)-f_A(y)\right),x-y\right)\right\|\\ 
&\le \left(\frac{1}{2}\cdot\left\|f_A(x)-f_A(y)\right\|\right)+\left\|x-y\right\|\\
&\le \left(\frac{\lambda_A}{2}+1\right)\cdot \left\|x-y\right\| ,
\end{align*}
where we used Proposition~\ref{prop:inequality0} in the first inequality.

\begin{sloppypar}Similarly, we show that if the Lipschitz constant of $f_B$ is $\lambda_B$ then $g_B$ is at most \mbox{$\left(\frac{\lambda_B}{2}+1\right)$-Lipschitz}.
\begin{align*}
\|g_B(x_1,x_2)-g_B(y_1,y_2)\|&= \left\|x_1 -y_1 + \frac{1}{2}\cdot \left(f_B(x_2) - f_B(y_2)\right)\right\|\\
&\le \left\|x_1 -y_1\right\| + \left\|\frac{1}{2}\cdot \left(f_B(x_2) - f_B(y_2)\right)\right\|\\
&\le \left\|x_1 -y_1\right\| + \frac{\lambda_B}{2}\cdot\left\| x_2-y_2\right\|\\
&\le 2\cdot \left\|x_1 -y_1,x_2-y_2\right\| + \lambda_B\cdot\left\| x_1-y_1,x_2-y_2\right\|\\
&\le \left(\lambda_B+2\right)\cdot \left\|x_1 -y_1,x_2-y_2\right\|. 
\end{align*}\end{sloppypar}

Finally, notice that for every $x\in[0,1]^n$ we have $f_{\Mean}(x)=g_{\Comp}(x)$, and the lemma follows.
\end{proof}

\begin{lemma}
Let $n\in\N$  and $\lambda_A,\lambda_B,\eps\ge 0$. It holds that
\[ \CC(\Comp_{n,\eps,\lambda_A,\lambda_B}) \leq \CC(\Concat_{O(n),O_{\lambda_B}(\eps),4(\lambda_A+1),4(\lambda_B+1)}). \]
\end{lemma}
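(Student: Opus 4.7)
The plan is to reduce Composition to Concatenation by ``doubling'' the state space so that each player is responsible for writing half of the fixed-point equation. After a padding step that makes $f_A$ and $f_B$ act between spaces of a common dimension $n'=\Theta(n)$ (addressed in the last paragraph), I take $N=4n'$ and view the Concatenation input as $(u_x,u_y,v_x,v_y)\in([0,1]^{n'})^{4}$, where $u=(u_x,u_y)$ and $v=(v_x,v_y)$ are the two $N/2$-dimensional halves. The key definitions are
\[
g_A(u,v):=(v_x,\;f_A(u_x))\qquad\text{and}\qquad g_B(u,v):=(f_B(v_y),\;u_y),
\]
both mapping $[0,1]^{N}\to[0,1]^{N/2}$. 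At an exact fixed point of $G:=(g_A,g_B)$ one reads off $u_x=v_x$, $u_y=f_A(u_x)$, $v_x=f_B(v_y)$, and $v_y=u_y$, which collapse to $u_x=f_B(f_A(u_x))$, as desired.

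For the Lipschitz bookkeeping, Proposition~\ref{prop:inequality0} gives
\[
\|g_A(u,v)-g_A(u',v')\|_p\le\|v_x-v'_x\|_p+\|f_A(u_x)-f_A(u'_x)\|_p\le\|v_x-v'_x\|_p+\lambda_A\|u_x-u'_x\|_p,
\]
and two nested applications of Proposition~\ref{prop:inequality} with $r=2$ (first from $(u,v)\in[0,1]^N$ to $u$ or $v$, then from there to the relevant $n'$-sized sub-block) yield $\|v_x-v'_x\|_p,\|u_x-u'_x\|_p\le 2^{2/p}\|(u,v)-(u',v')\|_p\le 4\|(u,v)-(u',v')\|_p$. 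Hence the Lipschitz constant of $g_A$ is at most $4(\lambda_A+1)$, and symmetrically at most $4(\lambda_B+1)$ for $g_B$.

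For the approximation transfer, take an $\eps'$-approximate fixed point $(u^\ast,v^\ast)$ of $G$. Applying Proposition~\ref{prop:inequality} to the partition of $G(u^\ast,v^\ast)-(u^\ast,v^\ast)\in[0,1]^{N}$ into four $n'$-sized blocks shows that each of $v_x^\ast-u_x^\ast$, $f_A(u_x^\ast)-u_y^\ast$, $f_B(v_y^\ast)-v_x^\ast$, and $u_y^\ast-v_y^\ast$ has $\ell_p$-norm at most $4^{1/p}\eps'\le 4\eps'$. A four-term triangle inequality then gives
\[
\|f_B(f_A(u_x^\ast))-u_x^\ast\|_p\le\lambda_B\|f_A(u_x^\ast)-u_y^\ast\|_p+\lambda_B\|u_y^\ast-v_y^\ast\|_p+\|f_B(v_y^\ast)-v_x^\ast\|_p+\|v_x^\ast-u_x^\ast\|_p,
\]
which is $O\bigl((\lambda_B+1)\eps'\bigr)$, so choosing $\eps'=\Theta(\eps/(\lambda_B+1))$, i.e.\ $O_{\lambda_B}(\eps)$, forces $u_x^\ast$ to be a genuine $\eps$-approximate fixed point of $f_B\circ f_A$.

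The main obstacle I expect is the initial padding needed to equalize the two domains $[0,1]^n$ and $[0,1]^m$ without blowing up Lipschitz in the normalized $\ell_p$-norm. A naive ``ignore the extra input coordinates'' extension of $f_B$ loses a factor $(n/m)^{1/p}$, which is tolerable only at $p=\infty$. I would instead pad the output of $f_A$ by repeating $f_A(x)$ several times (repetition leaves the normalized $p$-norm unchanged, preserving Lipschitzness exactly) and extend $f_B$ so that each original input coordinate becomes the average of the corresponding padded coordinate block (which is $1$-Lipschitz by Jensen's inequality for $t\mapsto|t|^p$). These paddings compose to the original composition on the matching sub-blocks, so approximate fixed points transfer back with no further loss and the ambient dimension remains $O(n)$.
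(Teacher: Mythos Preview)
Your proposal is correct and is essentially the same four-block construction as the paper's: double the ambient space, have each player output one ``copy'' block plus one ``apply my function'' block, and read off the Composition fixed-point equations from the Concatenation fixed point. The only real differences are cosmetic. First, where you cross-copy ($g_A$ writes $v_x$, $g_B$ writes $u_y$), the paper uses truly free dummy blocks: with input $(a,x_1,b,x_2)$ it sets $g_A=(a,f_A(x_2))$ and $g_B=(b,f_B(x_1))$, so $a,b$ are unconstrained at a fixed point while $x_1=f_A(x_2)$ and $x_2=f_B(x_1)$. Second, and more to the point of your final paragraph, the paper does \emph{not} equalize dimensions at all: it simply takes the four blocks to have sizes $n,m,n,m$ (total dimension $2(n+m)=O(n)$), so $g_A,g_B:[0,1]^{2(n+m)}\to[0,1]^{n+m}$ land in the Concatenation format directly. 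Your own construction works verbatim with unequal block sizes $u_x,v_x\in[0,1]^n$ and $u_y,v_y\in[0,1]^m$, so the padding step you flag as the main obstacle is unnecessary. (Your repetition/averaging idea is sound in the normalized $\ell_p$ norm, but getting $n'=\Theta(n)$ with exact repetition would require $n$ to be a multiple of $m$; dropping the equal-blocks requirement sidesteps this entirely.)
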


\begin{proof}
Player $A$ gets $f_A:[0,1]^n\rightarrow[0,1]^m$ and player $B$ gets $f_B:[0,1]^m\rightarrow[0,1]^n$, where $m=O(n)$.
Define $g_A,g_B:[0,1]^{2(n+m)}\rightarrow [0,1]^{n+m}$ for every $a,x_1\in [0,1]^n$ and $b,x_2\in [0,1]^m$ as
\[ g_A(a,x_1,b,x_2) = (a,f_A(x_2)) \text{    and    }  g_B(a,x_1,b,x_2) = (b,f_B(x_1)).  \]

Let $a_x,x_1,a_y,y_1\in [0,1]^n$ and $b_x,x_2,b_y,y_2\in [0,1]^m$ and denote $x=(a_x,x_1,b_x,x_2)$, $y=(a_y,y_1,b_y,y_2)$.
First, we check the Lipschitz constant of $g_A$:
\begin{align*}
\| g_A(x) - g_A(y) \| &= \|(a_x,f_A(x_2))-(a_y,f_A(y_2))\|\\
&\leq \|a_x-a_y\| + \lambda_A\|x_2-y_2\|\\
&\leq 4(\lambda_A+1)\cdot \|x-y\|,
\end{align*} 
where the last inequality follows from Proposition~\ref{prop:inequality0}.

Similarly, we check the Lipschitz constant of $g_B$:
\begin{align*}
\| g_B(x) - g_B(y) \| &= \| (b_x,f_B(x_1))-(b_y,f_B(y_1))\|\\
&\leq  \|b_x-b_y\| + \lambda_B\|x_1-y_1\| \\
&\leq 4(\lambda_B+1)\cdot \|x-y\|,\end{align*} 
where the last inequality follows from Proposition~\ref{prop:inequality0}.

Next, we check the approximation factor we get for $f_\Comp$:
Let $c$ be a constant larger than $\nicefrac{m}{n}$ and $\nicefrac{n}{m}$.
Assume $\|g_\Concat(x)-x\| \leq \eps$. Then 
\[ \|g_A(x)-(a_x,x_1)\| = \|(a,f_A(x_2)) - (a_x,x_1)\|\leq 2\eps \]
 and 
 \[ \|f_A(x_2) - x_1 \| \leq 2\eps\cdot\frac{n+m}{m} \leq 2\eps (1+c).  \]
Similarly,  $\|g_B(x)-(b_x,x_2)\|\leq 2\eps$.
\[ \|g_B(x)-(b_x,x_2)\| = \|(b,f_B(x_1)) - (b_x,x_2)\|\leq 2\eps \]
 and 
 \[ \|f_B(x_1) - x_2 \| \leq 2\eps\cdot\frac{n+m}{n} \leq 2\eps(1+c).  \]
We get that
\begin{align*}
\| f_B(f_A(x_2)) - x_2 \|   &\leq   \| f_B(f_A(x_2)) - f_B(x_1) \| + \| f_B(x_1) - x_2 \| \\
&\leq \| f_B(f_A(x_2)) - f_B(x_1) \| + 2\eps(1+c) \\
&\leq \lambda_B \| f_A(x_2) - x_1  \| + 2\eps(1+c) \\
&\leq  2\eps(1+c)(\lambda_B+1)    \qedhere
\end{align*}
\end{proof}

\section{Lower Bound on Brouwer Problems}\label{sec:lowerbound}

In this section, we show an exponential lower bound in the dimension on the three Brouwer problems introduced in Section~\ref{sec:problems} (i.e., a polynomial lower bound in the size of the inputs ).
We begin by introducing the Local Brouwer problem, and then recall the lower bound of \cite{BR16} for Local Brouwer problem, and finally prove Theorem~\ref{thm:main} (and consequently Corollary~\ref{cor:three_problems_lb}).

Let $n,r,N\in\N$. Let $\mathcal{P}(N,r)$ denote the set of all subsets of size $r$ over the universe $[N]$.  
Assume that every $(x,y)\in\{0,1\}^N\times\{0,1\}^N$ defines 
a function $f_{x,y}:[0,1]^n\rightarrow [0,1]^n$.
We say that a collection of functions $\left\{f_{x,y}\right\}_{x,y\in\{0,1\}^N}$ is \emph{$r$-local} if there exist functions 
$\L:[0,1]^n\to \mathcal{P}(N,r)$ and $f':\{0,1\}^{2r}\times[0,1]^n\rightarrow [0,1]^n$ such that  for all $(x,y)\in \{0,1\}^N\times\{0,1\}^N$ and $z\in[0,1]^n$, we have:  
\begin{align}
f_{x,y}(z) = f'(x\vert_{\L(z)},y\vert_{\L(z)},z).\label{eq:local}
\end{align}

Informally,  for every point $z$ in $[0,1]^n$, $z$ decides through some fixed function $\L$ (independent of $x$ and $y$), as to which bits of $x$ and $y$ are relevant to compute $f_{x,y}(z)$.

The following is the formal definition of the Local Brouwer problem for two players.
We denote its randomized communication complexity 
$\CC(\Local_{p,n,\eps,\lambda, r})$. 

\begin{definition}[Local Brouwer Problem]\label{def:local}
Let $p\in\norm$, $n,r,N\in\N$  such that $n=\Theta(\log N)$, $\lambda\ge 0$, $\eps \geq 0$. Let $\L:[0,1]^n\to \mathcal{P}(N,r)$
and $f':\{0,1\}^{2r}\times[0,1]^n\rightarrow [0,1]^n$, and $\left\{f_{x,y}\right\}_{x,y\in\{0,1\}^N}$ be a collection of functions that are $r$-local with respect to $\L$ and $f'$ and each function in the collection is $\lambda$-Lipschitz continuous.
The \emph{Local Brouwer problem}
for two players $A$ and $B$ is as follows.
Let $p,n,r,N,\lambda,\eps \geq 0$ and $\L,f'$ be publicly known parameters.
Player $A$ is given $x\in\{0,1\}^N$ as input, and $y\in\{0,1\}^N$ is given to player $B$ as input.
Their goal is to output $z\in [0,1]^n$ such that 
$ \|f_{x,y}(z) - z\|_p \leq \eps $.
\end{definition}
 
In the case of the reductions in \cite{BR16} and \cite{GR18}, $r$ is the size of the inputs of each player to the gadgets used in the simulation theorem. We are interested in the constant gadget size simulation theorem of \cite{GP14a} used in \cite{GR18}, but the embedding in the Euclidean norm described in \cite{BR16} (which is essentially the embedding given in \cite{R16}) suffices for us. We note here that for the max norm, we can use the embedding given in \cite{HPV89} (simplified in \cite{B16,R15}).  Thus we have,
 
\begin{theorem}[\cite{HPV89,R16,BR16,GR18}]\label{thm:BR17}
Let $p\in \{2,\infty\}$ and $n,N\in\mathbb{N}$ such that $n=\Theta(\log N)$.
There exist constants $\varepsilon_0>0,\lambda_0>1, r_0>0,$ and $\L:[0,1]^n\to \mathcal{P}(N,r_0)$
and $f':\{0,1\}^{2r_0}\times[0,1]^n\rightarrow [0,1]^n$ such that  the following holds 
\[\CC(\Local_{p,n,\varepsilon_0,\lambda_0,r_0})=2^{\Omega(n)}.\]
\end{theorem}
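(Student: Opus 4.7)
The plan is to prove Theorem~\ref{thm:BR17} by composing two ingredients already present in the literature: a randomized communication lower bound for the End-of-Line ($\EOL$) problem, and a locality-preserving embedding of $\EOL$ into a Brouwer function over $[0,1]^n$ with $n = \Theta(\log N)$. Both ingredients are sketched in the introduction's outline of the proof of the Nash lower bound, so the task here is to assemble Steps~1--3 of that outline into the clean $r$-local statement of Theorem~\ref{thm:BR17}.

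For the first ingredient, I would establish an $N^{\Omega(1)}$ randomized communication lower bound for $\EOL$ on a constant-degree host graph on $N$ vertices. Starting from the $\Omega(N)$ lower bound on the critical block sensitivity of $\EOL$ on the clique, I subdivide each clique vertex into a binary routing tree to obtain a $\tilde{\Omega}(\sqrt{N})$ critical block-sensitivity bound on a constant-degree graph. Applying the constant-gadget-size simulation theorem of \cite{GP14a} to this query lower bound yields an $N^{\Omega(1)}$ randomized communication lower bound for the gadgetized $\EOL$ problem, in which each vertex label depends on only a constant number of input bits per player, with the set of relevant bits determined by the vertex alone.

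For the second ingredient, I would invoke the embedding of $\EOL$ into a Brouwer function: the construction of \cite{R16,BR16} for $p=2$ and the construction of \cite{HPV89} (simplified in \cite{B16,R15}) for $p=\infty$. Such an embedding produces, for absolute constants $\lambda_0 > 1$ and $\varepsilon_0 > 0$, a $\lambda_0$-Lipschitz function $f_{x,y}:[0,1]^n\to[0,1]^n$ whose $\varepsilon_0$-approximate fixed points decode (with no further communication) into solutions of the underlying $\EOL$ instance. The crucial property is locality: the value $f_{x,y}(z)$ depends on $(x,y)$ only through a constant number $r_0$ of bits of each, with the set of relevant coordinates determined by $z$ alone via a public selector $\L(z)\in\mathcal{P}(N,r_0)$. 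This is precisely the structure required by Definition~\ref{def:local}.

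Composing the two ingredients yields the theorem: any randomized protocol for $\Local_{p,n,\varepsilon_0,\lambda_0,r_0}$ can be converted, at the same cost, into a randomized protocol for the gadgetized $\EOL$ problem from the first step, so the lower bound $N^{\Omega(1)} = 2^{\Omega(n)}$ transfers. The main obstacle is verifying that the embeddings of \cite{R16,HPV89} fit the $r$-local template cleanly: specifically, that the coordinate-selector $\L$ depends on $z$ alone, that it aligns with the gadget decomposition used to invoke \cite{GP14a}, and that $r_0$, $\lambda_0$, and $\varepsilon_0$ can indeed be chosen as absolute constants independent of $n$. These facts are implicit in the constructions of \cite{BR16} and \cite{GR18}; Theorem~\ref{thm:BR17} merely records them in the uniform form used downstream in this paper's reductions.
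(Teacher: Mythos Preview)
Your proposal is correct and follows essentially the same approach as the paper: the paper does not give a self-contained proof of Theorem~\ref{thm:BR17} in the main text but sketches precisely your steps (critical block sensitivity of $\EOL$ on the clique, degree reduction via binary trees, the constant-gadget simulation theorem of \cite{GP14a}, and the locality-preserving Brouwer embedding of \cite{R16,BR16} for $p=2$ and \cite{HPV89} for $p=\infty$) in the introduction and in Appendix~\ref{sec:Nash}. Your identification of the main verification point---that the embeddings fit the $r$-local template with absolute constants---also matches the paper's discussion surrounding the theorem statement.
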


The proof of Theorem~\ref{thm:main} follows from Theorem~\ref{thm:BR17}  above and the following theorem.  
Now we state and prove the main result of this paper.
  
\begin{theorem}\label{thm:loc2comp}
Let $p\in\{2,\infty\}$, $n,N,r\in\N$ such that $n=\Theta(\log N)$,  $\lambda\ge 1$, $\eps \geq 0$, let $\L:[0,1]^n\to \mathcal{P}(N,r)$
and $f':\{0,1\}^{2r}\times[0,1]^n\rightarrow [0,1]^n$ .
Then
\[ \CC(\Local_{p,n,\eps,\lambda,r}) \leq \CC(\Comp_{p,n,\eps,\lambda,(2^r+1)^{1/p}\cdot (\lambda+1)}).  \]
\end{theorem}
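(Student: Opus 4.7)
The plan is to give a non-interactive reduction: from inputs $x,y\in\{0,1\}^N$ to the Local Brouwer problem, players $A$ and $B$ build functions $f_A^x:[0,1]^n\to[0,1]^{n(2^r+1)}$ and $f_B^y:[0,1]^{n(2^r+1)}\to[0,1]^n$ for the Composition Brouwer problem such that $f_B^y\circ f_A^x = f_{x,y}$ pointwise. Any $\eps$-approximate fixed point of the composition is then an $\eps$-approximate fixed point of $f_{x,y}$, so a protocol for $\Comp$ gives one for $\Local$ with the same communication. I would take
\[
f_A^x(z):=\bigl(z,\ (f'(x|_{\L(z)},b',z))_{b'\in\{0,1\}^r}\bigr),
\]
i.e.\ the concatenation of the identity block $z$ with $2^r$ response blocks $w^{(b')}(z):=f'(x|_{\L(z)},b',z)$, and (on the image of $f_A^x$) set $f_B^y(z,(v^{(b')})_{b'}):=v^{(y|_{\L(z)})}$, extending globally via a Lipschitz-extension theorem.

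The main technical step is to verify that each block $w^{(b')}$ is $\lambda$-Lipschitz. On a single region $R_S:=\L^{-1}(S)$ the restriction of $w^{(b')}$ coincides with $f_{x,y^\ast}|_{R_S}$ for any $y^\ast$ with $y^\ast|_S=b'$, and is therefore $\lambda$-Lipschitz there. At a boundary point $z^\ast$ between two regions $R_S$ and $R_{S'}$, the hypothesis that $f_{x,y}$ is Lipschitz for \emph{every} $y$ (not merely the one player $B$ has) forces $f'(\cdot,\cdot,z^\ast)$ to lose dependence on the coordinates of its arguments that lie outside $S\cap S'$; this yields continuity of $w^{(b')}$ across boundaries. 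A straight-line decomposition argument (breaking $\|z_1-z_2\|_p$ into pieces lying inside single regions and using continuity across the finitely many crossings) promotes region-wise Lipschitzness plus boundary continuity to a global $\lambda$-Lipschitz bound. Given that, the normalized-$p$-norm identity underlying the proof of Proposition~\ref{prop:concat} gives that $f_A^x$ itself is $\lambda$-Lipschitz, since $\lambda\ge 1$.

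For the Lipschitz constant of $f_B^y$, note that on the image of $f_A^x$ we have $f_B^y(f_A^x(z))=f_{x,y}(z)$; combined with the $\lambda$-Lipschitzness of $f_{x,y}$ and Proposition~\ref{prop:inequality} applied to the identity block of $f_A^x$ (giving $\|z_1-z_2\|_p\le(2^r+1)^{1/p}\|f_A^x(z_1)-f_A^x(z_2)\|_p$), we get that $f_B^y$ is $(2^r+1)^{1/p}\lambda$-Lipschitz on the image. I would then use Kirszbraun's theorem for $p=2$ and a coordinatewise McShane extension for $p=\infty$, followed by a harmless truncation into $[0,1]^n$, to extend $f_B^y$ globally on $[0,1]^{n(2^r+1)}$; the extra ``$+1$'' in the stated bound $(2^r+1)^{1/p}(\lambda+1)$ absorbs the slack produced by the truncation step. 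Since $f_B^y\circ f_A^x = f_{x,y}$ pointwise, $\eps$-approximate fixed points transfer, completing the reduction.

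The main obstacle is the boundary-continuity step for $w^{(b')}$: it is \emph{not} true that for an arbitrary $(\L,f')$ the function $f'(b,b',\cdot)$ is Lipschitz for each fixed $(b,b')$, so one cannot simply read off the Lipschitz constants of the blocks from Lipschitzness of a single $f_{x,y}$; the argument really does need the full ``Lipschitz-for-every-$y$'' hypothesis baked into the Local Brouwer problem to pin down $f'$'s behavior at the $\L$-boundary points enough for the path-decomposition to close. Once that step is in hand, the rest is normed-space bookkeeping and a standard Lipschitz extension.
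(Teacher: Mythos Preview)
Your construction of $f_A^x$ and the selector formula for $f_B^y$ are exactly the paper's, and the identity $f_B^y\circ f_A^x=f_{x,y}$ is the right correctness statement. The gap is in how you build $f_B^y$. You define the selector only on the image of $f_A^x$, verify Lipschitzness there, and then invoke Kirszbraun/McShane. But the image of $f_A^x$ is $\mathcal{C}_x:=\{(z,(f'(x|_{\L(z)},b',z))_{b'}):z\in[0,1]^n\}$, which depends on $x$; player $B$ does not know $x$ and so cannot carry out a Lipschitz extension from that set. Your Lipschitz argument, which routes through $f_{x,y}$ for the single $x$ held by $A$, likewise only controls pairs of points inside $\mathcal{C}_x$.

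The paper fixes this by defining the selector $g_B$ on the \emph{public} set $\mathcal{C}:=\bigcup_{t\in\{0,1\}^N}\mathcal{C}_t$ and proving the Lipschitz bound on $\mathcal{C}$, i.e.\ between points coming from possibly different $t,t'$. That cross-$t$ estimate is where the extra work (and the ``$+1$'') actually comes from: one splits
\[
\|f'(t|_{\L(z)},y|_{\L(z)},z)-f'(t'|_{\L(z')},y|_{\L(z')},z')\|
\le \underbrace{\|f'(t|_{\L(z)},\beta_i,z)-f'(t'|_{\L(z)},\beta_i,z)\|}_{\text{a block of the full difference}}+\underbrace{\lambda\|z-z'\|}_{\text{uses }f_{t',y}},
\]
with $\beta_i=y|_{\L(z)}$, and then bounds each piece by $(2^r+1)^{1/p}$ times the full distance via Proposition~\ref{prop:inequality}. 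Your explanation that the ``$+1$'' absorbs truncation slack is not right: coordinatewise clamping into $[0,1]^n$ is $1$-Lipschitz and Kirszbraun/McShane are isometric for the constant, so there is no slack there; the ``$+1$'' genuinely arises from the cross-$t$ triangle inequality. Once you move the domain of $g_B$ from $\mathcal{C}_x$ to $\mathcal{C}$ and supply that estimate, your outline coincides with the paper's proof.
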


\begin{proof}
Let $p,n,r,\lambda, \L,f'$, and $\eps \geq 0$ be publicly known parameters.
Player $A$ gets $x$ and player $B$ gets $y$.
Given $x$, player $A$ defines the function $f_A:[0,1]^n\rightarrow [0,1]^{n(2^r+1)}$
for every $z\in[0,1]^n$ as
\[ f_A(z) = \left(f'(x\vert_{\L(z)},\beta_1,z),f'(x\vert_{\L(z)},\beta_2,z),\ldots,f'(x\vert_{\L(z)},\beta_{2^r},z),z\right), \]
where $\beta_1,\beta_2,\ldots,\beta_{2^r}$ is the enumeration of the elements of  $\{0,1\}^r$ in some canonical ordering.

Given $y$, player $B$ defines the function $f_B:[0,1]^{n(2^r+1)}\rightarrow [0,1]^n$. Before describing $f_B$, we define  $\mathcal{C}_t\subseteq [0,1]^{n(2^r+1)}$ for every $t\in\{0,1\}^N$ as
$$
\mathcal{C}_t=\left\{\left(f'(t\vert_{\L(z)},\beta_1,z),f'(t\vert_{\L(z)},\beta_2,z),\ldots,f'(t\vert_{\L(z)},\beta_{2^r},z),z\right)\mid \ z\in [0,1]^{n}\right\}.
$$

Define $\mathcal{C}\subseteq [0,1]^{n(2^r+1)}$ as 
$$
\mathcal{C}=\underset{t\in\{0,1\}^N}{\bigcup}\ \mathcal{C}_t.
$$

Player $B$ first defines a function $g_B:\mathcal{C}\rightarrow [0,1]^n$ as follows.
For every $(w_1,w_2,\ldots,w_{2^r},z)\in\mathcal{C}$ define 
$ g_B((w_1,w_2,\ldots,w_{2^r},z)) = w_i $, where $i$ is the index (according to the fixed ordering of elements of $\{0,1\}^r$) such that $\beta_i=y\vert_{\L(z)}$. Finally, we define ${f}_B:[0,1]^{n(2^r+1)}\rightarrow [0,1]^n$ using Lemmas~\ref{lem:extensionMax}~and~\ref{lem:extensionEuclid} as an extension of $g_B$ to $[0,1]^{n(2^r+1)}$.

Notice that the range of $f_A$ is contained in $\mathcal{C}$ (in fact in $\mathcal{C}_x$), and therefore $f_B\circ f_A=g_B\circ f_A$.
Then for every $z\in[0,1]^n$,
\begin{align*}
f_B(f_A(z))=g_B(f_A(z)) &= g_B\left(f'(x\vert_{\L(z)},\beta_1,z),f'(x\vert_{\L(z)},\beta_2,z),\ldots,f'(x\vert_{\L(z)},\beta_{2^r},z),z\right)\\
&= f'(x\vert_{\L(z)},y\vert_{\L(z)},z) = f_{x,y}(z) .
\end{align*}
Hence the composed function $f_B(f_A(\cdot))$ and $f_{x,y}$ have the same Lipschitz constant 
and the same approximate fixed points over $[0,1]^n$.
\allowdisplaybreaks

All that is left to prove are bounds on the Lipschitz constants of $f_A$ and $f_B$. Below we show that $f_A$ is $\lambda$-Lipschitz.
\begin{align*}
\left\|f_A(z)-f_A(z')\right\|_p^p&=\|\left(f'(x\vert_{\L(z)},\beta_1,z),f'(x\vert_{\L(z)},\beta_2,z),\ldots,f'(x\vert_{\L(z)},\beta_{2^r},z),z\right)\\
&\phantom{jjksdh}-\left(f'(x\vert_{\L(z')},\beta_1,z'),f'(x\vert_{\L(z')},\beta_2,z'),\ldots,f'(x\vert_{\L(z')},\beta_{2^r},z'),z'\right)\|_p^p\\
&= \|\left(f'(x\vert_{\L(z)},\beta_1,z)-f'(x\vert_{\L(z')},\beta_1,z'),f'(x\vert_{\L(z)},\beta_2,z)\right.\\
&\phantom{hk}\left.-f'(x\vert_{\L(z')},\beta_2,z'),\ldots ,f'(x\vert_{\L(z)},\beta_{2^r},z)-f'(x\vert_{\L(z')},\beta_{2^r},z'),z-z'\right)\|_p^p\\
&= 	\frac{1}{2^r+1}\cdot \left(\|f'(x\vert_{\L(z)},\beta_1,z)-f'(x\vert_{\L(z')},\beta_1,z')\|_p^p\right.\\
&\phantom{sfjlsjfldflkjsfls}\left.+\|f'(x\vert_{\L(z)},\beta_2,z)-f'(x\vert_{\L(z')},\beta_2,z')\|_p^p+\cdots\right.\\
&\phantom{sdkkkljlkjljhk}\left.\cdots +\|f'(x\vert_{\L(z)},\beta_{2^r},z)-f'(x\vert_{\L(z')},\beta_{2^r},z')\|_p^p+\|z-z'\|_p^p\right)\\
&\le \frac{2^r\lambda^p+1}{2^r+1}\cdot \|z-z'\|_p^p\\
&\le \lambda^p\cdot \|z-z'\|_p^p.
\end{align*}

Finally, we show below that $g_B$ is $O_r(\lambda)$-Lipschitz. This implies that $f_B$ is $O_r(\lambda)$-Lipschitz.
\begin{align*}
&\|g_B\left(f'(t\vert_{\L(z)},\beta_1,z),f'(t\vert_{\L(z)},\beta_2,z),\ldots,f'(t\vert_{\L(z)},\beta_{2^r},z),z\right)\\
&\phantom{jdsfksdh}-g_B\left(f'(t'\vert_{\L(z')},\beta_1,z'),f'(t'\vert_{\L(z')},\beta_2,z'),\ldots,f'(t'\vert_{\L(z')},\beta_{2^r},z'),z'\right)\|_p\\
&=\left\|f'(t\vert_{\L(z)},y\vert_{\L(z)},z)-f'(t'\vert_{\L(z')},y\vert_{\L(z')},z')\right\|_p \\
&\le \left\|f'(t\vert_{\L(z)},y\vert_{\L(z)},z)-f'(t'\vert_{\L(z)},y\vert_{\L(z)},z)\right\|_p  \\
&\phantom{jksdnfjks}+ \left\|f'(t'\vert_{\L(z)},y\vert_{\L(z)},z)-f'(t'\vert_{\L(z')},y\vert_{\L(z')},z')\right\|_p\\
&\le \left\|f'(t\vert_{\L(z)},\beta_i,z)-f'(t'\vert_{\L(z)},\beta_i,z)\right\|_p + \lambda\left\|z-z'\right\|_p\ \ \ \ \ \ \text{(where $\beta_i:=y\vert_{\L(z)}$)}\\
&\le (\lambda+1)\cdot \left\|f'(t\vert_{\L(z)},\beta_i,z)-f'(t'\vert_{\L(z)},\beta_i,z)\right\|_p + \left(\lambda+1\right)\cdot\left\|z-z'\right\|_p\\
&\le (2^r+1)^{1/p}\cdot (\lambda+1)\cdot \|\left(f'(t\vert_{\L(z)},\beta_1,z),f'(t\vert_{\L(z)},\beta_2,z),\ldots,f'(t\vert_{\L(z)},\beta_{2^r},z),z\right)\\
&\phantom{jdsfksdh}-\left(f'(t'\vert_{\L(z')},\beta_1,z'),f'(t'\vert_{\L(z')},\beta_2,z'),\ldots,f'(t'\vert_{\L(z')},\beta_{2^r},z'),z'\right)\|_p,
\end{align*}
where the last inequality follows from Proposition~\ref{prop:inequality}.
\end{proof}

Corollary~\ref{cor:three_problems_lb} follows from Theorems~\ref{thm:three_problems} and \ref{thm:main}, as the Lipschitz constants of $f_A$ and $f_B$ in the proof above are $O(\lambda)$.
Finally, we note that it might be possible to extend the embedding given in \cite{R16} to all $\ell_p$-norms (in a straightforward manner), in which case if we have extension theorem for the domain $\mathcal{C}$ in the above proof for other $\ell_p$-norms (see related discussion in Section~\ref{sec:total}) then we would obtain the lower bound in Theorem~\ref{thm:main} for all $\ell_p$ norms.

An interesting open problem is to extend our lower bounds to the multiparty communication model. More formally, consider the $k$-party Composition Brouwer problem (denoted by $k-\Comp_{p,n,\varepsilon,\lambda_1,\ldots ,\lambda_k}$),  where for every $i\in[k]$, Player $i$ gets  a $\lambda_i$-Lipschitz function $f_i:[0,1]^n\rightarrow [0,1]^n$  and their goal is to output $x\in [0,1]^n$ such that 
$ \|f_{k}(f_{k-1}(\cdots f_1(x)\cdots )) - x\|_p \leq \eps $.
Naively, we can simply compute the value of the composed function on a dense enough grid of $[0,1]^n$ and obtain a protocol for $k-\Comp_{p,n,\varepsilon,\lambda_1,\ldots ,\lambda_k}$ with $\lambda^{O(nk)}$ bits of communication, where $\lambda=\underset{i\in[k]}{\max}\ \lambda_i$ and $\varepsilon$ is some small constant (similar to the proof of Lemma~\ref{lem:upper}). 
 Following the proof of Theorem~\ref{thm:main} it is also easy to show that $\CC(k-\Comp_{p,n,\varepsilon,\lambda_1,\ldots ,\lambda_k})\ge 2^{\Omega(n)} + k$. Can we obtain stronger lower bounds for this problem?

\begin{open}
What is the randomized communication complexity of $k-\Comp_{p,n,\varepsilon,\lambda_1,\ldots ,\lambda_k}$?
\end{open}

\section{Concatenation Sperner Problem}\label{sec:sperner}

We begin the section by formalizing the notion of a Sperner-coloring.

\begin{definition}[Sperner-coloring]
	A $(d+1)$-coloring $c$ of a triangulated, $d$-dimensional simplex $\Delta=\conv(v_0,\ldots,v_{d})$ is a \emph{Sperner-coloring} if $c(v_i)=i$ and every vertex $x$ gets the color of one of the vertices of the smallest face of $\Delta$ that contains $x$.  
\end{definition}

We say that a full-dimensional face of the triangulation (which is a small simplex) is \emph{panchromatic} if all its vertices have a different color, i.e., all the $d+1$ colors appear.
Sperner's Lemma asserts that in every Sperner-coloring there are an odd number of panchromatic simplices\footnote{For the discussion in this paper, we only use the fact that every Sperner-coloring implies the existence of at least one panchromatic simplex.}. For every color $i$, its color class is the set of all points in the triangulation that are colored $i$ by the Sperner-coloring.	

The natural Sperner problem we associate with Sperner's Lemma is given a Sperner-coloring of a fixed triangulation of  $\Delta$, find a panchromatic simplex. There are many interesting realizations of this Sperner problem in the communication model. In this paper, we consider the following two-player communication problem.

\begin{definition}[Concatenation Sperner Problem ($\Sp_{d,n}^t$)]
Let   $n,d,t\in\N$ such that $t\le d<n$.
The \emph{Concatenation Sperner problem} 
for two players $A$ and $B$ is as follows.
Let $n,d$, and a triangulation of $n$ points of the $d$-simplex labeled by $[n]$ be publicly known parameters.
Player $A$ gets $t$ disjoint subsets $C_0,\ldots ,C_{t-1} \subset [n]$ corresponding to the color classes of the first $t$ colors of the Sperner-coloring.
Player $B$ gets $d-t+1$ disjoint subsets $C_{t},\ldots ,C_{d} \subset [n]$ corresponding to the color class of the last $d-t+1$ colors of the Sperner-coloring.
Their goal is to output an $x\in C_0\times \cdots \times C_{d}$ that is a simplex in the triangulation (or show that any of the above conditions are not satisfied).
\end{definition}

We denote the randomized (resp.\ deterministic) communication complexity of this problem by $\CC(\Sp_{d,n}^t)$ (resp.\ $\Det(\Sp_{d,n}^t)$) for an $n$-vertex triangulation of a $d$-dimensional simplex where Player $A$ gets $t$ color classes and Player $B$ gets the remaining color classes. An easy observation when one of the players gets a single color class is stated below.

\begin{remark}\label{rem:sp0}
$\Det(\Sp^{d}_{d,n})=O(\log n)$
\end{remark} 

The above remark follows from the observation that if Player $A$ has all but one color, she knows that every vertex at Player $B$ has color $d+1$, thus she can alone output a panchromatic simplex.
We prove that the problem can be solved efficiently even if $A$ has all but two colors.
 
\begin{theorem}[Restating Theorem~\ref{thm:d4}]\label{thm:14}$\Det(\Sp^{d-1}_{d,n})=O(\log^2 n)$.
\end{theorem}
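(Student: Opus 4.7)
My plan is to exploit the structural observation that on certain $(d{-}1)$-faces of $\Delta$, Player $A$ can unilaterally reconstruct the full Sperner coloring. Specifically, consider the face $F := \conv(v_0, \ldots, v_{d-1})$ opposite $v_d$: every triangulation vertex of $F$ has color in $\{0, \ldots, d-1\}$, so within $S_B \cap F$ only color $d{-}1$ can appear. Player $A$ therefore knows the induced Sperner coloring of $F$ in its entirety, and by Sperner's lemma applied to $F$ she computes a panchromatic $(d{-}1)$-simplex $\sigma \subset F$ with colors $\{0, \ldots, d-1\}$ and sends it in $O(\log n)$ bits; this is exactly the $\Sp^{d-1}_{d-1,n}$ sub-instance handled by Remark~\ref{rem:sp0}. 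The symmetric face $F' := \conv(v_0, \ldots, v_{d-2}, v_d)$ plays the analogous role with colors $\{0, \ldots, d-2, d\}$, and will be used to anchor the ``other end'' of the search.

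Since $\sigma \subset \partial \Delta$, it is a face of a unique $d$-simplex $S_0 \in T$; let $w_0$ denote its extra vertex. If $c(w_0) = d$ (a fact Player $B$ can report in $O(1)$ bits), then $S_0$ is panchromatic and we output it. Otherwise $c(w_0) \in \{0, \ldots, d-1\}$, and $S_0$ admits a second $(d{-}1)$-face $\sigma_1$ colored $\{0, \ldots, d-1\}$, obtained by swapping the color-$c(w_0)$ vertex of $\sigma$ with $w_0$. Iterating this door-swap produces the canonical Sperner door-walk $\sigma, S_0, \sigma_1, S_1, \ldots$, which by the standard parity argument on the induced door graph terminates at a panchromatic $d$-simplex for an appropriate initial choice of $\sigma$ made in Step~1 (Player $A$ can single out such a starting door because her unilateral knowledge of the coloring on $F$ together with unlimited private computation lets her classify all boundary doors on $F$ by their walk's terminus-parity).

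The main obstacle, and the reason for the extra $\log n$ factor, is that the door-walk can have length $\Theta(n)$, so naive step-by-step simulation would cost $\Theta(n)$ bits. My plan is to binary-search the walk in $O(\log n)$ rounds, each costing $O(\log n)$ bits. At each round the players publicly select a candidate midpoint $d$-simplex $S_\mathrm{mid}$ from the (public) triangulation and invoke a single lower-dimensional Sperner sub-instance on a $(d{-}1)$-dimensional slab separating the two halves of the remaining walk, to decide whether the panchromatic terminus precedes or follows $S_\mathrm{mid}$. The slab is chosen so that its bounding face again has the property that one player holds all but one of the relevant colors on it, hence Remark~\ref{rem:sp0} resolves the sub-instance in $O(\log n)$ bits. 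Multiplying the $O(\log n)$ binary-search depth by the $O(\log n)$ cost per round yields the claimed $O(\log^2 n)$ bound. I expect the technical heart of the actual proof to lie precisely in specifying this midpoint slab and its auxiliary Sperner coloring so that (i) the sub-instance is a valid Sperner problem, (ii) one player retains almost all colors on the slab so Remark~\ref{rem:sp0} applies, and (iii) the oracle's answer faithfully identifies the correct half of the walk; everything else reduces to standard accounting.
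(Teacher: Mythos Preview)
Your high-level plan --- a door-walk plus binary search --- matches the paper's, but there is a genuine gap that the paper fills with an idea you are missing.

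The problem is that Player $A$ cannot carry out the door-walk you describe. With door-color-set $\{0,\ldots,d-1\}$, each step asks whether the newly exposed vertex $w$ has color $d$ (stop) or lies in $\{0,\ldots,d-1\}$ (continue through the other door). When $w\in S_B$ its color is either $d-1$ or $d$, and Player $A$ cannot tell which. Hence she can neither trace the walk nor ``classify all boundary doors on $F$ by their walk's terminus-parity'': the terminus of the walk from a given door on $F$ genuinely depends on Player $B$'s coloring of the interior of $\Delta$, not only on the coloring restricted to $F$. Since $F$ may carry $\Theta(n)$ boundary doors, of which only one is guaranteed by parity to lead to a panchromatic simplex, Player $A$ cannot pick a good start without communication. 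Your binary-search step is then left entirely unspecified (as you acknowledge), and the ``slab'' oracle you sketch would still be a sub-instance in which Player $B$ holds two colors, so Remark~\ref{rem:sp0} does not obviously apply.

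The paper's key idea removes all of these obstacles at once: relabel so that Player $B$ holds colors $0$ and $d$, then \emph{merge} them via the surplus coloring $c':=c\bmod d$. Now every vertex in $S_B$ has $c'$-color $0$, so Player $A$ knows $c'$ on all of $T$ and can build the entire door-graph $G$ for $c'$ by herself. A short lemma (the Surplus Sperner Lemma) shows that $G$ contains a path $P$ from the facet $F_0$ opposite $v_d$ to the facet $F_d$ opposite $v_0$; these are exactly your $F$ and $F'$ after the relabeling. Each edge of $P$ is a $c'$-panchromatic facet containing exactly one $S_B$-vertex; that vertex's \emph{original} color is forced to be $0$ at the $F_0$ end and $d$ at the $F_d$ end, and any internal simplex of $P$ where this label flips is panchromatic for $c$. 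The binary search is then immediate: Player $A$ names the $S_B$-vertex on the middle edge of $P$ ($O(\log n)$ bits) and Player $B$ returns its color (one bit), for $O(\log n)$ rounds and $O(\log^2 n)$ bits total. No slabs or recursive Sperner sub-instances are needed.
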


The above result is more interesting for small $d$.

\begin{corollary}
\label{cor:d4}	
For all $t\in \{1,\ldots ,d\}$, we have	 $\Det(\Sp_{d,n}^t)=O(\log^2 n)$ when $d\le 4$. 
\end{corollary}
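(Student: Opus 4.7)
The plan is to reduce every case covered by the corollary to either Remark~\ref{rem:sp0} or Theorem~\ref{thm:14} via two trivial observations: role-symmetry between the two players, and a counting argument that is special to $d \le 4$.

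First I would record the obvious symmetry $\Det(\Sp_{d,n}^t) = \Det(\Sp_{d,n}^{d+1-t})$, which holds because the two players play symmetric roles in the communication model: any protocol that works when player $A$ receives the first $t$ color classes and player $B$ receives the remaining $d+1-t$ color classes can be converted to a protocol for the complementary split simply by relabeling colors and swapping the identities of $A$ and $B$, with no change in communication cost. So without loss of generality I may assume $t \geq \lceil (d+1)/2\rceil$, i.e., that player $A$ holds at least half of the color classes.

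Next I would observe that for $d \le 4$, the bound $t \geq \lceil (d+1)/2 \rceil$ immediately implies $t \geq d-1$. Indeed, for $d=1,2$ we get $t \geq 1$ and $t \geq 2$ (which equals $d$ in both cases), and for $d=3,4$ we get $t \geq 2$ and $t \geq 3$ respectively, which are $\geq d-1$ in each case. Thus after the symmetry reduction only the two regimes $t=d$ and $t=d-1$ remain to handle.

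Finally, I would dispatch these two regimes by invoking the results already proved earlier in the section: the case $t=d$ is exactly Remark~\ref{rem:sp0}, giving an $O(\log n)$ deterministic protocol, and the case $t=d-1$ is exactly Theorem~\ref{thm:14} (restated as Theorem~\ref{thm:d4}), giving an $O(\log^2 n)$ deterministic protocol. Taking the worse of the two bounds yields $\Det(\Sp_{d,n}^t) = O(\log^2 n)$ for all $t \in \{1,\ldots,d\}$ when $d \le 4$, as claimed. There is no genuine obstacle here; the only thing to be careful about is to state the symmetry cleanly and to verify the elementary inequality $\lceil (d+1)/2 \rceil \geq d-1$ for $d \le 4$ (which fails for $d=5$, explaining why the corollary is limited to $d \le 4$).
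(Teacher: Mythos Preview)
Your proposal is correct and follows essentially the same approach as the paper: the paper's one-line proof invokes the symmetry $\Sp^{t}_{d,n}=\Sp^{d+1-t}_{d,n}$ and then appeals to Theorem~\ref{thm:14} and Remark~\ref{rem:sp0}, which is exactly your argument spelled out in more detail.
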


\begin{sloppypar}The proof of the above corollary follows from Theorem~\ref{thm:14} and Remark~\ref{rem:sp0} as $\Sp^{t}_{d,n}=~\Sp^{d+1-t}_{d,n}$. 
The heart of the proof of Theorem~\ref{thm:14} is the following variant of Sperner's Lemma.\end{sloppypar}

\begin{definition}[Surplus Sperner-coloring]
	A $d$-coloring $c$ of a triangulated, $d$-dimensional simplex $\Delta=\conv(v_0,\ldots,v_{d})$ is a \emph{surplus Sperner-coloring} if $c(v_i)=i$ for $i<d$ and $c(v_d)=0$, and every vertex $x$ gets the color of one of the vertices of the smallest face of $\Delta$ that contains $x$.
\end{definition}

Define a graph $G$ whose vertices are the full-dimensional faces (small simplices) of a surplus Sperner-colored triangulated $d$-simplex $\Delta$, and two vertices are connected by an edge if they share a panchromatic facet, i.e., a facet whose $d$ vertices contain all $d$ colors.
Denote the facet of $\Delta$ avoiding $v_d$ by $F_0$, and the facet avoiding $v_0$ by $F_d$.
Add two more vertices to the graph, $f_0$ and $f_d$, such that $f_0$ (resp.\ $f_d$) is connected to all small simplices with a panchromatic facet on $F_0$ (resp.\ $F_d$).
By applying the $(d-1)$-dimensional Sperner's Lemma to $F_0$ (resp.\ $F_d$), we can conclude that there are an odd number of small simplices with a panchromatic facet on $F_0$ (resp.\ $F_d$), thus the degrees of $f_0$ and $f_d$ are both odd.
But since $G$ consists of disjoint paths and cycles, this implies that there is a path between $f_0$ and $f_d$ 
(see Figure \ref{fig:surplussperner}).
Therefore we have proved the following.

\begin{lemma}[Surplus Sperner Lemma]
	There is a chain of small panchromatic simplices between $F_0$ and $F_d$ in any surplus Sperner-colored $d$-simplex, such that the neighboring simplices in the chain always share a panchromatic facet, and the two facets that fall on $F_0$ and $F_d$ are also panchromatic.
\end{lemma}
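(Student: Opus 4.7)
The plan is to build an auxiliary graph $G$ on the set of small $d$-simplices of the triangulation (together with two boundary vertices) and extract the desired chain as a path in $G$, following the end-of-line argument already outlined in the paragraph preceding the statement.

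\textbf{Local facet count.} First I would observe that any small $d$-simplex $\sigma$ has either $0$ or exactly $2$ panchromatic facets. Indeed, $\sigma$ has $d+1$ vertices but the surplus colouring uses only $d$ colours $\{0,\ldots,d-1\}$; so if $\sigma$ contains all $d$ colours then by pigeonhole exactly one colour repeats and every other appears once, and deleting either copy of the repeated colour yields a panchromatic facet. Next I would argue that a panchromatic facet of $\sigma$ lying on a facet $F_i=\conv(v_0,\ldots,\hat v_i,\ldots,v_d)$ of $\Delta$ forces $i\in\{0,d\}$: for $1\le i\le d-1$ the Sperner rule restricts every vertex on $F_i$ to a colour in $\{c(v_j):j\ne i\}$, and since $c(v_0)=c(v_d)=0$ in the surplus colouring this set is $\{0,\ldots,d-1\}\setminus\{i\}$ of size only $d-1$, too few for a $d$-vertex facet to be panchromatic. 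On $F_0=\conv(v_1,\ldots,v_d)$ the colour $0$ is restored through $v_d$, so all $d$ colours are available, and symmetrically for $F_d$.

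\textbf{Graph assembly and parity.} I then form $G$ exactly as described: interior edges between small simplices sharing a panchromatic facet, plus edges from $f_0$ (resp.\ $f_d$) to every simplex whose panchromatic facet lies on $F_0$ (resp.\ $F_d$). By the two observations above, each small simplex has degree $0$ or $2$ in $G$, so $G\setminus\{f_0,f_d\}$ is a disjoint union of simple paths and cycles. Applying the ordinary $(d-1)$-dimensional Sperner lemma to the induced colouring on $F_0$ (after the harmless relabelling of the corner colours $1,\ldots,d-1,0$ as the standard $0,1,\ldots,d-1$ pattern) and likewise on $F_d$, I conclude that both $f_0$ and $f_d$ have odd degree in $G$.

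\textbf{Extracting the chain.} Each edge incident to $f_0$ enters a maximal path of $G\setminus\{f_0,f_d\}$ that terminates at another neighbour of $f_0$ or at a neighbour of $f_d$. Pairing up the $f_0$-edges that loop back to $f_0$ leaves an odd number unpaired, so at least one such path ends at a neighbour of $f_d$; reading the small simplices along it together with its two boundary facets yields the required chain. The main point to get right is the boundary colour bookkeeping in the first step: the whole argument hinges on the fact that the extra ``surplus'' vertex $v_d$ places a second copy of colour $0$ precisely on $F_0$, which is exactly what makes the $(d-1)$-dimensional Sperner parity work on $F_0$ and $F_d$ while forbidding panchromatic facets on every other facet of $\Delta$.
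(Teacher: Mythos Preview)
Your proposal is correct and follows essentially the same end-of-line argument as the paper: build the auxiliary graph $G$ on small simplices with edges through shared panchromatic facets, attach $f_0,f_d$, invoke $(d-1)$-dimensional Sperner on $F_0$ and $F_d$ to get odd degrees there, and read off the chain from the path structure. You actually supply two details the paper leaves implicit---the pigeonhole count that every small simplex has $0$ or $2$ panchromatic facets, and the colour-availability argument ruling out panchromatic boundary facets on $F_i$ for $1\le i\le d-1$---both of which are needed to justify the paper's bare assertion that $G$ ``consists of disjoint paths and cycles.''
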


\begin{figure}[!h]
\begin{center}
	\includegraphics[width=\linewidth]{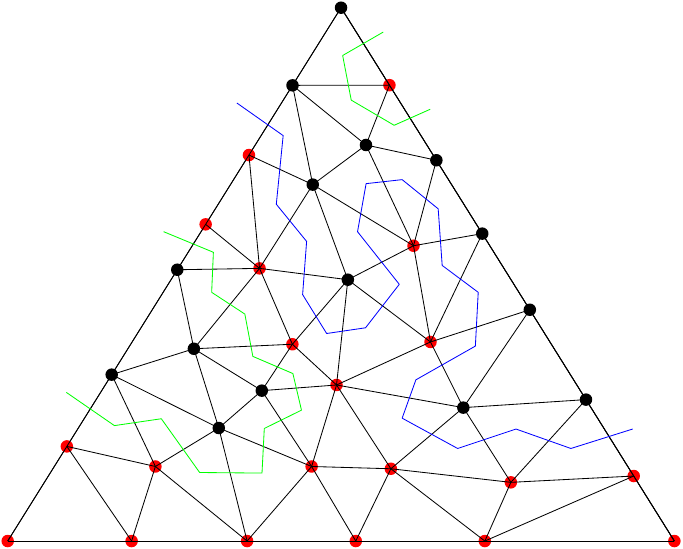}
	\caption{Depiction of a surplus Sperner-coloring of a triangle (i.e., $d=2$), where \emph{black} stands for color $1$ and \emph{red} for color $0$. The blue polygonal line marks a	chain of triangles between two sides, as guaranteed to exist by the Surplus Sperner Lemma. The green polygonal lines show other paths constructed in the proof.}
	\label{fig:surplussperner}
\end{center}
\end{figure}

Now it is easy to establish the proof of Theorem \ref{thm:14}.
\begin{proof}[Proof of Theorem~\ref{thm:14}]
The players follow the below protocol. 
With a relabeling, suppose that the two colors missing from $A$ are $0$ and $d$.
Define the surplus Sperner-coloring $c'$ as $c'=c \bmod d$, i.e., for all $i\in[n]$, we have $c'(i)=c(i)$, except when $c(i)=d$, in which case we set $c'(i)=0$.

\begin{enumerate}
\item Player $A$ builds the graph $G$ described above (with zero bits of communication).
\item Let $P=p_1\cdots p_r$ be the path in $G$ guaranteed by the Surplus Sperner Lemma. Player $A$ would like to label an edge in $P$ by 0 (resp.\ $d$) if the common facet between the two vertices in the path has color 0 (resp.\ $d$). Player $A$ labels the outgoing edge of $p_1$ in $P$ by 0 and the outgoing edge of $p_r$ in $P$ by $d$ with no communication. 
\item The players communicate by a binary search method until they find a vertex whose two outgoing edges are both labeled and have different labels. Such a vertex corresponds to a panchromatic simplex in the triangulation.
\end{enumerate}

It is clear that there are $\log r=O(\log n)$ rounds of communication and in each round there are $O(\log n)$ bits of communication.
\end{proof}

We can adopt the protocol from the proof of Theorem~\ref{thm:14} to obtain the following slightly stronger result.

\begin{corollary}\label{cor:3-Sp}
Let   $n\in\N$. Consider the three-party communication problem in the broadcast model of finding a \emph{panchromatic triangle of a concatenation Sperner-coloring} 
for three players $1$, $2$, and $3$.  
Let $n$, and a triangulation of $n$ points of the triangle labeled by $[n]$ be publicly known parameters.
For all $i\in [3]$,
Player $i$ gets a subset $C_{i-1} \subset [n]$ corresponding to the color class of color ${i-1}$ of the Sperner-coloring.
Their goal is to output an $x\in C_0\times C_1 \times C_2$ that is a triangle in the triangulation (or show that  their sets are not mutually disjoint and exhaustive). Then there is a deterministic protocol for this problem with $O(\log^2 n)$ bits of  communication.
\end{corollary}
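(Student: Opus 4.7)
The strategy is to adapt the two-player protocol from the proof of Theorem~\ref{thm:14} to the three-party broadcast setting by exploiting the asymmetry that Player~$2$ alone can reconstruct the \emph{surplus} Sperner-coloring. Concretely, take the surplus coloring $c'$ obtained by identifying colors $0$ and $2$, so that $c'(x)=1$ if $x\in C_1$ and $c'(x)=0$ otherwise. Player~$2$ knows $C_1$ and the publicly known triangulation, so she knows $c'$ on every vertex by exclusion (assuming the claimed partition $C_0\dot\cup C_1\dot\cup C_2=[n]$ is valid). She can therefore locally build the graph $G$ of the Surplus Sperner Lemma and single out the chain $P=p_1\cdots p_r$ of surplus-panchromatic small simplices between $F_0$ and $F_2$ without any communication.

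For each edge of $P$ (a shared triangulation edge between consecutive simplices in the chain, whose endpoints have surplus colors $0$ and $1$), the ``label'' used in Theorem~\ref{thm:14} is the original color of its surplus-$0$ endpoint $u$, i.e.\ whether $u\in C_0$ or $u\in C_2$. To obtain this label, Player~$2$ broadcasts the identity of the triangulation edge in $O(\log n)$ bits, implicitly flagging $u$; Player~$1$ and Player~$3$ each respond with a single bit indicating whether $u\in C_0$ and whether $u\in C_2$, respectively. Since the outgoing edge of $p_1$ in $P$ has label $0$ and that of $p_r$ has label $2$, a binary search along $P$ converges in $O(\log r)=O(\log n)$ rounds of $O(\log n)$ bits each to a vertex $p_j$ with two incident labels differing between $0$ and $2$; by the same analysis as in the proof of Theorem~\ref{thm:14}, $p_j$ is a triangle whose three vertices carry all colors $\{0,1,2\}$ in the original coloring, hence panchromatic.

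Validity of the supposed partition is checked en route: any query whose responses are not ``exactly one yes'' exhibits a vertex $u\notin C_1$ either in $C_0\cap C_2$ (both yes) or in none of $C_0\cup C_1\cup C_2$ (both no), which is the required certificate of failure. The total communication is $O(\log n)\cdot O(\log n)=O(\log^2 n)$, matching Theorem~\ref{thm:14}. The one conceptual step beyond the two-player proof is recognising that a single player holding a single color class still suffices to build $G$ and the chain $P$; I expect this to be the main obstacle in the sense that picking the right surplus identification (colors $0$ and $2$, so that Player~$2$'s information determines $c'$ everywhere) is what makes the broadcast version no harder than the two-player one, while the subsequent binary search is a direct transcription of the earlier protocol.
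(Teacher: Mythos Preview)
Your proposal is correct and is precisely the adaptation the paper has in mind: the paper states the corollary without an explicit proof, only remarking that the protocol from Theorem~\ref{thm:14} can be adopted, and what you have written is exactly that adaptation---Player~$2$ (holding the single middle color) plays the role of Player~$A$ in the two-player proof, building $G$ and the chain $P$ from the surplus coloring $c'=c\bmod 2$, while the label queries that Player~$B$ answered in the two-player version are now answered by Players~$1$ and~$3$ with one bit each. Your additional remark about detecting partition violations during the binary search is a small bonus not spelled out in the paper.
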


The above result should be compared with its counterpart in the query model \cite{CS98} and Turing machine model \cite{CD09} which are both intractable even for the planar case. We would like to highlight that the construction of the graph $G$ is not feasible in the query model as it requires a large number of queries and is not feasible in the Turing machine model as it requires exponential time. 

It is also worth exploring if the upper bound on $\Det(\Sp^{d-1}_{d,n})$ can be improved to $O(\log n)$, at least in the case where we allow randomized protocols. This is discussed further in Section~\ref{sec:KW}.

Next we show that in higher dimensions the Concatenation Sperner problem is at least as hard as the Composition Brouwer problem. The proof of Theorem~\ref{thm:SpLB} follows from the lower bound in Theorem~\ref{thm:loc2comp} obtained via Theorem~\ref{thm:BR17}.

\begin{theorem}\label{thm:concSp2compBr}
	$\CC\left(\Sp_{d+1,(\lambda/\varepsilon)^{2d}}^{d/2}\right)=\Omega(\CC(\Comp_{2,(d-1)/2,\varepsilon,\lambda,O(\lambda)}))=2^{\Omega(d)}$. 
\end{theorem}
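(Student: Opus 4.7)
The plan is to reduce Composition Brouwer to Concatenation Sperner; combined with Theorems~\ref{thm:loc2comp}~and~\ref{thm:BR17}, this will yield the $2^{\Omega(d)}$ lower bound. The starting point is a doubling trick: given $f_A, f_B : [0,1]^n \to [0,1]^n$ with $n = (d-1)/2$, define $F: [0,1]^{2n} \to [0,1]^{2n}$ by $F(x,y) = (f_B(y), f_A(x))$. Any fixed point $(x^*,y^*)$ of $F$ satisfies $x^* = f_B(y^*)$ and $y^* = f_A(x^*)$, so $x^*$ is a fixed point of $f_{\Comp} = f_B \circ f_A$. The crucial feature of $F$ is a clean split: its first $n$ output coordinates depend only on $y$ through $f_B$ (known to Player $B$), and its last $n$ depend only on $x$ through $f_A$ (known to Player $A$), with each piece $O(\lambda)$-Lipschitz.

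Next I would transport $F$ from the hypercube to the simplex. Using a bi-Lipschitz homeomorphism $\phi: [0,1]^{2n} \to \Delta^{d+1}$ (with a few padded coordinates to absorb the dimension mismatch), lift $F$ to a continuous self-map $\tilde F: \Delta^{d+1} \to \Delta^{d+1}$ with Lipschitz constant $O(\lambda)$. The embedding should be chosen so that the first $d/2$ barycentric coordinates are determined by the ``$x$''-part of the preimage (on which $f_A$ acts as Player $A$'s output) and the remaining ones by the ``$y$''-part. I would then take a triangulation of granularity $\delta \sim \varepsilon/\lambda$ whose vertex set has size at most $(\lambda/\varepsilon)^{2d}$; the extra exponent slack over the naive $(\lambda/\varepsilon)^{d+1}$ is available to absorb the $\sqrt{d}$ losses when converting between the $\ell_2$ norm on the hypercube and the simplex's native $\ell_1$-normalized barycentric structure.

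The Sperner coloring would be the standard one: color a vertex $v$ with barycentrics $(\mu_0,\ldots,\mu_{d+1})$ by some index $i$ satisfying $\tilde F(v)_i \leq \mu_i$, together with a public canonical tie-breaking rule. The bipartite structure of $F$ ensures that the condition ``$\tilde F(v)_i \leq \mu_i$'' for $i \in \{0,\ldots,d/2-1\}$ is computable from $f_A$ alone, while the same condition for $i \in \{d/2,\ldots,d+1\}$ is computable from $f_B$ alone; thus each vertex is assigned to exactly one color class known to exactly one player, yielding a valid $\Sp_{d+1,(\lambda/\varepsilon)^{2d}}^{d/2}$ instance with no initial communication. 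Sperner's lemma then guarantees a panchromatic simplex, and the standard averaging argument (all $d+2$ slacks $\tilde F(v)_i-\mu_i$ sum to zero and none is very positive inside a simplex of diameter $\delta$) yields a point within $\ell_2$-distance $\varepsilon$ of its $\tilde F$-image, which pulls back through $\phi$ to an $\varepsilon$-fixed point of $f_{\Comp}$.

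The main obstacle is engineering the embedding $\phi$ together with the tie-breaking rule so that the $(x,y)$-bipartition of $F$ aligns exactly with the $\{0,\ldots,d/2-1\}$ vs.\ $\{d/2,\ldots,d+1\}$ color split, and so that the simplex Sperner certificate translates back, without losing more than a constant factor in $\varepsilon$ or the Lipschitz constants, into a Euclidean $\varepsilon$-fixed point for the Composition Brouwer instance. These points are conceptually routine given Propositions~\ref{prop:comp}, \ref{prop:inequality0},~\ref{prop:inequality}, but the dimension/parameter bookkeeping (matching $n=(d-1)/2$ on the Brouwer side to a $(d+1)$-simplex with $d/2$ colors on Player $A$'s side) is the step that needs the most care.
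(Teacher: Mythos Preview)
There is a real gap at the step ``thus each vertex is assigned to exactly one color class known to exactly one player.'' In $\Sp_{d+1,n}^{d/2}$ each player must determine \emph{their own} vertex set $S_A$ (resp.\ $S_B$) from their input alone. Your rule colors $v$ by some $i$ with $\tilde F(v)_i\le\mu_i$ plus a public tie-break; but any tie-break that arbitrates between a valid $i$ in $A$'s range and a valid $j$ in $B$'s range forces one player to know whether the \emph{other} has a valid index---i.e., to evaluate the other player's function. If instead each player greedily claims every vertex at which they have a valid color, then near an approximate fixed point of $\tilde F$ (where $\tilde F(v)\approx v$ and essentially every index is valid) both players claim $v$; the Sperner protocol may then legally output such a $v$ as a witness that $S_A\dot\cup S_B\ne T$, and this output carries no fixed-point information. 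The failure concentrates exactly in the region you care about. A secondary obstruction is that a bi-Lipschitz $\phi:[0,1]^{d+1}\to\Delta^{d+1}$ cannot arrange for the first $d/2$ barycentric coordinates to depend only on one block of cube coordinates, since the barycentrics are globally coupled by $\sum_i\mu_i=1$.

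The paper sidesteps both issues by making the partition $S_A\dot\cup S_B$ \emph{geometric} and hence public: a hyperplane $H$ separates $\{v_0,\ldots,v_a\}$ from $\{v_{a+1},\ldots,v_d\}$ (with $a=b=(d-1)/2$), and the cross-section $H\cap\Delta^d$ is identified with $\Delta^a\times\Delta^b$. Vertices on $A$'s side of $H$ go to $S_A$, those on $B$'s side to $S_B$---no function evaluation needed to decide ownership. Player $A$ then colors her vertices via Lemma~\ref{ftocol}: write $f_A(q)-p=\sum_i\mu_i(v_i-o_A)$ as a conical combination and color by an index with $\mu_i=0$; this uses only $f_A$. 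Any panchromatic simplex must straddle $H$ and be simultaneously $A$-panchromatic (forcing $p\approx f_A(q)$) and $B$-panchromatic (forcing $q\approx f_B(p)$), which yields an approximate fixed point of $f_B\circ f_A$. Your doubling map $F(x,y)=(f_B(y),f_A(x))$ is morally the same product idea; what is missing is replacing the function-dependent tie-break by this geometric split of the vertex set.
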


Before we prove Theorem \ref{thm:concSp2compBr}, we sketch how the reduction in the other way would go (for a special instance of Sperner, described below).
We do this to give some intuition which will be helpful later in understanding the proof.

\begin{figure}
	\begin{center}
		\includegraphics[width=\linewidth]{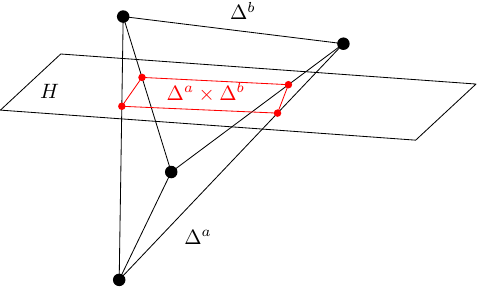}
		\caption{$\Delta^a \times \Delta^b$ as $H\cap \Delta^{a+b+1}$ for $a=b=1$.}
		\label{fig:tetra2square}
	\end{center}
\end{figure}

Let $a\le d$. Suppose that player $A$ holds the first $a+1$ color classes, which belong to $v_0,\ldots, v_{a}$ and player $B$ holds the remaining $b+1:=d-a$ color classes, which belong to $v_{a+1},\ldots ,v_d$.
We denote the convex hulls of these vertices by $\Delta^a$ and $\Delta^b$, respectively.
Consider the cross-section of $\Delta^d$ by a hyperplane $H$ that separates $\Delta^a$ from $\Delta^b$; see Figure \ref{fig:tetra2square}.
(We suppose that no vertex of the triangulation falls on $H$.)
Denote the halfspace that contains $\Delta^a$ as $H_A$ and the halfspace that contains $\Delta^b$ as $H_B$.
We suppose that all vertices of the triangulation in $H_A$ are colored with the first $a+1$ colors, and all vertices in $H_B$ are colored with the remaining $b+1$ colors.
This implies that every panchromatic simplex is intersected by $H$.
We say that a simplex is A-panchromatic, if it contains all of the first $a+1$ colors, and that a simplex is B-panchromatic, if it contains all of the remaining $b+1$ colors.
Thus, a simplex is panchromatic if it is both A-panchromatic and B-panchromatic; this can only happen for simplices that intersect $H$.
The points of the cross-section $\Delta^d \cap H$ are in a natural bijection with the points of $\Delta^a \times \Delta^b$, so from now on we will refer to each as a point $(p,q)\in \Delta^a \times \Delta^b$.

Now we show how they could solve this special case (i.e., when their colors are separated by $H$) using a protocol for the Composition Brouwer problem.
We extend the coloring from the vertices of the triangulation to all the points by coloring any point $p\in \Delta^d$ with a color of the vertices of the smallest face of the triangulation containing $p$ (keeping the rule that in $H_A$ everything is colored with $A$'s colors and in $H_B$ everything is colored with $B$'s colors).

For any point $q\in \Delta^b$, define a refinement of our given triangulation $\mathcal{T}$ by adding the intersection of the faces of $\mathcal{T}$ and the subspaces through $q$ and some of the vertices of $\Delta^a$ to the triangulation to obtain ${\mathcal{T}}(q)$.
Notice that the conditions of Sperner's Lemma hold for ${\mathcal{T}}(q)$, thus player $A$ knows an $f_A(q)\in \Delta^a$ for which $(f_A(q),q)\in \Delta^a \times \Delta^b$ is contained in an A-panchromatic simplex of ${\mathcal{T}}(q)$, and thus also in an A-panchromatic simplex of the original triangulation $\mathcal{T}$.
Similarly, for every $p\in \Delta^a$, $B$ knows a $f_B(p)\in \Delta^b$ for which $(p,f_B(p))\in \Delta^a \times \Delta^b$ is contained in a B-panchromatic simplex.
Therefore, the players hold two continuous functions, $f_A: \Delta^b \to \Delta^a$ and $f_B: \Delta^a \to \Delta^b$, respectively, with the above properties.
If these functions were continuous\footnote{These functions are not always continuous, partly because we did not insist on extending the coloring in a nice way, but more importantly because there inherently might be multiple solutions, i.e., $f_A(q)$ can take several values from $\Delta^a$. If we replaced continuity by demanding the graph of the relation $f_A$ to be a $b$-manifold in $\Delta^a \times \Delta^b$, then we could achieve the reverse direction of the reduction, but since we do not need this claim, we will not elaborate further.}, then using the protocol for Composition Brouwer problem, they could find a fixed point $q\in \Delta^b$ of $f_B\circ f_A$.
But then the simplex that contains $(f_A(q),q)$ is both A-panchromatic and B-panchromatic, thus panchromatic.

To prove Theorem \ref{thm:concSp2compBr}, we need the opposite of this argument, thus instead of simulating colorings by functions, we need to simulate functions by colorings.
This is captured by the following lemma.

\begin{lemma}\label{ftocol}
	 For any $\lambda$-Lipschitz function $f_A:\Delta^b\rightarrow \Delta^a$ in the Euclidean norm and a fine enough triangulation of $\Delta^d$, there is a coloring of the vertices in the triangulation in $H_A\subset \Delta^d$ as above such that if $(p,q)\in \Delta^a\times \Delta^b$ is in an A-panchromatic simplex, then $p$ is in a small neighborhood of $f_A(q)$.
\end{lemma}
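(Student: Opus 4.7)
The plan is to use the standard Sperner-encoding trick: color each vertex of $H_A$ by the coordinate in which the parameter $p$ is ``too large'' compared to $f_A(q)$. Concretely, any point $x\in H_A$ admits a unique decomposition $x = \alpha p + (1-\alpha)q$ with $p\in \Delta^a$, $q\in \Delta^b$, and $\alpha \in [\alpha^*,1]$; write $p = \sum_{i=0}^{a} p_i v_i$ and $f_A(q) = \sum_{i=0}^{a} f_i(q) v_i$ in barycentric coordinates. I would define
\[ c(x) \;=\; \arg\max_{0\le i\le a}\, \bigl(p_i - f_i(q)\bigr), \]
breaking ties by smallest index (with any fixed reference $q_0$ used in the degenerate case $\alpha = 1$). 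Since $\sum_i(p_i - f_i(q))=0$, the argmax is always nonnegative, so this is well-defined.

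Next I would verify that this coloring satisfies the Sperner boundary conditions. At the vertex $v_j$ ($j\le a$) one has $p=v_j$, so $p_j - f_j(q) = 1 - f_j(q)$ dominates every $p_i - f_i(q) = -f_i(q)$, hence $c(v_j)=j$. On any face $F_j = \{x : p_j = 0\}$ with $j\le a$, the value $p_j-f_j(q) = -f_j(q)\le 0$ cannot be the strict maximum (some other coordinate must be nonnegative), so $c(x)\ne j$, as required by Sperner.

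For the main claim, let $\sigma$ be an A-panchromatic simplex of diameter $\delta$ and let $(p_0,q_0)$ be the parameters of some point $x_0\in \sigma$ (in particular we may take $x_0\in \sigma \cap H$). For each color $i$ there is a vertex $x^{(i)}\in\sigma$ with $c(x^{(i)})=i$, i.e.\ $p^{(i)}_i \ge f_i(q^{(i)})$. Because the affine map $x\mapsto (p,q,\alpha)$ is $C$-Lipschitz on $H_A$ for a constant $C$ depending on $\alpha^*$ and the geometry of $\Delta^d$, we get $\|p^{(i)}-p_0\|_2 \le C\delta$ and $\|q^{(i)}-q_0\|_2 \le C\delta$. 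Combining with the $\lambda$-Lipschitz property of $f_A$,
\[ p_{0,i} \;\ge\; p^{(i)}_i - C\delta \;\ge\; f_i(q^{(i)}) - C\delta \;\ge\; f_i(q_0) - C(1+\lambda)\delta . \]
Summing over $i$ and using $\sum p_{0,i} = \sum f_i(q_0) = 1$ forces each individual discrepancy to lie in $[-C(1+\lambda)\delta, \, aC(1+\lambda)\delta]$, hence $\|p_0 - f_A(q_0)\|_2 = O(a\lambda\delta)$. Choosing the triangulation fine enough that $C(a+1)(1+\lambda)\delta$ is smaller than the prescribed neighborhood radius completes the argument.

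The main technical obstacle is cleanly handling the degenerate stratum $\alpha=1$ (where $q$ is undefined) and, more generally, ensuring the coloring on faces of $\Delta^d$ of the form $F_k$ with $k>a$ obeys the Sperner rule; this is dealt with by letting the cross-section $H$ avoid all vertices of the triangulation and by freezing $q$ to a fixed reference $q_0\in \Delta^b$ on the slab where $\alpha$ is very close to $1$. A secondary point to check is that the constant $C$ from the parameterization stays bounded away from $\infty$, which holds as long as $\alpha^*$ is a fixed constant bounded away from $0$ and $1$; this is why the lemma requires ``fine enough'' triangulation but imposes no other regularity.
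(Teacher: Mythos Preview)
Your approach is essentially the same as the paper's: encode the displacement $p-f_A(q)$ as a Sperner color, so that panchromaticity forces the displacement to be small in every barycentric direction. The paper phrases this via a conical combination $f_A(q)-p=\sum_i \mu_i(v_i-o_A)$ and colors by an index with $\mu_i=0$, which is equivalent to your $\arg\max_i(p_i-f_i(q))$ up to the change of basis; your summation argument (each $p_{0,i}-f_i(q_0)\ge -C(1+\lambda)\delta$, and they sum to zero) is in fact a bit cleaner than the paper's estimate.

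One execution difference worth noting: the paper defines the coloring only on the cross-section $H\cong\Delta^a\times\Delta^b$ and then pushes it to the nearby triangulation vertices (``color each vertex by any color that occurs in its simplex in $c_H$''), declaring everything far from $H$ arbitrary. This sidesteps your degeneracy at $\alpha=1$ entirely, since on $H$ the pair $(p,q)$ is always well-defined. Your workaround of freezing $q$ to a reference $q_0$ near $\alpha=1$ is fine, but the paper's restriction-then-extension is the tidier way to handle it.
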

\begin{proof}
	It is clearly enough to define our coloring on the vertices of the simplices that intersect $H$; the remaining vertices can be colored arbitrarily (respecting the boundary conditions required by the Sperner-coloring).
	
	First, we define an $(a+1)$-coloring $c_H$ on $\Delta^a \times \Delta^b$.
	The color of a point $(p,q) \in \Delta^a\times \Delta^b$ is defined as follows.
	Express $f_A(q)-p$ as a conical combination of the vertices of $\Delta^a$, i.e., $f_A(q)-p=\sum_{i=0}^a \mu_i (v_i-o_A)$, where $\mu_i\ge 0$ and $o_A$ denotes the center of $\Delta^a$.
	Note that in such a combination some $\mu_i=0$.
	Color $(p,q)$ with one such color, i.e., to a color $i$ whose coefficient $\mu_i=0$;	
	in case of $(f_A(q),q)$, color it arbitrarily (always respecting the boundary conditions required by the Sperner-coloring). 
	
	Suppose that all $a+1$ colors occur in an $\eps$-neighborhood of some $(p,q)$.
	Let $j=\arg\max \mu_i$.
	Since $(p,q)$'s color is not $j$, there is a $(p',q')$ for which in $f_A(q')-p'=\sum_{i=0}^a \mu_i' (v_i-o_A)$ we have $\mu_j'=0$.	
	We have $\|(f_A(q')-p')-(f_A(q)-p)\|_2\ge \|f_A(q')-f_A(q)\|_2 - \|p'-p\|_2 \ge \mu_j/d-\eps$, since $p'$ is in an $\eps$-neighborhood of $p$.
	But using that $f_A$ is $\lambda$-Lipschitz, we should have $\|f_A(q')-f_A(q)\|\le \lambda\|q'-q\|\le \lambda\eps$, since $q'$ is in an $\eps$-neighborhood of $q$.
	Putting these together, $\mu_j$ is small, thus $f_A(q)-p$ is small, thus $p$ is in a small neighborhood of $f_A(q)$ (where by small we mean that the volume of the neighborhood is $(\varepsilon/\lambda)^{O(d)}$).
	
	From $c_H$ we obtain a coloring $c_A$ of the vertices of the simplices that intersect $H$.
	Simply color each vertex to any color that occurs in it's simplex in $c_H$ (respecting the boundary conditions required by the Sperner-coloring).
	If a simplex is A-panchromatic, then all $a+1$ colors occur in it, thus for one of its points $(p,q)$, $p$ is in a small neighborhood of $f_A(q)$.
\end{proof}

\begin{proof}[Proof of Theorem \ref{thm:concSp2compBr}]
Starting from two functions $f_A: \Delta^b \to \Delta^a$ and $f_B: \Delta^a \to \Delta^b$ (where now $a=b=(d-1)/2$), we create a Sperner-coloring of $\Delta^d$.
It will have a hyperplane $H$ separating the colors of $A$ and $B$, as described above. 
Using Lemma \ref{ftocol}, we convert our continuous functionss $f_A: \Delta^b \to \Delta^a$ and $f_B: \Delta^a \to \Delta^b$ into Sperner-colorings.
Therefore, any protocol for the Concatenation Sperner problem for such special instances will also solve the Composition Brouwer problem. Finally note that we can easily extend the domain of the composition Brouwer functions from $[0,1]^{(d-1)/2}$ to $\Delta^{(d+1)/2}$ by taking a small simplex encapsulating the hypercube and defining the function on the region in the simplex outside the hypercube to stay within the hypercube, and therefore not creating any new fixed points, preserving the Lipschitz continuity, and decreasing the lower bound only by a polynomial factor. 
\end{proof}

Finally, we remark that it would be another, quite natural version of Sperner to study the following problem.
	We are given \emph{two} colorings, $c_A$ and $c_B$, of the vertices of a triangulation of $\Delta^a \times \Delta^b=\conv(v_i\mid i \in \{0,\ldots,a\}) \times \conv(v_i\mid i \in \{a+1,\ldots,a+b+1\})$.
	Player $A$ holds $c_A$, which is an $(a+1)$-coloring such that if $p\in \conv(v_i\mid i\in I \subset\{0,\ldots,a\})$, then $c_A(p,q)\in I$.
	Player $B$ holds $c_B$, which is a $(b+1)$-coloring such that if $q\in \conv(v_i\mid i\in I \subset\{a+1,\ldots,a+b+1\})$, then $c_B(p,q)\in I$.
	Their goal is to output a simplex that contains all $a+b+2$ colors.
	It can be easily seen from our proofs that our bounds also apply for this problem.

\section{Conclusion}

In this article, we showed that the three natural fixed point computation tasks in the communication model, namely, the Composition Brouwer problem, the Concatenation Brouwer problem, and the Mean Brouwer problem, are all equivalent up to polynomial factors (Theorem~\ref{thm:three_problems}). Moreover, we showed a lower bound of $2^{\Omega(n)}$ for the above three problems when the inputs to these problems are constant Lipschitz functions in $n$-dimensional space (Theorem~\ref{thm:main}). 
Finally, we initiated the study of finding a panchromatic simplex in a Sperner-coloring of a triangulation in the two-player communication model. 	 Rather surprisingly,  we showed that the Sperner problem  can be solved with small amount of communication when  the dimension is less than 5 (Theorem~\ref{thm:d4}), but on the other hand, we also showed strong lower bounds on the randomized communication complexity in high dimensions (Theorem~\ref{thm:SpLB}).

A natural research direction to pursue is if the lower bound in Theorem~\ref{thm:SpLB} can be extended to the case when $d$ is a constant. Note that if $d$ is a constant, the lower bound of $2^{\Omega(d)}$  shown in Theorem~\ref{thm:SpLB} is essentially trivial. Therefore, we ask:

\begin{open}
For a fixed constant $d>4$, is $\CC(\Sp_{d,n}^{\lfloor \nicefrac{d}{2}\rfloor})=n^{\Omega(1)}$?
\end{open}

A natural approach to try to prove the above result is to reduce the lower bound of Composition Brouwer problem in constant dimensions to the Concatenation Sperner problem using Theorem~\ref{thm:concSp2compBr}. However, we do not know of any non-trivial lower bounds for Composition Brouwer problem in constant dimensions. Therefore, we ask:

\begin{open}
For fixed constants $n>1,\lambda_A,\lambda_B\ge 1$, is $\CC(\Comp_{p,n,\varepsilon,\lambda_A,\lambda_B})=(\nicefrac{1}{\varepsilon})^{\Omega(1)}$?
\end{open}

We remark here that strong lower bounds have been known for the Brouwer problem in constant dimensions in the query model \cite{CD08,CT07,CD09} and we wonder if any of the embedding techniques developed in those works can be used to resolve the above open question.


\addcontentsline{toc}{section}{\protect\numberline{}Acknowledgements}%
\section*{Acknowledgements}
We thank Itai Benjamini, Suryateja Gavva, Assaf Naor, Aviad Rubinstein, Gideon Schechtman, Uli Wagner, Omri Weinstein, and Eylon Yogev for very helpful conversations and the anonymous reviewers for their valuable feedback on an earlier version of this manuscript. In particular, Aviad suggested to us to drop the discretization parameter from the definition of the Brouwer problems. 

Anat Ganor  received funding from the European Research Council (ERC) under the European Union’s Horizon 2020 research and innovation programme (grant agreement No 740282).  Karthik~C.~S.\  was supported by Irit Dinur's ERC-CoG grant 772839.  D{\"{o}}m{\"{o}}t{\"{o}}r P{\'{a}}lv{\"{o}}lgyi was supported by the Lend\"ulet program of the Hungarian Academy of Sciences (MTA), under grant number LP2017-19/2017.

\addcontentsline{toc}{section}{\protect\numberline{}References}%
\bibliographystyle{alpha}
\bibliography{refs}

\begin{thebibliography}{GPW17}

\bibitem[AH00]{AH00}
Ron Aharoni and Penny Haxell.
\newblock Hall's theorem for hypergraphs.
\newblock {\em J. Graph Theory}, 35(2):83--88, 2000.

\bibitem[Bab16]{B16}
Yakov Babichenko.
\newblock Query complexity of approximate {Nash} equilibria.
\newblock {\em J. {ACM}}, 63(4):36, 2016.

\bibitem[BR17]{BR16}
Yakov Babichenko and Aviad Rubinstein.
\newblock Communication complexity of approximate {Nash} equilibria.
\newblock In {\em Proceedings of the 49th Annual {ACM} {SIGACT} Symposium on
  Theory of Computing, {STOC} 2017, Montreal, QC, Canada, June 19-23, 2017},
  pages 878--889, 2017.

\bibitem[Bro12]{B12}
L.E.J. Brouwer.
\newblock \"{U}ber abbildung von mannigfaltigkeiten.
\newblock {\em Mathematische Annalen}, 71:97--115, 1912.

\bibitem[CD08]{CD08}
Xi~Chen and Xiaotie Deng.
\newblock Matching algorithmic bounds for finding a brouwer fixed point.
\newblock {\em J. {ACM}}, 55(3):13:1--13:26, 2008.

\bibitem[CD09]{CD09}
Xi~Chen and Xiaotie Deng.
\newblock On the complexity of 2d discrete fixed point problem.
\newblock {\em Theor. Comput. Sci.}, 410(44):4448--4456, 2009.

\bibitem[CDT09]{CDT09}
Xi~Chen, Xiaotie Deng, and Shang{-}Hua Teng.
\newblock Settling the complexity of computing two-player {Nash} equilibria.
\newblock {\em J. {ACM}}, 56(3), 2009.

\bibitem[CS98]{CS98}
Pierluigi Crescenzi and Riccardo Silvestri.
\newblock Sperner's lemma and robust machines.
\newblock {\em Computational Complexity}, 7(2):163--173, 1998.

\bibitem[CT07]{CT07}
Xi~Chen and Shang{-}Hua Teng.
\newblock Paths beyond local search: {A} tight bound for randomized fixed-point
  computation.
\newblock In {\em 48th Annual {IEEE} Symposium on Foundations of Computer
  Science {(FOCS} 2007), October 20-23, 2007, Providence, RI, USA,
  Proceedings}, pages 124--134, 2007.

\bibitem[Dan06]{D06}
Stefan~S. Dantchev.
\newblock On the complexity of the {Sperner} lemma.
\newblock In {\em Logical Approaches to Computational Barriers, Second
  Conference on Computability in Europe, CiE 2006, Swansea, UK, June 30-July 5,
  2006, Proceedings}, pages 115--124, 2006.

\bibitem[DKL19]{DKL19}
Roee David, {Karthik {C. S.}}, and Bundit Laekhanukit.
\newblock On the complexity of closest pair via polar-pair of point-sets.
\newblock {\em {SIAM} J. Discrete Math.}, 33(1):509--527, 2019.

\bibitem[ET76]{ET76}
Shimon Even and Robert~Endre Tarjan.
\newblock A combinatorial problem which is complete in polynomial space.
\newblock {\em J. {ACM}}, 23(4):710--719, 1976.

\bibitem[EY10]{EY10}
Kousha Etessami and Mihalis Yannakakis.
\newblock On the complexity of {Nash} equilibria and other fixed points.
\newblock {\em {SIAM} J. Comput.}, 39(6):2531--2597, 2010.

\bibitem[FISV09]{FISV09}
Katalin Friedl, G{\'{a}}bor Ivanyos, Miklos Santha, and Yves~F. Verhoeven.
\newblock On the black-box complexity of {Sperner's} lemma.
\newblock {\em Theory Comput. Syst.}, 45(3):629--646, 2009.

\bibitem[Gal79]{G79}
David Gale.
\newblock The game of hex and {Brouwer} fixed-point theorem.
\newblock {\em The American Mathematical Monthly}, 86(10):818--827, 1979.

\bibitem[GK18]{GK18}
Anat Ganor and {Karthik {C. S.}}
\newblock Communication complexity of correlated equilibrium with small
  support.
\newblock In {\em Approximation, Randomization, and Combinatorial Optimization.
  Algorithms and Techniques, {APPROX/RANDOM} 2018, August 20-22, 2018 -
  Princeton, NJ, {USA}}, pages 12:1--12:16, 2018.

\bibitem[GP14]{GP14a}
Mika G{\"{o}}{\"{o}}s and Toniann Pitassi.
\newblock Communication lower bounds via critical block sensitivity.
\newblock In {\em Symposium on Theory of Computing, {STOC} 2014, New York, NY,
  USA, May 31 - June 03, 2014}, pages 847--856, 2014.

\bibitem[GPW17]{GPW17}
Mika G{\"{o}}{\"{o}}s, Toniann Pitassi, and Thomas Watson.
\newblock Query-to-communication lifting for bpp.
\newblock {\em Electronic Colloquium on Computational Complexity {(ECCC)}},
  2017.

\bibitem[GR18]{GR18}
Mika G{\"{o}}{\"{o}}s and Aviad Rubinstein.
\newblock Near-optimal communication lower bounds for approximate {Nash}
  equilibria.
\newblock In {\em FOCS}, 2018.

\bibitem[Gri01]{G01}
Michelangelo Grigni.
\newblock A {Sperner} lemma complete for {PPA}.
\newblock {\em Inf. Process. Lett.}, 77(5-6):255--259, 2001.

\bibitem[Hax11]{H11}
Penny Haxell.
\newblock On forming committees.
\newblock {\em Am. Math. Mon.}, 118(9):777--788, 2011.

\bibitem[HN12]{HN12}
Trinh Huynh and Jakob Nordstr{\"{o}}m.
\newblock On the virtue of succinct proofs: amplifying communication complexity
  hardness to time-space trade-offs in proof complexity.
\newblock In Howard~J. Karloff and Toniann Pitassi, editors, {\em Proceedings
  of the 44th Symposium on Theory of Computing Conference, {STOC} 2012, New
  York, NY, USA, May 19 - 22, 2012}, pages 233--248. {ACM}, 2012.

\bibitem[HPV89]{HPV89}
Michael~D. Hirsch, Christos~H. Papadimitriou, and Stephen~A. Vavasis.
\newblock Exponential lower bounds for finding {Brouwer} fix points.
\newblock {\em J. Complexity}, 5(4):379--416, 1989.

\bibitem[II99]{II99}
Tatsuro Ichiishi and Adam Idzik.
\newblock Equitable allocation of divisible goods.
\newblock {\em Journal of Mathematical Economics}, 32(4):389 -- 400, 1999.

\bibitem[JST11]{JST11}
Hossein Jowhari, Mert Sa\u{g}lam, and G{\'{a}}bor Tardos.
\newblock Tight bounds for lp samplers, finding duplicates in streams, and
  related problems.
\newblock In {\em Proceedings of the 30th {ACM} {SIGMOD-SIGACT-SIGART}
  Symposium on Principles of Database Systems, {PODS} 2011, June 12-16, 2011,
  Athens, Greece}, pages 49--58, 2011.

\bibitem[Kir34]{K34}
M.~Kirszbraun.
\newblock {\"Uber} die zusammenziehende und {Lipschitzsche} {Transformationen}.
\newblock {\em Fundamenta Mathematicae}, 22(1):77--108, 1934.

\bibitem[Kla17]{K17}
Erica Klarreich.
\newblock In game theory, no clear path to equilibrium, July 2017.
\newblock
  http://www.quantamagazine.org/in-game-theory-no-clear-path-to-equilibrium-20170718/
  {[Online; posted 18-July-2017]}.

\bibitem[Mat07]{M07}
Jiri Matousek.
\newblock {\em Using the {Borsuk-Ulam} Theorem: Lectures on Topological Methods
  in Combinatorics and Geometry}.
\newblock Springer Publishing Company, Incorporated, 2007.

\bibitem[Mei17]{M17}
Or~Meir.
\newblock An efficient randomized protocol for every {Karchmer-Wigderson}
  relation with two rounds.
\newblock {\em Electronic Colloquium on Computational Complexity {(ECCC)}},
  24:129, 2017.

\bibitem[MM16]{MM16}
Konstantin Makarychev and Yury Makarychev.
\newblock Metric extension operators, vertex sparsifiers and {Lipschitz}
  extendability.
\newblock {\em Israel Journal of Mathematics}, 212(2):913--959, May 2016.

\bibitem[Nao01]{N01}
Assaf Naor.
\newblock A phase transition phenomenon between the isometric and isomorphic
  extension problems for {H\"older} functions between {$L_p$} spaces.
\newblock {\em Mathematika}, 48(1-2):253–271, 2001.

\bibitem[Nas51]{N51}
J.F. Nash.
\newblock Non-cooperative games.
\newblock {\em Annals of Mathematics}, 54(2):286--295, 1951.

\bibitem[Nas52]{N52}
John Nash.
\newblock Some games and machines for playing them.
\newblock {\em Rand Corp. technical report D-1164}, 1952.

\bibitem[Pap94]{P94}
Christos~H. Papadimitriou.
\newblock On the complexity of the parity argument and other inefficient proofs
  of existence.
\newblock {\em J. Comput. Syst. Sci.}, 48(3):498--532, 1994.

\bibitem[RM99]{RM99}
Ran Raz and Pierre McKenzie.
\newblock Separation of the monotone {NC} hierarchy.
\newblock {\em Comb.}, 19(3):403--435, 1999.

\bibitem[Rou17]{R18}
Tim Roughgarden.
\newblock Complexity theory, game theory, and economics.
\newblock {\em Bellairs Research Institute of McGill University, Holetown,
  Barbados}, February 2017.
\newblock http://eccc.weizmann.ac.il/report/2018/001/.

\bibitem[Rub15]{R15}
Aviad Rubinstein.
\newblock Inapproximability of {Nash} equilibrium.
\newblock In {\em Proceedings of the Forty-Seventh Annual {ACM} on Symposium on
  Theory of Computing, {STOC} 2015, Portland, OR, USA, June 14-17, 2015}, pages
  409--418, 2015.

\bibitem[Rub16]{R16}
Aviad Rubinstein.
\newblock Settling the complexity of computing approximate two-player {Nash}
  equilibria.
\newblock In {\em {IEEE} 57th Annual Symposium on Foundations of Computer
  Science, {FOCS} 2016, 9-11 October 2016, Hyatt Regency, New Brunswick, New
  Jersey, {USA}}, pages 258--265, 2016.

\bibitem[RW16]{RW16}
Tim Roughgarden and Omri Weinstein.
\newblock On the communication complexity of approximate fixed points.
\newblock In {\em {IEEE} 57th Annual Symposium on Foundations of Computer
  Science, {FOCS} 2016, 9-11 October 2016, Hyatt Regency, New Brunswick, New
  Jersey, {USA}}, pages 229--238, 2016.

\bibitem[Sav18]{S18}
Neil Savage.
\newblock Always out of balance.
\newblock {\em Commun. {ACM}}, 61(4):12--14, 2018.

\bibitem[Spe28]{S28}
Emanuel Sperner.
\newblock Neuer beweis f\"{u}r die invarianz der dimensionszahl und des
  gebietes.
\newblock {\em Abh. Math. Sem. Hamburg}, VI:265--272, 1928.

\bibitem[Su99]{S99}
Francis Su.
\newblock Rental harmony: Sperner's lemma in fair division.
\newblock {\em The American Mathematical Monthly}, 106(10):930-- 942, 1999.

\bibitem[Whi34]{Whi34}
Hassler Whitney.
\newblock {Analytic extensions of differentiable functions defined in closed
  sets}.
\newblock {\em Transactions of the American Mathematical Society},
  36(1):63--89, 1934.

\bibitem[WW75]{WW75}
J.~H. Wells and L.~R. Williams.
\newblock {\em Embeddings and Extensions in Analysis}.
\newblock 1975.

\end{thebibliography}

\appendix

\section{From Nash Equilibrium to Brouwer Fixed Points}\label{sec:no}
In this section, we entertain the idea of trying to prove lower bounds similar to Theorem~\ref{thm:main} by combining the lower bound for computing a Nash equilibrium given in \cite{BR16}, with  the reduction from computing Nash equilibrium in games to finding fixed points in Brouwer functions given by Nash \cite{N51}.

For any $n$ player $m$ action game, the standard proof of Nash \cite{N51} produces a Brouwer function from $[0,1]^d$ to $[0,1]^d$ where $d=mn$. There are two critical issues with using this reduction.

First note that the input to a player in $n$ player $m$ action game is $m^n$ bits. The input to a player in the Brouwer problem is at least $2^d = 2^{mn}$ bits. In the case of $n=2$ (two-player games) this would yield an exponential blowup in the input size. Therefore, by using the lower bounds on 2 player Nash, we cannot hope to prove better lower bounds than logarithmic in the input size for the 2-player Brouwer problem. This should be compared to the polynomial lower bounds we were able to prove in this paper.

Second, one may consider starting from the $n$ agents binary action lower bound of \cite{BR16} in the two-player communication model (where each player knows the utility tensor of $n/2$ agents) and try to prove lower bounds for Brouwer. In this case, with some care, one can prove the lower bounds that we obtain for the various Brouwer problems in the $\ell_\infty$-norm. However, we elaborate below that the standard proof of Nash's theorem yields a Brouwer function with a high Lipschitz constant in the Euclidean norm (even when we consider the stronger lower bound of \cite{BR16} on computing  $(\varepsilon,\varepsilon)$-weak Nash equilibrium). 

Notice that following Nash's proof, the points of the compact convex space of the Brouwer function are in bijection with the space of mixed strategies and that the displacement on the $(i,\alpha)$ coordinate (where $(i,\alpha) \in [n] \times \{0,1\}$) of the constructed Brouwer function corresponds to the net gain of agent $i$ on unilaterally moving to action $\alpha$. This is $0$ if the point corresponds to a Nash equilibrium. We build utility tensors for each agent such that the corresponding Brouwer function has high Lipschitz constant. For any $x:=(x_1,\ldots ,x_n)\in \{0,1\}^n$, the utility of agent $i$ on playing $x_i$ and the rest playing $x_{-i}$ is 1 if $x_i=x_n=1$ and 0 otherwise. Now consider two pure strategies one in which all agents play $0$ (denoted by $x$) and the other in which all but agent $n$ plays $0$ and agent $n$ plays $1$ (denoted by $y$). The Euclidean distance between these two points (i.e., pure strategies) is $1/\sqrt{n}$. However, notice that if any of the agents unilaterally switches to the action 1, then the displacement of $y$ is 1 on the first $n-1$ coordinates. This means that the Lipschitz constant of the constructed Brouwer function is order $\sqrt{n}$. 

Summarizing, we show above that a black-box reduction from Nash problem to Brouwer problem cannot be utilized to obtain the results in this paper. It is entirely possible that the hard instances of \cite{BR16} do not have the above structure in its utility tensors, but that would likely require some non-trivial arguments.

\section{Communication Complexity Lower Bound for Finding Approximate Nash Equilibrium in Two-Player Games}\label{sec:Nash}
In this section, we provide a detailed proof outline of the following result of \cite{BR16}: The randomized communication complexity of finding an $\varepsilon$-Nash equilibrium in  two-player $N\times N$ games is $N^{\Omega(1)}$. 

Our starting point is the following variant of the End of a Line problem ($\EOL$). Let $H$ be a directed graph on vertex set $[N]$ and edge set $E$. We define $\EOL_H$ to be the problem where given as input $|E|$ bits describing a spanning subgraph $G([N],E')$  of $H$, the goal is to find a vertex $v\in [N]$ such that either:
\begin{description}
\item[Solution Type I:] $v=1$ and  in-deg($v$)$\neq$ 0 or out-deg($v$)$\neq 1$ in $G$; or,
\item[Solution Type II:] $v\neq 1$ and  in-deg($v$)$\neq$ 1 or out-deg($v$)$\neq 1$ in $G$.
\end{description}

Notice that we can always find some $v\in[N]$ such that one of the above two conditions hold. The complexity measure of the problem $\EOL_H$ that we are interested in studying is called the critical block sensitivity (\textsf{cbs}), a measure introduced in \cite{HN12} that
lower bounds randomized query complexity (among other things). We skip defining \textsf{cbs} formally here and point the reader to \cite{GR18}. Given the definition of \textsf{cbs}, it is a fairly simple exercise to show that $\EOL$ defined over the complete graph on $N$ vertices (where each vertex has both in-degree and out-degree to be $N$) has linear 
critical block sensitivity. 

\begin{proposition}
$\mathsf{cbs}(\EOL_{K_N})=\Omega(N)$. 
\end{proposition}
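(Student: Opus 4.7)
The plan is to exhibit a single explicit input $x \in \{0,1\}^{|E|}$ on which every resolution $R$ of $\EOL_{K_N}$ has block sensitivity $\Omega(N)$; since $\mathsf{cbs}(\EOL_{K_N}) = \min_R \max_x \mathsf{bs}(R,x)$, this is enough. I take $x$ to be the characteristic vector of the Hamiltonian directed path $1 = v_1 \to v_2 \to \cdots \to v_N$ (with no other edges present). In this graph, vertex $1$ has in-degree $0$ and out-degree $1$, every intermediate $v_j$ has in-degree and out-degree $1$, while $v_N$ has out-degree $0$. So $v_N$ is the unique $\EOL$-solution, and any resolution must satisfy $R(x) = v_N$.

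The core idea is to design disjoint blocks that each ``short-circuit'' the path so that the unique solution moves elsewhere, rather than merely adding a second solution (which $R$ might still resolve to $v_N$). For each $i \in \{2, \ldots, N-2\}$, let $B_i$ be the block that flips exactly two coordinates of $x$: it deletes the edge $v_i \to v_{i+1}$ and inserts the edge $v_N \to v_{i+1}$. The pairs of edges touched by different $B_i$ are disjoint, so $B_2, \ldots, B_{N-2}$ are pairwise disjoint blocks.

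The verification step is to check that on $x \oplus B_i$ the vertex $v_i$ is the unique $\EOL$-solution. Indeed, $v_i$ now has in-degree $1$ (from $v_{i-1}$) and out-degree $0$; $v_{i+1}$ keeps in-degree $1$ (now supplied by $v_N$) and out-degree $1$; $v_N$ now has in-degree $1$ (from $v_{N-1}$) and out-degree $1$ (to $v_{i+1}$); vertex $1$ still has in-degree $0$ and out-degree $1$; every other $v_j$ is untouched and has in-degree and out-degree $1$. Hence any resolution is forced to output $R(x \oplus B_i) = v_i \neq v_N = R(x)$, and the blocks $B_2, \ldots, B_{N-2}$ all witness sensitivity at $x$, giving $\mathsf{bs}(R,x) \geq N - 3 = \Omega(N)$.

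The only step that requires thought is the choice of the ``compensating'' edge $v_N \to v_{i+1}$ inside $B_i$: without it, $v_N$ would remain a solution on the perturbed input and $R$ could stubbornly keep outputting $v_N$, killing the sensitivity argument; with it, the perturbed instance has a unique solution $v_i$, so the value of $R$ is forced to change. Beyond this combinatorial choice, the argument is pure degree bookkeeping, and the bound $\mathsf{cbs}(\EOL_{K_N}) = \Omega(N)$ follows immediately since $R$ was arbitrary.
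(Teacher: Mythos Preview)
Your proof is correct. The paper does not actually prove this proposition: it merely says that, given the definition of $\mathsf{cbs}$, the linear lower bound is ``a fairly simple exercise'' and refers the reader to \cite{GR18} for details. Your argument carries out precisely that exercise in the expected way---taking a Hamiltonian path as the critical base input and using $\Theta(N)$ disjoint two-edge rerouting blocks, each of which produces a new critical instance with a different unique solution. One small remark: your displayed formula $\mathsf{cbs} = \min_R \max_x \mathsf{bs}(R,x)$ suppresses the restriction that $x$ and each $x\oplus B_i$ must be \emph{critical} (have a unique solution); but your actual argument verifies this explicitly, so the proof goes through as written.
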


Next, as suggested in \cite{GR18}, 	we replace every vertex $v$  in $K_N$ by two complete binary trees $T^{\text{in}}_v$ and $T^{\text{out}}_v$ both with $N$ leaves where the edges in $T^{\text{in}}_v$ are all directed towards the root and the edges in $T^{\text{out}}_v$ are all directed away from the root. Then, for every $u,v\in[N]$ (not necessarily distinct) we have an edge from the $u^{\text{th}}$ leaf of $T^{\text{out}}_u$ to the $v^{\text{th}}$ leaf of $T^{\text{in}}_v$. Finally for every $v\in[N]$, we merge the roots of $T^{\text{in}}_v$ and $T^{\text{out}}_v$. The resulting graph (say $H$) has both in-degree and out-degree to be at most 2. Moreover, we can show that $\mathsf{cbs}(\EOL_{K_N})\le \mathsf{cbs}(\EOL_{H})$, but the vertex set of $H$ is $[4N^2-3N]$.

\begin{lemma}[Essentially \cite{GR18}]
There exists a graph $H([N],E)$ of in-degree and out-degree at most 2 for which we have
$\mathsf{cbs}(\EOL_H)=\Omega(\sqrt{N})$. 
\end{lemma}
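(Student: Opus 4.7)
The plan is to implement the tree-replacement gadget already sketched in the preceding paragraph and to check carefully that it converts the $\Omega(M)$ lower bound on $\mathsf{cbs}(\EOL_{K_M})$ into an $\Omega(\sqrt{N})$ lower bound on $\mathsf{cbs}(\EOL_H)$. Concretely, I would set $M := \Theta(\sqrt{N})$, apply the construction from the excerpt to $K_M$, and denote the resulting graph by $H$. Since each of the $M$ super-vertices contributes $4M-3$ nodes (two complete binary trees with $M$ leaves, minus one identified root), the vertex count is $4M^2-3M=\Theta(N)$, and after a harmless relabelling and, if needed, padding with isolated dummies, we view $H$ as a graph on $[N]$.

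The first easy step is the bounded-degree check. Every internal node of $T_v^{\text{in}}$ has in-degree $2$ and out-degree $1$, and symmetrically for $T_v^{\text{out}}$; each leaf of $T_v^{\text{in}}$ has in-degree at most $1$ (its unique incoming cross-edge) and out-degree $1$ into the tree; each leaf of $T_v^{\text{out}}$ is symmetric; and the super-root $r_v$ obtained by identifying the two tree roots inherits in-degree $1$ from $T_v^{\text{in}}$ and out-degree $1$ into $T_v^{\text{out}}$. Hence all vertices of $H$ have in-degree and out-degree at most $2$.

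Next I would define the input-level reduction $\varphi$ from $\EOL_{K_M}$ to $\EOL_H$. A spanning subgraph $G \subseteq K_M$, encoded by bits $x_{uv}=\mathbf{1}[(u,v)\in G]$, is sent to the spanning subgraph $\varphi(G)\subseteq H$ in which (a) cross-edge $e_{uv}$ is active iff $x_{uv}=1$, (b) an internal edge of $T_v^{\text{in}}$ is active iff some leaf in the subtree below it has an active incoming cross-edge, and (c) symmetrically, an internal edge of $T_v^{\text{out}}$ is active iff some leaf below it has an active outgoing cross-edge. A routine case analysis then shows that if $v$ has in-degree $1$ and out-degree $1$ in $G$ then $T_v^{\text{in}}$ and $T_v^{\text{out}}$ each contain a single active root-to-leaf path and the super-root $r_v$ behaves as a regular $\EOL$-node; otherwise either some internal tree node acquires degree pattern $\neq (1,1)$ or $r_v$ itself does. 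Consequently, every $\EOL_H$-solution $w$ in $\varphi(G)$ can be projected deterministically to an $\EOL_{K_M}$-solution of $G$: namely the unique $v\in [M]$ whose trees or super-root contain $w$ (with the distinguished start vertex $1\in [M]$ identified with $r_1$).

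The main step, which I expect to require most care, is to promote this many-one reduction into the inequality $\mathsf{cbs}(\EOL_H)\geq \Omega(\mathsf{cbs}(\EOL_{K_M}))=\Omega(M)=\Omega(\sqrt{N})$. This is the standard preservation of critical block sensitivity under local, solution-preserving reductions used throughout \cite{GR18}: given a critical instance $G$ of $\EOL_{K_M}$ witnessing $s=\Omega(M)$ disjoint sensitive blocks $B_1,\dots,B_s$, I would argue that $\varphi(G)$ is critical for $\EOL_H$ and carries blocks $B'_i$ that remain pairwise disjoint. Disjointness on the cross-edge coordinates is immediate because cross-edges of $H$ are in bijection with edges of $K_M$; the subtle point is that a single internal tree edge in $T_v^{\text{in}}$ may in principle be toggled by several $B_i$'s, but by attributing every such tree edge to a canonical ``responsible'' cross-edge (say, the lex-first active descendant leaf) the images $B'_i$ remain coordinate-disjoint, and the projection described above then sends every $\EOL_H$-solution of the $B'_i$-flipped input to a genuinely new $\EOL_{K_M}$-solution of the $B_i$-flipped input. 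Combining this with the $\Omega(M)$ lower bound from the preceding proposition yields the claim.
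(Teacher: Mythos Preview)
Your plan follows the paper's sketch faithfully, and the paper itself gives no more detail than the paragraph you are fleshing out, so the overall shape is right. Two points, one cosmetic and one substantive.

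\textbf{Degree check.} The merged super-root $r_v$ does not have in-degree and out-degree $1$: the root of $T_v^{\text{in}}$ receives edges from its \emph{two} children (edges in that tree point toward the root), and symmetrically the root of $T_v^{\text{out}}$ has two outgoing edges. After identification $r_v$ has in-degree $2$ and out-degree $2$. Your conclusion that all degrees are at most $2$ is still correct, but the sentence as written is wrong.

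\textbf{Criticality of $\varphi(G)$.} This is a genuine gap. Take a critical $G$ for $\EOL_{K_M}$, i.e., a Hamiltonian path $1\to v_2\to\cdots\to v_M$. Under your rules (a)--(c), each $T_v^{\text{in}}$ has exactly one leaf with an active incoming cross-edge, and only the tree edges on the path from that leaf to $r_v$ are active. Every \emph{other} leaf $\ell$ of $T_v^{\text{in}}$ then has its incoming cross-edge inactive (by (a)) and its outgoing tree edge inactive (by (b), since the only leaf below $\ell$ is $\ell$ itself). Hence $\ell$ has degree $(0,0)$ in $\varphi(G)$, which is a Type~II solution under the paper's definition of $\EOL_H$. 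The same happens at every off-path internal tree node. So $\varphi(G)$ has $\Theta(M^2)$ solutions, not one: it is not critical, and, worse, the composed solver $f_M(G):=\pi(f_H(\varphi(G)))$ need not solve $\EOL_{K_M}$ at all, because $f_H$ is free to return one of these isolated vertices, which projects to a non-solution of $G$. Your ``canonical responsible cross-edge'' trick addresses disjointness of blocks but does nothing for this issue.

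Repairing this is exactly the work the paper hides behind ``we can show''. One must design $\varphi$ so that the off-path tree vertices are absorbed into vertex-disjoint directed cycles in $\varphi(G)$ (so they all have degree $(1,1)$ and are not solutions), while keeping the map coordinate-local enough that sensitive blocks stay disjoint. With the bare host graph described here this is not immediate --- indeed, for small $M$ the graph $H$ has no Hamiltonian path at all --- so the fix typically involves either augmenting $H$ with additional back-edges inside the tree gadgets, or invoking the direct critical-block-sensitivity argument from \cite{GR18} rather than a black-box reduction from $K_M$.
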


At this point, we would like to move to the communication variant of $\EOL_H$ for a host graph $H([N],E)$. The problem is defined for a gadget function $g:\Sigma\times \Sigma\to\{0,1\}$ (for some alphabet set $\Sigma$). In $\EOL_H^g$, there are two players and each player is given $|E|\cdot |\Sigma|$ many bits as input, where we think of the input to each player as allocating $|\Sigma|$ many bits to each edge in $H$. Given $x\in \{0,1\}^{|E|\cdot |\Sigma|}$ to one player and $y\in \{0,1\}^{|E|\cdot |\Sigma|}$ to the other player as inputs, we define an underlying input graph $G$ as in $\EOL_H$ as follows: the $e^{\text{th}}$ edge of $H$ is present in $G$ if and only if $g(x\lvert_e,y\lvert_e)=1$. Their goal is to find $v\in[N]$ such that it is a solution of either type (I) or (II). By applying the simulation theorem of \cite{GP14a} on a constant sized gadget function $g$, we obtain a lower bound on the randomized communication complexity of $\EOL_H^g$.

\begin{theorem}[\cite{GP14a}]
There is a fixed alphabet set $\Sigma$ and a fixed gadget $g: \Sigma\times \Sigma\to\{0,1\}$ such that $\CC(\EOL_{H}^g)=\Omega(\mathsf{cbs}(\EOL_H))$.
\end{theorem}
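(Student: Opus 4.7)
The plan is to apply the constant-gadget simulation (lifting) theorem of G\"o\"os and Pitassi \cite{GP14a} essentially as a black box to the total search problem $\EOL_H$. That theorem fixes a constant-sized alphabet $\Sigma$ and a fixed two-party gadget $g:\Sigma\times\Sigma\to\{0,1\}$ (concretely, one may take an indexing or inner-product gadget over a constant-sized domain) with the following property: for every total search problem $F$ on $m$ Boolean variables,
$$\CC(F \circ g^m) \;=\; \Omega(\mathsf{cbs}(F)).$$
Our target statement is exactly the specialization of this to $F = \EOL_H$ with $m = |E|$.

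First, I would verify that $\EOL_H^g$ is \emph{literally} the lifted problem $\EOL_H \circ g^{|E|}$. This is immediate from the definition: each of the $|E|$ bits of the underlying spanning subgraph $G$ of $H$ is specified coordinate-wise by $g(x|_e, y|_e)$, so the two-party communication task of solving $\EOL_H^g$ on $(x,y)$ is the same as solving the search problem $\EOL_H$ on the Boolean string produced by the gadget. This reduction is syntactic and needs no extra argument.

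Next, I would invoke the simulation theorem of \cite{GP14a} to turn a supposed low-cost randomized protocol $\pi$ for $\EOL_H^g$ into a low-depth randomized decision tree for $\EOL_H$. The Göös--Pitassi argument is structured so that, on the distribution witnessing critical block sensitivity, the extracted tree is \emph{block-respecting}: the blocks that the tree must probe before outputting a solution correspond exactly to blocks of the input on which the search problem is sensitive. This is the feature that upgrades the conclusion from randomized query complexity to the (potentially much larger) critical block sensitivity bound $\mathsf{cbs}(\EOL_H)$. Since the depth of the extracted tree is $O(\CC(\EOL_H^g)/\log|\Sigma|) = O(\CC(\EOL_H^g))$ (as $|\Sigma|=O(1)$), rearranging gives the claimed $\CC(\EOL_H^g)=\Omega(\mathsf{cbs}(\EOL_H))$.

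The only genuinely non-routine step is bookkeeping around the applicability of \cite{GP14a}: one must confirm that their simulation theorem applies to \emph{total search} problems (not only to decision problems), and that the randomized variant with the critical-block-sensitivity conclusion is the one being invoked. Both points are handled in \cite{GP14a}, and the way \cite{GR18} quotes the theorem makes the instantiation to $F = \EOL_H$ immediate, so once totality of $\EOL_H$ is noted (every input has a solution of type I or II) there is nothing further to verify. I expect no additional obstacle beyond ensuring the citation matches the exact form stated.
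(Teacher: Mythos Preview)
Your proposal is correct and matches the paper's treatment: the paper does not give its own proof of this statement but simply cites it as the simulation theorem of \cite{GP14a}, applied to the search problem $\EOL_H$. Your observation that $\EOL_H^g$ is by definition the lifted problem $\EOL_H\circ g^{|E|}$, together with the black-box invocation of the constant-gadget lifting theorem (in the critical-block-sensitivity form used by \cite{GR18}), is exactly the intended justification.
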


Next, we interpret the inputs to both players in $\EOL_{H}^g$ as inputs to the Local Brouwer problem (see Definition~\ref{def:local}) by embedding the input graph $G$ (of $\EOL_H^g$ problem) into a (constant Lipschitz continuous) Brouwer function $f:[0,1]^n\to[0,1]^n$ in the Euclidean space ($n=O(\log N)$) using the embedding given in \cite{BR16}. Elaborating, the embedding of  \cite{BR16} ensures that the value of $f$ at any point in $[0,1]^n$ only depends on the information of the in-neighbors and out-neighbors of at most two vertices\footnote{Moreover, the embedding provides a function $\L$ which maps every point in $[0,1]^n$ to at most two vertices in $[N]$ (independent of the edge set of $G$).} in $G$. Since $H$ is of constant (in and out) degree, and $G$ is a subgraph of $H$, we have that the value of $f$ at any point in $[0,1]^n$ can be computed with some constant number of bits of communication. The embedding further guarantees that given any $\varepsilon$-approximate fixed point (for some small constant $\varepsilon>0$), we can recover a solution of $G$. 
This gives us a lower bound of $\Omega(\sqrt{N})=2^{\Omega(n)}$ on the randomized communication complexity of the Local Brouwer problem in $O(\log N)$ dimensions in the Euclidean metric.

\begin{theorem}[Restatement of Theorem~\ref{thm:BR17}]
There are fixed integers $r,\lambda$ and fixed constant $\varepsilon>0$  such that the randomized communication complexity of finding an $\varepsilon$-approximate fixed point of the Local Brouwer problem whose inputs are $r$-local and $\lambda$-Lipschitz in the $n$-dimensional Euclidean space is $2^{\Omega(n)}$.
\end{theorem}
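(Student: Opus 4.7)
The plan is to carry out exactly the four-step pipeline that the appendix lays out in its proof sketch, making each reduction precise. The overall arc is: critical block sensitivity lower bound for $\EOL$ on a dense host graph, then sparsification to a constant-degree host graph, then lifting via a constant-size gadget simulation theorem, and finally a local embedding into a Lipschitz Brouwer function in low dimension.

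First, I would establish $\mathsf{cbs}(\EOL_{K_N})=\Omega(N)$ directly, which follows by a standard adversary/fooling argument since on the clique there are $N$ possible answers that can each be forced by a block flip; this is just the proposition stated earlier and I would appeal to \cite{GR18} for the formal statement. Next, to obtain a host graph of bounded in- and out-degree without losing too much, I would perform the binary-tree gadget replacement described in the excerpt: replace each vertex $v$ by a merged pair of trees $T^{\mathrm{in}}_v, T^{\mathrm{out}}_v$ with $N$ leaves, and wire the $u$-th leaf of $T^{\mathrm{out}}_u$ to the $v$-th leaf of $T^{\mathrm{in}}_v$. A reduction from $\EOL_{K_N}$ to $\EOL_H$ then preserves critical block sensitivity up to constant factors, yielding a graph $H$ on $\Theta(N^2)$ vertices, degree $O(1)$, and $\mathsf{cbs}(\EOL_H)=\Omega(\sqrt{|V(H)|})$.

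Third, I would invoke the Göös--Pitassi simulation theorem \cite{GP14a} with a fixed constant-size gadget $g:\Sigma\times\Sigma\to\{0,1\}$, giving
\[
\CC(\EOL_H^g) \;=\; \Omega(\mathsf{cbs}(\EOL_H)) \;=\; \Omega(\sqrt{|V(H)|}) \;=\; 2^{\Omega(n)},
\]
where $n:=\Theta(\log|V(H)|)$. Finally, I would apply the embedding of \cite{BR16} (itself a refinement of \cite{R16}) to convert an arbitrary instance of $\EOL_H^g$ into an $n$-dimensional Brouwer function $f_{x,y}:[0,1]^n\to[0,1]^n$ with constant Lipschitz constant in $\ell_2$, having the property that any $\varepsilon$-approximate fixed point (for some absolute $\varepsilon>0$) decodes back to an $\EOL_H^g$ solution. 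The crucial feature to verify is \emph{locality}: at each $z\in[0,1]^n$, the value $f_{x,y}(z)$ depends only on the edge labels incident to $O(1)$ vertices of $H$, because $H$ has constant degree and the embedding places the "vertex gadgets" in disjoint regions. Since each edge label in turn depends on $O(1)$ bits of $(x,y)$ via the fixed gadget $g$, we get a global locality bound $r=O(1)$ on the indices of $(x,y)$ that $f_{x,y}(z)$ depends on, and a map $\L:[0,1]^n\to \mathcal{P}(N,r)$ that is independent of $(x,y)$; this is precisely the condition \eqref{eq:local} in Definition~\ref{def:local}. Composing the lower bound with this embedding gives $\CC(\Local_{2,n,\varepsilon_0,\lambda_0,r_0})=2^{\Omega(n)}$.

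The main obstacle is the last step: making sure the \cite{BR16} embedding really satisfies the locality condition with a constant $r_0$, and that its Lipschitz constant is independent of $n$. Their construction partitions $[0,1]^n$ into "picture" and "brouwer" regions associated with each vertex of $H$, and a point $z$ falls in a region determined only by $\L(z)$; the continuous interpolation between regions only needs data from neighbors in $H$, which is $O(1)$ by constant degree. Checking these two properties simultaneously, together with the decoding of $\varepsilon$-approximate fixed points back to $\EOL_H^g$ solutions (a soundness-of-decoding argument that controls the displacement field away from equilibria of the vector field), is the delicate part of the proof; I would either port this verbatim from \cite{BR16,R16} or carefully re-derive it, noting that we only need the Euclidean-norm version, which is the easier case $p=2$.
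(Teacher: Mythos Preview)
Your proposal is correct and follows essentially the same four-step pipeline as the paper's appendix: the $\mathsf{cbs}(\EOL_{K_N})=\Omega(N)$ bound, binary-tree sparsification to constant degree, the G\"o\"os--Pitassi lift with a constant-size gadget, and the \cite{BR16} embedding into a constant-Lipschitz, $O(1)$-local Brouwer function in $\Theta(\log N)$ dimensions. Your emphasis on verifying that the embedding's locality parameter and Lipschitz constant are dimension-independent is exactly the point the paper delegates to \cite{BR16,R16}.
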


Now we apply the reduction in the proof of Theorem~\ref{thm:loc2comp}, to obtain a lower bound of $2^{\Omega(n)}$ on the randomized communication complexity of the composition Brouwer problem (see Definition~\ref{def:AFP-RW}) in the $n$-dimensional Euclidean space. 

\begin{theorem}[Restatement of Theorem~\ref{thm:main}]\label{thm:brw}
There are fixed  constants $\lambda,\varepsilon>0$  such that the randomized communication complexity of finding an $\varepsilon$-approximate fixed point of the Composition Brouwer problem is $2^{\Omega(n)}$, where the input functions to each player is $\lambda$-Lipschitz and in $O(n)$-dimensional Euclidean space.
\end{theorem}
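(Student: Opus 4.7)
The plan is to prove Theorem~\ref{thm:brw} by a direct reduction from the Local Brouwer problem whose hardness was established in Theorem~\ref{thm:BR17}, using the reduction template captured by Theorem~\ref{thm:loc2comp}. Concretely, starting from the Local Brouwer lower bound with parameters $(p=2,n,\varepsilon_0,\lambda_0,r_0)$ where $r_0$ is an absolute constant, I would construct for Alice and Bob two Lipschitz functions $f_A$ and $f_B$ whose composition reproduces $f_{x,y}$, so that any $\varepsilon_0$-approximate fixed point of $f_B\circ f_A$ yields an $\varepsilon_0$-approximate fixed point of $f_{x,y}$.

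The construction is the one from the proof of Theorem~\ref{thm:loc2comp}: Alice, holding $x$, sends in parallel all $2^{r_0}$ possible answers of the local rule $f'$ at the current point $z$, by defining
\[
f_A(z)=\bigl(f'(x|_{\L(z)},\beta_1,z),\ldots,f'(x|_{\L(z)},\beta_{2^{r_0}},z),\,z\bigr)\in[0,1]^{n(2^{r_0}+1)}.
\]
Bob, holding $y$, then selects the coordinate indexed by $y|_{\L(z)}$. Since $r_0$ is a constant, the Composition Brouwer instance lives in $O(n)$ dimensions, matching the statement of Theorem~\ref{thm:brw}. The Lipschitz constant of $f_A$ is at most $\lambda_0$ by a straightforward averaging over the $2^{r_0}+1$ blocks; the selector $g_B$ is $O_{r_0}(\lambda_0)$-Lipschitz on its natural domain $\mathcal{C}$ by a triangle-inequality split that uses that the correct $\beta_i$ at $z$ also appears in Alice's output at $z'$ (up to a Lipschitz error term in $\|z-z'\|$), and then Proposition~\ref{prop:inequality} converts the block-norm bound into a norm bound on the full concatenated input.

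The subtle step, which I expect to be the main obstacle, is extending $g_B$ from $\mathcal{C}\subseteq[0,1]^{n(2^{r_0}+1)}$ to a Lipschitz map defined on the whole cube $[0,1]^{n(2^{r_0}+1)}$ so that it is a valid input to the Composition Brouwer problem. This is where the restriction $p\in\{2,\infty\}$ in Theorem~\ref{thm:main} and Theorem~\ref{thm:brw} enters: for the Euclidean norm one invokes Kirszbraun's extension theorem, and for the max norm the classical coordinate-wise McShane/Whitney extension, each preserving the Lipschitz constant up to a constant factor. (Obstructions of the kind described in \cite{N01} are why the argument is not claimed for general $\ell_p$.) Once the extension is in hand, $f_B\circ f_A$ agrees with $f_{x,y}$ on $[0,1]^n$ because $f_A$ maps into $\mathcal{C}_x\subseteq \mathcal{C}$, so any $\varepsilon_0$-fixed point of the composition transfers back to the Local Brouwer instance.

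Putting the pieces together: by Theorem~\ref{thm:BR17} any randomized protocol for the Local Brouwer instance requires $2^{\Omega(n)}$ bits; the reduction above is non-interactive (each player builds their own function from their own input), so it costs zero communication. Hence
\[
\CC(\Comp_{2,n,\varepsilon_0,\lambda_0,(2^{r_0}+1)^{1/2}(\lambda_0+1)})\;\geq\;\CC(\Local_{2,n,\varepsilon_0,\lambda_0,r_0})\;=\;2^{\Omega(n)},
\]
and since $r_0,\lambda_0,\varepsilon_0$ are absolute constants, the right-hand Lipschitz constants are absolute constants as well, proving Theorem~\ref{thm:brw}. The naive upper bound of Lemma~\ref{lem:upper} is $2^{O(n)}$, so this is tight in the exponent.
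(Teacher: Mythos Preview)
Your proposal is correct and follows essentially the same approach as the paper: the proof of Theorem~\ref{thm:brw} (equivalently Theorem~\ref{thm:main}) is obtained by combining the Local Brouwer lower bound of Theorem~\ref{thm:BR17} with the reduction of Theorem~\ref{thm:loc2comp}, whose construction of $f_A$, $g_B$, and the Kirszbraun/Whitney extension to $f_B$ you have reproduced accurately. The only minor imprecision is that the extension theorems (Lemmas~\ref{lem:extensionMax} and~\ref{lem:extensionEuclid}) preserve the Lipschitz constant exactly rather than ``up to a constant factor,'' but this does not affect the argument.
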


Finally, we  interpret the inputs to both of the players in the Composition Brouwer problem as inputs to the Nash equilibrium problem  by using the imitation gadget given in \cite{RW16}. Elaborating, given input $f_A:[0,1]^n\to[0,1]^{m=O(n)}$ to player $A$ (resp.\ $f_B:[0,1]^m\to[0,1]^{n}$ to player $B$), we first discretize the space $[0,1]^n$ (resp.\ $[0,1]^m$) using the discretization parameter $\alpha$, where $\alpha$ is smaller than $c\varepsilon/\lambda^2$, for some large constant $c$ and $\lambda:=\lambda_A\cdot \lambda_B$. Then the action space of player $A$ (resp.\ player $B$) is the discretized subset $[0,1]^n_\alpha$ (resp.\ $[0,1]^m_\alpha$). Player $A$  (resp.\ player $B$) then builds utility function $u_A$ (resp.\ $u_B$) over their action space $[0,1]^n_\alpha$ (resp.\ $[0,1]^m_\alpha$) as follows: $u_A(x,y)=-\|f_A(x)-y\|_2^2$ (resp.\ $u_B(x,y)=-\|x-f_B(y)\|_2^2$).
It can then be shown  that given any $O(\varepsilon^{4})$-approximate Nash equilibrium in the above two-player $N'\times N'$ game (where $N'=2^{O(n)}$), we can recover an $\varepsilon$-approximate fixed point of $f_B\circ f_A$. This gives us the lower bound of \cite{BR16}.

\begin{theorem}[Combining Theorem~\ref{thm:brw} with reduction from \cite{RW16}]
There is a constant $\varepsilon>0$  such that the randomized communication complexity of finding an $\varepsilon$-approximate Nash equilibrium in $N\times N$ two-player games is $N^{\Omega(1)}$.
\end{theorem}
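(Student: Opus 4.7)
The plan is to reduce the hard Composition Brouwer instance furnished by Theorem~\ref{thm:brw} to the two-player Nash equilibrium problem via the imitation gadget of Roughgarden and Weinstein~\cite{RW16}, converting the $2^{\Omega(n)}$ communication lower bound into an $N^{\Omega(1)}$ lower bound after identifying $N = 2^{O(n)}$.

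First, I would invoke Theorem~\ref{thm:brw} to fix constants $\lambda_A, \lambda_B \geq 1$ and $\varepsilon_0 > 0$ witnessing $\CC(\Comp_{2,n,\varepsilon_0,\lambda_A,\lambda_B}) = 2^{\Omega(n)}$ in the Euclidean norm; here player $A$ holds $f_A:[0,1]^n \to [0,1]^m$ and player $B$ holds $f_B:[0,1]^m \to [0,1]^n$ with $m = O(n)$, and $\lambda := \lambda_A \lambda_B = O(1)$. Then set the discretization parameter $\alpha := c \varepsilon_0 / \lambda^2$ for a sufficiently small constant $c > 0$, and let $[0,1]^n_\alpha$ and $[0,1]^m_\alpha$ denote the axis-aligned $\alpha$-grids, each of cardinality $N := (1/\alpha)^{O(n)} = 2^{O(n)}$. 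I then build the $N \times N$ game in which player $A$'s action set is $[0,1]^n_\alpha$ and player $B$'s is $[0,1]^m_\alpha$, with payoffs $u_A(x,y) := -\|f_A(x) - y\|_2^2$ and $u_B(x,y) := -\|x - f_B(y)\|_2^2$. Each utility function depends only on the player's own private input ($f_A$ or $f_B$), so the construction respects the communication partition.

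The correctness step is the core of the argument. I would show that from any $\delta$-approximate mixed Nash equilibrium $(p,q)$ with $\delta := \Theta(\varepsilon_0^4)$, one can extract an $\varepsilon_0$-approximate fixed point of $f_B \circ f_A$. The quadratic form of $u_A$ makes player $A$'s best response to $q$ the single point $\tilde{x}$ with $f_A(\tilde{x}) \approx \mathbb{E}_{y \sim q}[y]$; similarly player $B$'s best response forces $f_B(\tilde{y}) \approx \mathbb{E}_{x \sim p}[x]$ for some $\tilde{y}$. A standard second-moment argument (exploiting that $\mathrm{Var}_{x \sim p}[f_A(x)]$ must be $O(\delta)$ in the $\ell_2$-norm in a $\delta$-equilibrium, by the Lipschitz-and-quadratic structure) then lets me replace the distributions by concentrated near-point masses at $\hat{x} = \mathbb{E}_{x \sim p}[x]$ and $\hat{y} = \mathbb{E}_{y \sim q}[y]$; combining the near-identities $f_A(\hat{x}) \approx \hat{y}$ and $f_B(\hat{y}) \approx \hat{x}$ with the $\lambda_B$-Lipschitz continuity of $f_B$ and the discretization error $O(\alpha\lambda)$ yields $\|f_B(f_A(\hat{x})) - \hat{x}\|_2 \leq \varepsilon_0$ provided the constant $c$ in the definition of $\alpha$ is small enough.

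The main obstacle I expect is precisely the mixed-strategy bookkeeping: tracking how a small additive slack $\delta$ in expected utility propagates through Jensen-style inequalities and Lipschitz continuity to a pointwise fixed-point guarantee, while ensuring the discretization loss $\alpha$ does not destroy the bound. Once this is in place, the lower bound follows directly: any randomized communication protocol for $\varepsilon$-Nash in the constructed $N \times N$ game, with $\varepsilon := \delta = \Theta(\varepsilon_0^4)$ a fixed positive constant, would solve the hard instance of $\Comp_{2,n,\varepsilon_0,\lambda_A,\lambda_B}$ with no communication overhead. Therefore the randomized communication complexity of finding an $\varepsilon$-approximate Nash equilibrium is at least $2^{\Omega(n)} = N^{\Omega(1)}$, which is the claimed bound.
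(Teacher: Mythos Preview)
Your proposal is correct and follows essentially the same route as the paper: invoke the Composition Brouwer lower bound, discretize with $\alpha = \Theta(\varepsilon_0/\lambda^2)$, build the two-player game via the imitation payoffs $u_A(x,y)=-\|f_A(x)-y\|_2^2$ and $u_B(x,y)=-\|x-f_B(y)\|_2^2$, and cite the \cite{RW16} analysis that any $O(\varepsilon_0^4)$-approximate Nash equilibrium yields an $\varepsilon_0$-approximate fixed point of $f_B\circ f_A$. The paper treats the last step as a black-box reference to \cite{RW16}, whereas you sketch the variance/concentration argument explicitly, but the overall strategy is identical.
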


\section{Total Regime}\label{sec:total}

In this section, we discuss for what range of parameters $\varepsilon,\lambda,$ and discretization parameter $\alpha$, are we in the total regime, i.e., we can guarantee an $\varepsilon$-approximate fixed point for a discretized Brouwer problem. This was explored for the $\ell_\infty$ norm by \cite{RW16}, and in this section we explore this question for every $\ell_p$ norm. To begin with, we need the following extension theorems.

For the max norm, Roughgarden and Weinstein \cite{RW16} provided a straightforward generalization of Whitney's extension theorem to higher dimensions as follows.

\begin{lemma}[\cite{Whi34,RW16}]\label{lem:extensionMax}
Let $\lambda\ge 0$. Let $n\in\mathbb{N}$ and $S\subseteq [0,1]^n$. For every $\lambda$-Lipschitz function $f:S\to [0,1]^n$ in the $\ell_\infty$ normed space, there exists a $\lambda$-Lipschitz function $\tilde f:[0,1]^n\to [0,1]^n$ in the $\ell_\infty$ normed space, such that for all $x\in S$ we have $f(x)=\tilde{f}(x)$.
\end{lemma}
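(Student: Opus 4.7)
The plan is to reduce to the classical scalar-valued McShane--Whitney extension theorem by exploiting the coordinate-wise structure of the max norm. Write $f = (f_1, \ldots, f_n)$ where $f_i : S \to [0,1]$ denotes the $i$-th coordinate of $f$. The $\lambda$-Lipschitz hypothesis on $f$ in the $\ell_\infty$ norm gives
$$|f_i(x) - f_i(y)| \;\leq\; \|f(x) - f(y)\|_\infty \;\leq\; \lambda \|x - y\|_\infty$$
for all $x, y \in S$ and every $i \in [n]$, so each $f_i$ is a $\lambda$-Lipschitz real-valued function on $(S, \|\cdot\|_\infty)$.

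Next I would apply McShane's formula coordinate by coordinate, defining $g_i : [0,1]^n \to \mathbb{R}$ by
$$g_i(x) \;:=\; \inf_{y \in S} \bigl[\, f_i(y) + \lambda \|x - y\|_\infty \,\bigr].$$
Two standard verifications then need to be done. First, $g_i$ agrees with $f_i$ on $S$: for $x \in S$, choosing $y = x$ shows $g_i(x) \leq f_i(x)$, while for any $y \in S$ the scalar Lipschitz bound on $f_i$ gives $f_i(y) + \lambda \|x-y\|_\infty \geq f_i(x)$, so the infimum equals $f_i(x)$. Second, $g_i$ is $\lambda$-Lipschitz on $[0,1]^n$: given any $\eta > 0$ and a near-optimal $y \in S$ witnessing $g_i(x')$ up to $\eta$, the triangle inequality for $\|\cdot\|_\infty$ yields $g_i(x) \leq f_i(y) + \lambda \|x-y\|_\infty \leq g_i(x') + \eta + \lambda \|x-x'\|_\infty$, and symmetry together with $\eta \to 0$ closes the argument.

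Since $g_i$ need not lie in $[0,1]$, I would clip: set $\tilde{f}_i(x) := \max\bigl(0, \min(1, g_i(x))\bigr)$. Because clipping is a $1$-Lipschitz scalar operation, each $\tilde{f}_i$ remains $\lambda$-Lipschitz, and because $f_i(S) \subseteq [0,1]$ the clipping does not alter values on $S$, so $\tilde{f}_i \equiv f_i$ there. Assembling $\tilde{f}(x) := (\tilde{f}_1(x), \ldots, \tilde{f}_n(x)) \in [0,1]^n$, we obtain
$$\|\tilde{f}(x) - \tilde{f}(x')\|_\infty \;=\; \max_i |\tilde{f}_i(x) - \tilde{f}_i(x')| \;\leq\; \lambda \|x - x'\|_\infty,$$
so $\tilde{f}$ is the desired $\lambda$-Lipschitz extension. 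The only nontrivial observation is that the $\ell_\infty$ norm decouples into coordinates both for the Lipschitz condition and for the reassembly of the vector-valued extension; this is exactly what allows a scalar theorem to be lifted to vector codomains without any loss in the constant, and it is precisely the structural feature that need not hold for general $\ell_p$ norms (the obstruction alluded to in Section~\ref{sec:total}). I do not anticipate any hidden obstacle here.
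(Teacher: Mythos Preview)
Your proof is correct. The paper does not actually supply a proof of this lemma; it is stated with citations to Whitney and to Roughgarden--Weinstein and then used as a black box. Your argument---decomposing into coordinates, applying the scalar McShane--Whitney extension to each coordinate, clipping to $[0,1]$, and reassembling via the identity $\|\cdot\|_\infty = \max_i |\cdot|$---is exactly the standard route to this result and is presumably what the cited references do, so there is nothing further to compare.
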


Next, we adapt Kirszbraun's extension theorem \cite{K34} for the Euclidean norm.

\begin{lemma}\label{lem:extensionEuclid}
Let $\lambda\ge 0$. Let $n\in\mathbb{N}$ and $S\subseteq [0,1]^n$. For every $\lambda$-Lipschitz function $f:S\to [0,1]^n$ in the Euclidean space, there exists a $\lambda$-Lipschitz function $\tilde f:[0,1]^n\to [0,1]^n$ in the Euclidean space, such that for all $x\in S$ we have $f(x)=\tilde{f}(x)$.
\end{lemma}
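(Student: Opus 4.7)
The plan is to invoke the classical Kirszbraun extension theorem and then compose with a nearest-point projection in order to force the codomain back into $[0,1]^n$. More precisely, I will proceed in three steps.

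First, I will apply the classical Kirszbraun theorem in Euclidean space: any $\lambda$-Lipschitz map $f: S \to \mathbb{R}^n$ (with $S \subseteq \mathbb{R}^n$) extends to a $\lambda$-Lipschitz map $\hat{f} : [0,1]^n \to \mathbb{R}^n$. Here we simply view $f$ as taking values in $\mathbb{R}^n \supseteq [0,1]^n$, so the hypothesis is satisfied and Kirszbraun gives us an extension $\hat{f}$ with the same Lipschitz constant, but a priori $\hat{f}$ may take values outside the cube $[0,1]^n$.

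Second, I will compose with the nearest-point projection $\pi : \mathbb{R}^n \to [0,1]^n$, defined by $\pi(y) = \arg\min_{z \in [0,1]^n} \|z - y\|_2$. Since $[0,1]^n$ is a closed convex subset of Euclidean space, $\pi$ is well-defined and is a $1$-Lipschitz map in the Euclidean norm (a standard fact in convex analysis, following from the characterization of the projection via the obtuse-angle condition). Define $\tilde{f} := \pi \circ \hat{f}$. Then $\tilde{f} : [0,1]^n \to [0,1]^n$, and as a composition of a $\lambda$-Lipschitz and a $1$-Lipschitz map it is $\lambda$-Lipschitz.

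Third, I need to verify that $\tilde{f}$ still agrees with $f$ on $S$. For every $x \in S$, the value $\hat{f}(x) = f(x)$ already lies in $[0,1]^n$ by hypothesis, and $\pi$ fixes every point of $[0,1]^n$, so $\tilde{f}(x) = \pi(f(x)) = f(x)$, as required.

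The main subtlety is the use of Kirszbraun itself, which is specific to the Hilbert (Euclidean) setting and does not hold in general $\ell_p$ norms; this is precisely the reason the paper cannot make the analogous claim for arbitrary $p$, and matches the remark in Section~\ref{sec:total} about barriers to extension theorems in other norms~\cite{N01}. The projection step is unproblematic: convexity of $[0,1]^n$ together with the Hilbert structure of $\ell_2$ makes $\pi$ automatically $1$-Lipschitz, so no additional work is needed to control the blowup in the Lipschitz constant when cutting the extension back to the cube.
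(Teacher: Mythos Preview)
Your proposal is correct and is essentially the same as the paper's proof: both apply Kirszbraun to get a $\lambda$-Lipschitz extension into $\mathbb{R}^n$ and then post-compose with the nearest-point projection onto $[0,1]^n$. The only cosmetic difference is that the paper writes this projection explicitly as coordinate-wise clipping (which, for the box $[0,1]^n$, is exactly the Euclidean nearest-point map), whereas you invoke the general fact that the metric projection onto a closed convex set is $1$-Lipschitz.
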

\begin{proof}
Let $g:\mathbb{R}^n\to \mathbb{R}^n$ be a $\lambda$-Lipschitz function which is also an extension of $f$ guaranteed by  the Kirszbraun's extension theorem. Define $\tilde f:[0,1]^n\to[0,1]^n$ as follows: 
\[\forall x\in[0,1]^n\text{ and }i\in[n],\ \tilde{f}(x)_i=\begin{cases}
g(x)_i\text{ if }g(x)_i\in[0,1],\\
0\text{ if }g(x)_i<0,\\
1\text{ if }g(x)_i>1.
\end{cases}\]
It is easy to see that $\tilde{f}$ is also $\lambda$-Lipschitz.
\end{proof}

Now, we use the aforementioned extension theorems, to determine the total regime in the Euclidean and max norms.

\begin{theorem}\label{thm:total}
Let $p\in\{2,\infty\}$. Let $\varepsilon,\lambda\ge 0$, and $\alpha\in(0,1]$. Let $n\in\mathbb{N}$ and $f:[0,1]_\alpha^n\to [0,1]_\alpha^n$ be a $\lambda$-Lipschitz function in the $\ell_p$ normed space. If $(\lambda+1)\cdot \alpha\le 2\varepsilon$ then, $f$ has an $\varepsilon$-approximate fixed point.
\end{theorem}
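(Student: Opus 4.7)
The plan is to reduce the existence of an approximate fixed point on the discrete grid to the existence of an exact fixed point of a continuous extension, and then argue that rounding the exact fixed point to the grid only incurs a small error.

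First, I would invoke the appropriate extension theorem for the relevant norm. For $p = \infty$ I use Lemma~\ref{lem:extensionMax}, and for $p = 2$ I use Lemma~\ref{lem:extensionEuclid}. In either case, since $f:[0,1]_\alpha^n \to [0,1]_\alpha^n \subseteq [0,1]^n$ is $\lambda$-Lipschitz in the $\ell_p$-norm on its (finite) discrete domain, there is a $\lambda$-Lipschitz extension $\tilde{f}:[0,1]^n \to [0,1]^n$ such that $\tilde{f}(x) = f(x)$ for every $x \in [0,1]_\alpha^n$.

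Next, since $[0,1]^n$ is compact and convex and $\tilde{f}$ is continuous (being Lipschitz), Brouwer's fixed point theorem supplies a point $x^* \in [0,1]^n$ with $\tilde{f}(x^*) = x^*$. Round $x^*$ coordinate-wise to the nearest element $y$ of the grid $[0,1]_\alpha^n$. Each coordinate satisfies $|x^*_i - y_i| \le \alpha/2$, so $\|x^* - y\|_\infty \le \alpha/2$, and by the inequality $\|z\|_p \le \|z\|_\infty$ (noted right after the definition of the normalized $p$-norm), also $\|x^* - y\|_p \le \alpha/2$.

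Finally, since $y \in [0,1]_\alpha^n$ we have $f(y) = \tilde{f}(y)$, and the triangle inequality combined with the $\lambda$-Lipschitz property of $\tilde{f}$ gives
\begin{align*}
\|f(y) - y\|_p
&= \|\tilde{f}(y) - \tilde{f}(x^*) + x^* - y\|_p \\
&\le \|\tilde{f}(y) - \tilde{f}(x^*)\|_p + \|x^* - y\|_p \\
&\le \lambda \|y - x^*\|_p + \|x^* - y\|_p \\
&\le (\lambda+1) \cdot \tfrac{\alpha}{2} \le \varepsilon,
\end{align*}
using the hypothesis $(\lambda+1)\alpha \le 2\varepsilon$. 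Thus $y$ is an $\varepsilon$-approximate fixed point of $f$.

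The only step that is not entirely routine is the invocation of the extension theorems, which is precisely why the statement is restricted to $p \in \{2, \infty\}$: for other norms an analogous Lipschitz extension theorem with target $[0,1]^n$ is not available (as alluded to earlier in the paper by the reference to \cite{N01}), so this is where the argument would break down outside these two norms. Everything else is just the triangle inequality and the standard monotonicity $\|\cdot\|_p \le \|\cdot\|_\infty$.
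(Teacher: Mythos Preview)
Your proposal is correct and follows essentially the same argument as the paper: extend $f$ via the appropriate Lipschitz extension lemma, apply Brouwer to get an exact fixed point $x^*$, round to the nearest grid point $y$, and bound $\|f(y)-y\|_p$ by $(\lambda+1)\alpha/2$ via the triangle inequality. Your justification of $\|x^*-y\|_p\le \alpha/2$ via coordinate-wise rounding and the inequality $\|\cdot\|_p\le\|\cdot\|_\infty$ is in fact slightly more explicit than the paper's.
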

\begin{proof}
Let $\tilde{f}:[0,1]^n\to [0,1]^n$ be the extension of $f$ guaranteed by Lemma~\ref{lem:extensionEuclid} if $p=2$ or by Lemma~\ref{lem:extensionMax} if $p=\infty$. By Brouwer's fixed point theorem, there exists $x\in[0,1]^n$ such that $\tilde f(x)=x$.
Let $y$ be an element in $[0,1]^n_\alpha$ that minimizes $\|x-y\|_p$ 
(if there are more than one, pick arbitrarily). 
Note that $\|x-y\|_p \leq \nicefrac{\alpha}{2}$. 
Since $\tilde{f}$ is $\lambda$-Lipschitz,
\begin{align*}
\|f(y) - y\|_p &\leq \|f(y) - \tilde f(x)\|_p + \|\tilde f(x) - y\|_p \\
&= \|\tilde f(y) - \tilde f(x)\|_p + \|x - y\|_p \\
&\leq \left(\lambda + 1\right) \cdot \|x - y\|_p \\
&\leq \left(\lambda + 1\right) \cdot \frac{\alpha}{2}\\
&\leq \varepsilon.\qedhere
\end{align*}
\end{proof}

We conclude this section  with a short discussion on determining the total regime for other $\ell_p$ norms. Fix a finite $p\in\mathbb{R}_{\ge 1}\setminus\{2\} $. If there was an extension theorem for the $\ell_p$ norm (similar to Lemmas~\ref{lem:extensionMax}~and~\ref{lem:extensionEuclid}) then the same proof of Theorem~\ref{thm:total} would give us conditions for the total regime in the $\ell_p$ norm. However it is well known \cite{WW75} that such an extension theorem cannot exist for every finite sized domain in the $\ell_p$ norm\footnote{In fact, strengthenings of this result are known; Naor \cite{N01} showed that even a non-isometric extension theorem cannot exist that holds for every finite sized domain in the $\ell_p$ norm when $p>1$, and Makarychev and Makarychev \cite{MM16} showed the same for $p=1$.}. Thus, if such an extension theorem existed for functions over the $[0,1]_\alpha^n$ domain then, they need to make use of the structure of the point-set $[0,1]_\alpha^n$. Thus, we leave open the following question.


\begin{open}
For any finite $p\in \mathbb{R}_{\ge 1}\setminus\{2\}$, is there any non-trivial setting of parameters $\varepsilon,\lambda,$ and $\alpha$ for which we can guarantee the existence of an $\varepsilon$-approximate fixed point in any of the Brouwer problems discussed in this paper?
\end{open}

\section{Connection to Monotone Karchmer-Wigderson Games}
\label{sec:KW}

	A natural question is whether the upper bound in Theorem~\ref{thm:14} can be improved, i.e., can we show  $\CC(\Sp^{d-1}_{d,n})=O(\log n)$ in the randomized communication complexity model.
	For example, for Karchmer-Wigderson (KW) games it was shown in \cite{JST11} (see also \cite{M17}) that any problem can be solved with $O(\log n)$ bits of communication.
	Our problem, however, would be equivalent to a \emph{monotone} KW game.
	In a monotone KW game, we are given a monotone Boolean function $g$ on $n$ variables, known to both players, and we have that for input $x\in\{0,1\}^n$ to player $A$ and  input $y\in\{0,1\}^n$ to player $B$,  $g(x)=1$ and $g(y)=0$ holds, respectively.
	Their goal is to find an $i\in[n]$ such that $x_i=1$ and $y_i=0$. 
	
	In our case, the variables $x_i$ can be the simplices of the triangulation, i.e., every $i$ corresponds to a simplex $i$.
	We define $g(x)=1$ if the set of simplices given by $\{i\mid x_i=1\}$, is a chain of simplices  from the facet $F_0$ to $F_d$, as described in the Surplus Sperner Lemma 
	(i.e., $g(x)=1$ if the vertices $f_0$ and $f_d$ are connected in the graph $G$ by a path whose vertices are all indexed by some $\{i\mid x_i=1\}$).
	
	Let $x_i=1$ if simplex $i$ has all colors from $1$ to $d-1$ (the colors known to player $A$) and its remaining two vertices are colored $0$ or $d$.
	This way $g(x)=1$ is exactly the statement of the Surplus Sperner Lemma.	
	
	Let $y_i=1$ if simplex $i$ has $d-1$ vertices colored  from $1$ to $d-1$ (the colors unknown to player $B$) and its remaining two vertices have the same color, i.e., both have color $0$ or $d$.
	This way $g(y)=0$ follows from the boundary conditions of the Sperner-coloring; if we had a path in $G$ from $f_0$ to $f_d$ formed by simplices from $\{i\mid y_i=1\}$, then since the first simplex would have need to have twice color $0$, the last simplex twice color $d$, and every chain of the simplex has only two vertices colored $0$ or $d$, in between there has to be a simplex with one of each color $0$ and $d$, which implies $y_i=0$.
	
	We have defined $x$ and $y$ such that a simplex is panchromatic if and only if $x_i=1$ and $y_i=0$.
	This means that the problem of finding a panchromatic simplex is exactly as hard as solving the monotone KW problem.
	In fact, it can be shown that our problem is equivalent to the randomized monotone circuit complexity of undirected $(s,t)$-connectivity, whose complexity is not known in literature.

\section{Connection to Hex Game}\label{sec:hex}
 
Finally, we would like to mention one more interesting connection and close this section with a small direction for future research.
Consider the following higher dimensional variant of the well-known Hex game, that is played on $\Delta^a \times \Delta^b$.

\begin{definition}[Hex($a,b$)]
	Two players claim the vertices of a triangulation of $\Delta^a \times \Delta^b$.
	Player A wins if her vertices span a $b$-manifold $M_A$ such that for every $q$ that is on the boundary of $\Delta^b$, there is a unique $p\in \Delta^a$ such that $(p,q)\in M_A$, and Player B wins if his vertices span an $a$-manifold $M_B$ such that for every $p$ that is on the boundary of $\Delta^a$, there is a unique $q\in \Delta^b$ such that $(p,q)\in M_B$.
\end{definition}

\begin{figure}
	\begin{center}
		\includegraphics[width=\linewidth]{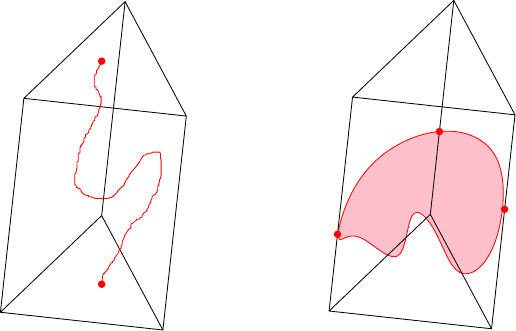}
		\caption{Hex($2,1$) game with A win to the left and B win to the right.}
		\label{fig:hex21}
	\end{center}
\end{figure}

Notice that $a=b=1$ is essentially just the usual Hex game \cite{N52,G79}, while $a=2, b=1$ is a game played on the triangulation of a triangular prism, where player $A$ is attempting to connect the top and the bottom triangular facets, while player $B$ is trying to make a surface whose boundary wraps around the quadrangular facets of the prism; see Figure \ref{fig:hex21}.
An existence theorem, similar to the usual argument for the Hex game and our Surplus Sperner Lemma\footnote{For a continuous analogue, see Exercise 3 on page 116 from \cite{M07} that says that if $f : S^k \to S^n$ and $g : S^\ell \to S^n$ are antipodal maps, then their images intersect.}, guarantees that exactly one player can win.
In fact, if we define $f_A$ and $f_B$ before the proof of Theorem \ref{thm:concSp2compBr} as $b$- and $a$-manifolds instead of functions, then they would be just the required $M_A$ and $M_B$. We remark that determining whether a position in a game of generalized Hex played on arbitrary graphs is a winning position is \textsf{PSPACE}-complete \cite{ET76}.

We can also define a monotone Boolean function, \textsf{HEX}, similarly as we did in the previous section.
The variables $x_i$ of \textsf{HEX} are indexed by the vertices of a triangulation of $\Delta^a \times \Delta^b$.
We define \textsf{HEX}$(x)=1$ if there is $b$-manifold $M_A$ from the vertices $\{i\mid x_i=1\}$ that is a win for player $A$ in the \textsf{HEX}($a,b$) game.
One can prove the monotone Karchmer-Wigderson complexity of \textsf{HEX} is the same as the complexity of the Concatenation Sperner problem, just as it was sketched in the previous section.
Therefore, Theorem \ref{thm:concSp2compBr} also implies that there is a monotone KW game whose randomized complexity is $n^{\Omega(1)}$, as opposed to non-monotone KW games, whose randomized complexity is always $O(\log n)$ \cite{JST11} (see also \cite{M17}).

This leads us to the following discussion. If it were possible to prove lower bounds to the above \textsf{HEX} game, directly in the communication model without relying on lifting/simulation theorems, then, we could reverse the direction of the above reductions, and obtain a lower bound for the concatenated Sperner problem (and consequently the problem of computing Nash equilibrium) without relying on lifting techniques. We leave this is an open direction of research.

\end{document}